\documentclass[10pt,english]{article}
\usepackage[T1]{fontenc}
\usepackage[utf8]{inputenc}
\usepackage{geometry}
\geometry{verbose,tmargin=3cm,bmargin=3cm,lmargin=3cm,rmargin=3cm}
\usepackage{float}
\usepackage{amsmath}
\usepackage{amsthm}
\usepackage{amssymb}
\usepackage{graphicx}
\usepackage{setspace}
\usepackage{esint}
\usepackage[colorlinks=true]{hyperref}
\hypersetup{urlcolor=blue, citecolor=red}

\doublespacing

\makeatletter

\numberwithin{equation}{section}
\numberwithin{figure}{section}
\theoremstyle{plain}
\newtheorem{thm}{\protect\theoremname}[section]
  \theoremstyle{plain}
  \newtheorem{lem}[thm]{\protect\lemmaname}
  \theoremstyle{remark}
  \newtheorem{rem}[thm]{\protect\remarkname}

\@ifundefined{date}{}{\date{}}
\usepackage{pgf,tikz}
\usepackage{mathrsfs}
\usetikzlibrary{arrows}
\usepackage{authblk}

\makeatother

\usepackage{babel}
  \providecommand{\lemmaname}{Lemma}
  \providecommand{\remarkname}{Remark}
\providecommand{\theoremname}{Theorem}

\begin{document}
\title{The rigorous derivation of the Linear Landau equation from a particle system in a weak-coupling limit} 
\author[*]{N. Catapano}
\affil[*]{Dipartimento di Matematica, Università di Roma “La Sapienza”, P.le A. Moro, 5, 00185 Roma, Italy. \authorcr E-mail: catapano@mat.uniroma1.it} 
\maketitle
\begin{abstract}
We consider a system of N particles interacting via a short-range
smooth potential, in a weak-coupling regime. This means that the number
of particles $N$ goes to infinity and the range of the potential
$\epsilon$ goes to zero in such a way that $N\epsilon^{2}=\alpha$,
with $\alpha$ diverging in a suitable way. We provide a rigorous
derivation of the Linear Landau equation from this particle system.
The strategy of the proof consists in showing the asymptotic equivalence
between the one-particle marginal and the solution of the linear Boltzmann
equation with vanishing mean free path. This point follows \cite{Bodineau2015}
and makes use of technicalities developed in \cite{Pulvirenti2014}.
Then, following the ideas of Landau, we prove the asympotic equivalence
between the solutions of the Boltzmann and Landau linear equation
in the grazing collision limit.
\end{abstract}
\tableofcontents{}\pagebreak{}

\section{Introduction}

\subsection{The Boltzmann-Grad limit}

In kinetic theory a gas is described by a system of small indistinguishable
interacting particles. The evolution of this system is quite complicated
since the order of particles involved is quite large. For this reason
it is interesting to consider the system from a statistical point
of view. The starting point is a system of $N$ particles having unitary
mass and moving in a domain $D\subseteq\mathbb{R}^{3}$. These particles
can interact by means of a short-range radial potential $\Phi$. The
microscopic state of the system is given by the position and velocity
variables denoted by $\boldsymbol{q}_{N}=\left(q_{1},q_{2},...,q_{N}\right)$
and $\boldsymbol{v}_{N}=\left(v_{1},v_{2},...,v_{N}\right)$, where
$q_{i},v_{i}$ are respectively position and velocity of the i-th
particle. The time is denoted by $\tau$. Throughout the paper we
will use bold letters for vectors of variables.

Let $\epsilon>0$ be a parameter denoting the ratio between typical
macroscopic and microscopic scales, say the inverse of the number
of atomic diameters necessary to fill a centimeter. If we want a macroscopid
description of the system it is natural to introduce macroscopic variables
defined by
\begin{equation}
\boldsymbol{x}_{N}=\boldsymbol{q}_{N}\epsilon\,\,\,\,t=\tau\epsilon\label{eq:scalingmic}
\end{equation}
where $\boldsymbol{x}_{N}=\left(x_{1},x_{2},...,x_{N}\right)$ are
the macroscopic position and $t$ is the macroscopic time variable.
Notice that the velocities are unscaled. From the Liouville equation
for the particle dynamic it is possible to derive a hierarchy of equations
for the j-particles marginal probability density function, with $j\leq N$.
In the case of hard spheres we found the following BBGKY hierarchy
\[
\left(\partial_{t}+\boldsymbol{v}_{j}\cdot\nabla_{\boldsymbol{x}_{j}}\right)f_{j}^{N}=(N-j)\epsilon^{2}\sum_{k=1}^{j}\intop_{\mathbb{R}^{3}}dv_{j+1}\intop_{\nu\cdot(v_{k}-v_{j+1})\geq0}
\]
\begin{equation}
\vert\nu\cdot(v_{k}-v_{j+1})\vert\left[f_{j+1}^{N}(x_{1},v_{1},...,x_{k},v_{k}^{'},...,x_{j},v_{j},x_{k}-\eta\epsilon,v_{j+1}^{'})-f_{j+1}^{N}(x_{1},v_{1},...,x_{j},v_{j},x_{k}+\eta\epsilon,v_{j+1})\right]\label{eq:bbgky}
\end{equation}
where $\nu=\frac{x_{j_{+1}}-x_{k}}{\vert x_{j+1}-x_{k}\vert}$ and
$v_{k}^{'}=v_{k}-\nu\left[\nu\cdot\left(v_{k}-v_{j+1}\right)\right]$,
$v_{j+1}^{'}=v_{j+1}+\nu\left[\nu\cdot\left(v_{k}-v_{j+1}\right)\right]$
. Equations \ref{eq:bbgky} were first formally derived by \cite{Cercignani1994},
then a rigorous analysis has been done by \cite{Uchiyama1988,Spohn2006,Simonella2014,Pulvirenti2014a}.

Scaling according to $N\rightarrow\infty$ and $\epsilon\rightarrow0$,
in such a way that $N\epsilon^{2}\cong1$, we are in a low-density
regime suitable for the description of a rarified gas. This kind of
scaling is usually called the Boltzmann-Grad limit. The formal Boltzmann-Grad
limit in the BBGKY gives a new hierarchy of equations called the Boltzmann
hierarchy. The central idea in kinetic theory is the concept of propagation
of chaos, namely, if the initial datum factorizes, i.e. $f_{0,j}(\boldsymbol{x}_{j},\boldsymbol{v}_{j})=\prod_{i=1}^{j}f_{0,1}(x_{i},v_{i})$,
then also the solution at time $t$ factorizes:

\begin{equation}
f_{j}(\boldsymbol{x}_{j},\boldsymbol{v}_{j})=\prod_{i=1}^{j}f_{1}(x_{i},v_{i}).
\end{equation}
Actually the Boltzmann hierarchy admits factorized solutions so that
it is compatible with the propagation of chaso and under this hypothesis,
which however must be proved froma rigorous view point, the first
equation of this hierarchy is the Boltzmann equation 
\begin{equation}
\partial_{t}f+v\cdot\nabla_{x}f=\intop d\nu dv_{1}B(\nu,v-v_{1})\left[f(x,v^{'})f(x,v_{1}^{'})-f(x,v)f(x,v_{1})\right].\label{eq:Bzeq}
\end{equation}
However, as soon as $\epsilon>0$ propagation of chaos does not hold
because the evolution creates correlation between particles so that
we cannot describe the system in terms of a single equation for the
one-particle marginal and this is the reason why the Boltzmann equation
can describe in a more handable way the statistical evolution of a
gas. 

The validity of the Boltzmann equation is a fundamental problem in
kinetic theory. It consists in proving that the solution of the BBGKY
hierarchy for hard spheres converge in the Boltzmann-Grad limit to
the solution of the Boltzmann hierarchy. This means that the propagation
of chaos is recovered in the limit.

The rigorous derivation of the Boltzmann equation was first proved
by Lanford in 1975 \cite{Lanford1975} in the case of an hard spheres
system for a small time. The main idea of the Lanford work is to write
the solution of the BBGKY hierarchy for hard spheres and of the Boltzmann
hierarchy as a perturbative series of the free evolution and then
prove that the series solution of the BBGKY converge to the series
solution of the Boltzmann hierarchy.

More recently Gallagher, Saint-Raymond and Texier \cite{Gallagher}
and Pulvirenti, Saffirio and Simonella \cite{Pulvirenti2014} proved
the rigorous derivation of the Boltzmann equation, for a small time,
starting from a system of particle interacting by means of a short-range
potential providing an explicit rate of convergence. In the case of
a short-range potential the starting hierarchy is no more the BBGKY
hierarchy but the Grad hierarchy, that was developed by Grad in \cite{Grad1958}.

\subsection{The linear case}

The linear Boltzmann equation describes the evolution of a tagged
particle in a random stationary background at equlibrium and reads
as follows
\begin{equation}
\partial_{t}g^{\alpha}+v\cdot\nabla_{x}g^{\alpha}=\alpha\intop dv_{1}M_{\beta}(v_{1})\intop d\nu B(\nu,v-v_{1})\left[g^{\alpha}(x,v^{'})-g^{\alpha}(x,v)\right]
\end{equation}
where $M_{\beta}(v_{1})=\frac{1}{C_{\beta}}e^{-\beta\vert v_{1}\vert^{2}}$
and $C_{\beta}$ is chosen in such a way that $\intop dv_{1}M_{\beta}(v_{1})=1$.
The linear Boltzmann equation can be obtained from the equation (\ref{eq:Bzeq})
setting $f(x,v)=g^{\alpha}(x,v)M_{\beta}(v)$ and $f(x,v_{1})=M_{\beta}(v_{1})$,
$g^{\alpha}$ is the evolution of the perturbation in the stationary
background given by $M_{\beta}(v)$.

The derivation of the linear Boltzmann equation from an hard spheres
system has been proved for an arbitrary time by Spohn, Lebowitz \cite{J.Lebowitz1982}
and more recently quantitative estimates on the rate of convergence
have been obtained by Bodineau, Gallagher and Saint-Raymond \cite{Bodineau2015}.
A different type of linear Boltzmann equation has been derived in
the case of a Lorentz gas in Ref.s \cite{Gallavotti1972,Basile2014}.

\subsection{A different scaling}

A different scaling can be used to study a different regime from the
low density. In case of particles interacting by means of a short-range
radial potential $\Phi,$ we rescale position and time as in (\ref{eq:scalingmic})
but we set $N\epsilon^{2}\cong\epsilon^{-1}$ and $\Phi(q)=\epsilon^{-\frac{1}{2}}\Phi(\frac{x}{\epsilon})$
. This scaling is called the weak-coupling limit since the density
of the particle is diverging in the limit but this is balanced by
the interaction that becomes weaker. This weak interaction between
particles is called also a ``grazing collision'' since it changes
only slightly the velocity of a particle. The kinetic equation derived
from this scaling is the Landau equation 
\begin{equation}
\partial_{t}f+v\cdot\nabla_{x}f==\intop dv_{1}\nabla_{v}\cdot\left[\frac{A}{\vert v-v_{1}\vert}P_{(v-v_{1})}^{\perp}\left(\nabla_{v}-\nabla_{v_{1}}\right)f(v)f(v_{1})\right]
\end{equation}
where $A$ is a suitable constant and $P_{(v-v_{1})}^{\perp}$ is
the projector on the orthogonal subspace to that generated by $v-v_{1}$. 

The Landau equation was derived in a formal way by Landau in \cite{Landau1936}
starting from the Boltzmann equation in the so-called grazing collision
limit. It rules the dynamics of a dense gas with weak interaction
between particles. Recently Boblylev, Pulvirenti and Saffirio proved
in \cite{Boblylev2013} a result of consistency, but the problem of
the rigorous derivation of Landau equation is still open even for
short times. 

Also in the case of the Landau equation it is possible to consider
the evolution of a perturbation of the stationary solution. This evolution
is given by the following linear Landau equation
\[
\partial_{t}g+v\cdot\nabla_{x}g=
\]
\begin{equation}
A\intop dv_{1}M_{\beta}(v_{1})\frac{1}{\vert V\vert^{3}}\left[\vert V\vert^{2}\triangle g(v)-\left(V,D^{2}(g)V\right)-4V\cdot\nabla_{v}g(v)\right]
\end{equation}
where $D^{2}(g)$ is the hessian matrix of $g$ with respect to the
velocity variables and $A$ is a suitable constant.

Recently Desvillettes and Ricci \cite{Desvillettes2001} and Kirkpatrick
\cite{Kirkpatrick2009}proved a rigorous derivation for a type of
linear Landau equation in two dimensions starting from a Lorentz gas.
In this case the velocity of the test particles does not change and
the equation obtained is a diffusion of the velocity on the unitary
sphere.

\subsection{Main theorem}

In this paper we prove the rigorous derivation of the linear Landau
equation starting from a system of particles. These particles interact
by means of a two body short-range smooth potential and we consider
an initial datum which is a perturbation of the equlibrium. We rescale
the variables describing the particles system according to (\ref{eq:scalingmic}).
Simultanously we set $N\epsilon^{2}\cong\alpha$ and $\Phi(q)=\frac{1}{\sqrt{\alpha}}\Phi(\frac{x}{\epsilon})$.
This gives us an intermediate scaling between the low density and
the weak-coupling and allows us to use the properties of both. Thanks
to the low density properties of the scaling as first step we prove
that the dynamics of the particles system is near to the solution
of the linear Boltzmann equation. In a second step using the weak-coupling
properties of the scaling we show that the solution of the linear
Landau equation is near to the solution of the linear Boltzmann equation.
More precisely let $\overline{f_{1}^{N}}$ be the one particle marginal
distribution and let $g^{\alpha}$ be the solution of the linear Boltzmann
equation, then we are able to prove that
\begin{equation}
\Vert\overline{f_{1}^{N}}(x,v)-g^{\alpha}(x,v)M_{\beta}(v)\Vert_{\infty}\rightarrow0.
\end{equation}
Then, denoting with $g$ the solution of the linear Landau equation,
it results that 
\begin{equation}
\Vert g(x,v)-g^{\alpha}(x,v)\Vert_{\mathbf{H}}\rightarrow0
\end{equation}
where $\mathbf{H}=L^{2}\left(\Gamma\times\mathbb{R}^{3},dxd\mu\right)$,
with $d\mu=M_{\beta}(v)dv$. 

\pagebreak{}

\section{Dynamics and statistical description of the motion}

\subsection{Hamiltonian system}

We consider a system of $N$ indistinguishable particles with unitary
mass moving in a torus $\Gamma_{\epsilon}=[0,\frac{1}{\epsilon})^{3}\subset\mathbb{R}^{3}$
with $\epsilon>0$. The particles interact by means of a two body
positive, radial and not increasing potential $\Phi:\,\mathbb{R}^{3}\rightarrow\mathbb{R}$.
We assume also that $\Phi$ is short-range, namely $\Phi(q)=0$ if
$\vert q\vert>1$, moreover $\Phi\in C^{2}(\mathbb{R}^{3})$ . The
Hamiltonian of the system is given by 
\begin{equation}
H=\frac{1}{2}\sum_{i=1}^{N}\vert v_{i}\vert^{2}+\frac{1}{2}\sum_{i,j=1,\,i\neq j}^{N}\Phi(q_{i}-q_{j})
\end{equation}
where $q_{i},v_{i}$ are respectively position and velocity of the
i-th particle. 

The Newton equations are the following
\begin{equation}
\frac{d^{2}q_{i}}{d\tau^{2}}(\tau)=\sum_{i\neq j}F\left(q_{i}(\tau)-q_{j}(\tau)\right)\label{eq:Newtoneq}
\end{equation}
for $i=1,...,N$ , where $F\left(q_{i}-q_{j}\right)=-\nabla\Phi\left(q_{i}-q_{j}\right)$
and $\tau$ is the time variable. The hypothesis that we made on the
potential ensure the existence and uniqueness of the solution of the
(\ref{eq:Newtoneq}).

\subsection{Scaling}

We rescale the system from microscopic coordinates $(q,\tau)$ to
macroscopic ones in the following way. We set 

\begin{equation}
x=\epsilon q\,\,\,\,t=\epsilon\tau
\end{equation}
where $x,t$ are respectively the macroscopic position variable and
the macroscopic time variable. We set $N\epsilon^{2}\cong\alpha$,
with $\alpha\cong\left(\log\log N\right)^{\frac{1}{2}}$, and we also
assume that $\vert N\epsilon^{2}-\alpha\vert\rightarrow0$. With this
scaling the density of the gas and the inverse of the mean free path
are diverging in the limit. This means that a given particle experiences
an high number of interaction per unit time. To balance this divergence
we rescale also the potential in the following way
\begin{equation}
\Phi\rightarrow\alpha^{-\frac{1}{2}}\Phi
\end{equation}
In the microscopic variables the equations of motion read as
\begin{equation}
\frac{d^{2}x_{i}}{dt^{2}}(\tau)=\frac{1}{\epsilon\sqrt{\alpha}}\sum_{i\neq j}-\nabla\Phi\left(\frac{x_{i}(t)-x_{j}(t)}{\epsilon}\right).\label{eq:Motioneq}
\end{equation}
From now we shall work in macroscopic variables unless explicitely
indicated.

\subsection{The scattering of two particles}

In this section we want to give a picture of the scattering between
two particles. We turn back to microscopic variables where the potential
is assumed to have range one. Let $q_{1},v_{1},q_{2},v_{2}$ be positions
and velocities of two particles which are performing a collision.
This two-body problem can be reduced to a central-force problem if
we set the origin of the coordinates $c$ in the center of mass
\begin{equation}
c=\frac{q_{1}+q_{2}}{2}
\end{equation}

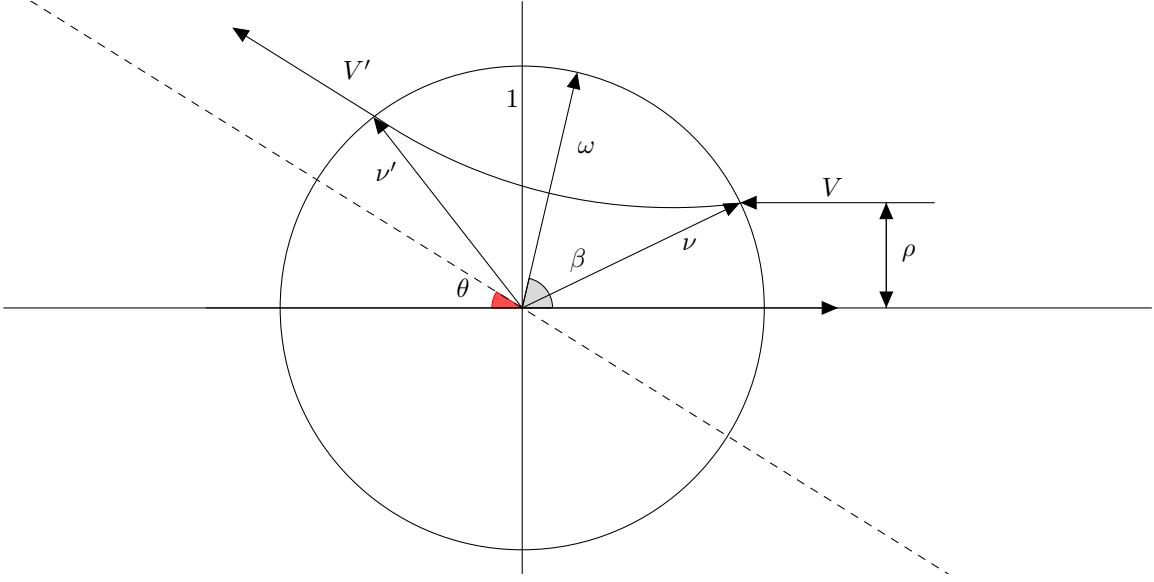
\begin{figure}[tb]
\definecolor{ffqqqq}{rgb}{1.,0.,0.}
\begin{tikzpicture}[line cap=round,line join=round,>=triangle 45,x=1.0cm,y=1.0cm]
\clip(-1.8539639814666136,-3.55917864118339) rectangle (13.405944342185185,4.057318811362377);
\draw [shift={(5.04,-0.02)},color=ffqqqq,fill=ffqqqq,fill opacity=0.7] (0,0) -- (147.99665839158172:0.40370127840348674) arc (147.99665839158172:180.:0.40370127840348674) -- cycle;
\draw [shift={(5.04,-0.02)},fill=black,fill opacity=0.15] (0,0) -- (0.:0.40370127840348674) arc (0.:76.86324293263583:0.40370127840348674) -- cycle;
\draw [->] (0.84,-0.02) -- (9.24,-0.02);
\draw (5.04,-3.55917864118339) -- (5.04,4.057318811362377);
\draw [domain=-1.8539639814666136:13.405944342185185] plot(\x,{(-0.168-0.*\x)/8.4});
\draw(5.04,-0.02) circle (3.217949657779002cm);
\draw [->] (10.52,1.38) -- (7.937447152235913,1.38);
\draw [->] (9.88,-0.02) -- (9.88,1.38);
\draw [->] (9.88,1.38) -- (9.88,-0.02);
\draw [->] (5.04,-0.02) -- (7.937447152235913,1.38);
\draw [->] (5.04,-0.02) -- (3.0616041978752038,2.5179420895944364);
\draw [->] (5.04,-0.02) -- (5.771363039709353,3.1137370828049833);
\draw (8.89794673334625,1.8369617801432045) node[anchor=north west] {$V$};
\draw (9.961026766475431,0.9488189676555355) node[anchor=north west] {$\rho$};
\draw (5.654879796838239,2.3214033142273878) node[anchor=north west] {$\omega$};
\draw (7.040920852690211,1.0295592233362327) node[anchor=north west] {$\nu$};
\draw (2.963537940814995,2.052269128625064) node[anchor=north west] {$\nu^\prime$};
\draw [->] (3.0762244565208556,2.529271585146075) -- (1.185454545454546,3.7109090909090883);
\draw [shift={(7.036153623666111,8.3712038403436)}] plot[domain=4.116686865126535:4.840999942122345,variable=\t]({1.*7.0575641324131775*cos(\t r)+0.*7.0575641324131775*sin(\t r)},{0.*7.0575641324131775*cos(\t r)+1.*7.0575641324131775*sin(\t r)});
\draw (2.5329232438512754,3.4113967659167996) node[anchor=north west] {$V^\prime$};
\draw (4.6994534379499875,3.0076954875133137) node[anchor=north west] {1};
\draw [dash pattern=on 3pt off 3pt,domain=-1.8539639814666136:13.405944342185185] plot(\x,{(-5.91763763082426--1.181637505763013*\x)/-1.8907699110663096});
\draw (4.040074683224293,0.4912908521315849) node[anchor=north west] {$\theta$};
\draw (5.547226122597309,0.8949921305350708) node[anchor=north west] {$\beta$};
\end{tikzpicture}

\caption{Here $\omega=\omega(\nu,V)$ is the unit vector bisecting the angle
between $-V$ and $V'$, $\nu$ is the unit vector pointing from the
particle with velocity $v_{1}$ to the particle with velocity $v_{2}$
when they are about to collide. We denote with $\beta$ the angle
between $-V$ and $\omega$, with $\varphi$ the angle between $-V$
and $\nu$, with $\rho=\sin\varphi$ the impact parameter and with
$\theta$ the deflection angle. It results that $\theta=\pi-2\beta$}
\end{figure}
Thanks to the conservation of the angular momentum we have that the
scattering takes place on a plane. We define $V=v_{1}-v_{2}$ as the
incoming relative velocity and $V^{'}=v_{1}^{'}-v_{2}^{'}$ as the
outgoing relative velocity with

\begin{equation}
\begin{cases}
v_{1}^{'}=v_{1}-\omega\left[\omega\cdot V\right]\\
v_{2}^{'}=v_{2}+\omega\left[\omega\cdot V\right]
\end{cases}
\end{equation}

Another useful way to represent the collision between two particles
is the so called $\sigma$-representation (Figure \ref{fig:sigma}).
With this notation the post collisional velocities can be written
as follow
\begin{equation}
\begin{cases}
v^{'}=\frac{v+v_{1}}{2}+\frac{\vert v-v_{1}\vert}{2}\sigma\\
v_{1}^{'}=\frac{v+v_{1}}{2}-\frac{\vert v-v_{1}\vert}{2}\sigma
\end{cases}\label{eq:sigmarapp}
\end{equation}

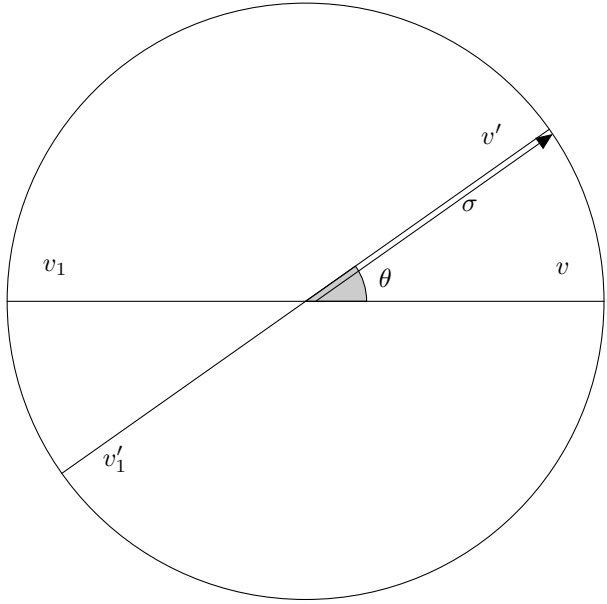
\begin{figure}[tb]
\begin{tikzpicture}[line cap=round,line join=round,>=triangle 45,x=0.8cm,y=0.8cm]
\clip(-3.881955492318414,-4.935403284178465) rectangle (19.130592082966732,6.550577251633976);
\draw [shift={(6.32,0.4)},fill=black,fill opacity=0.2] (0,0) -- (0.:1.0146625915028724) arc (0.:35.23996976045986:1.0146625915028724) -- cycle;
\draw(6.32,0.4) circle (3.9672257309107075cm);
\draw (1.3609678363616162,0.4)-- (11.279032163638384,0.4);
\draw (2.2697472858319774,-2.461372564238106)-- (10.370252714168023,3.261372564238106);
\draw (7.38079927336347,1.0916925093485579) node[anchor=north west] {$\theta$};
\draw (9.085432427088296,3.5065894771253787) node[anchor=north west] {$v^\prime$};
\draw (10.3233207887218,1.1931587684988445) node[anchor=north west] {$v$};
\draw (2.7945243597704867,-1.850829006009753) node[anchor=north west] {$v_1^\prime$};
\draw (1.8001550200976715,1.2540385239890166) node[anchor=north west] {$v_1$};
\draw (8.760740397807377,2.248407863661825) node[anchor=north west] {$\sigma$};
\draw [->] (6.488169014084508,0.4) -- (10.425466049105236,3.1815730297161364);
\end{tikzpicture}

\caption{\label{fig:sigma}We denote with $\sigma\in S^{2}\left(\frac{v_{1}+v_{2}}{2}\right)$
the direction of $V^{'}$ and with $\theta$ the angle between $V$
and $V^{'}$.}
\end{figure}

We can now define the scattering operator $I$, a map defined over
\begin{equation}
\left\{ \left(\nu,V\right)\in S^{2}\times\mathbb{R}^{3}\backslash\left\{ 0\right\} \,s.t.\,V\cdot\nu\leq0\right\} 
\end{equation}
by
\begin{equation}
I\left(\nu,V\right)=\left(\nu^{'},V^{'}\right)\label{eq:ChVar}
\end{equation}
\begin{equation}
\begin{cases}
V^{'}=V-2\omega\left(\omega\cdot V\right)\\
\nu^{'}=-\nu+2\omega\left(\omega\cdot\nu\right)
\end{cases}
\end{equation}
From the definition of $\nu^{'}$ and $V^{'}$ we have that $\nu\cdot V=-\nu^{'}\cdot V^{'}$.
It follows that $I$ sends incoming configuration in outgoing configuration.
The main property of $I$ is given by the following lemma, proved
in \cite{Pulvirenti2014}.
\begin{lem}
\label{lem:PresLeb}$I$ is an invertible transformation that preserves
the Lebesgue measure.
\end{lem}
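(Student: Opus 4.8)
The plan is to realize $I$ as a transit map of the Hamiltonian flow of the reduced two--body problem and to read off both assertions from Liouville's theorem together with time--reversal symmetry. Passing to the relative variables $r=q_{1}-q_{2}$ and $V=v_{1}-v_{2}$, the relative motion obeys $\ddot r=-2\nabla\Phi(r)$, which is a Hamiltonian flow $\phi^{s}$ on $\mathbb{R}^{3}\times\mathbb{R}^{3}$ preserving the volume $dr\,dV$ and the energy $\tfrac14|V|^{2}+\Phi(r)$. Since $\Phi$ is radial, non--increasing and supported in the unit ball, for every initial datum with $|r|=1$ and $V\cdot\nu<0$ (where $\nu=r$) the radial component $|r(t)|$ strictly decreases, reaches a single pericentre, and then strictly increases, leaving the ball once and never returning. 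Writing $\Sigma_{-}=\{(\nu,V)\in S^{2}\times(\mathbb{R}^{3}\setminus\{0\}):V\cdot\nu<0\}$ for the incoming and $\Sigma_{+}$ for the outgoing part of the sphere bundle over $\{|r|=1\}$, the first--exit map $\Sigma_{-}\to\Sigma_{+}$ is therefore well defined off the null set of grazing data $V\cdot\nu=0$, and — this is exactly how the formulas for $\nu'$ and $V'$ are produced — it coincides with $I$.

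Next I would identify the invariant measure induced by the flow on the cross--sections. If $X$ is the generating vector field, then on $\Sigma_{\pm}$ only its $\partial_{r}$--component matters: contracting $\mu=dr_{1}\wedge dr_{2}\wedge dr_{3}\wedge dV_{1}\wedge dV_{2}\wedge dV_{3}$ with the $\partial_{V}$--part produces a factor $dr_{1}\wedge dr_{2}\wedge dr_{3}$, which vanishes when restricted to $\{|r|=1\}$, whereas contracting with $\sum_{i}V_{i}\partial_{r_{i}}$ and restricting gives $(V\cdot\nu)\,d\nu\,dV$. A transit map between two cross--sections of a volume--preserving flow intertwines these induced invariant measures, so $|V\cdot\nu|\,d\nu\,dV=|V'\cdot\nu'|\,d\nu'\,dV'$ along $I$. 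Finally, energy conservation at $\{|r|=1\}$, where $\Phi=0$, gives $|V'|=|V|$; together with the identity $\nu\cdot V=-\nu'\cdot V'$ already recorded above this yields $|V'\cdot\nu'|=|V\cdot\nu|$, and cancelling this common factor leaves $d\nu\,dV=d\nu'\,dV'$, which is the asserted invariance of Lebesgue measure.

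For invertibility I would invoke time reversal: if $r(\cdot)$ solves the relative equation, so does $\tilde r(t):=r(-t)$, and comparing entry and exit data of $r$ and $\tilde r$ shows that $I(\nu,V)=(\nu',V')$ forces $I(\nu',-V')=(\nu,-V)$, i.e.\ $I\circ T\circ I=T$ with the involution $T(\nu,V):=(\nu,-V)$ exchanging $\Sigma_{-}$ and $\Sigma_{+}$. Hence $I$ is a bijection of $\Sigma_{-}$ onto $\Sigma_{+}$ with $I^{-1}=T\circ I\circ T$. (Alternatively, and closer to the computation in \cite{Pulvirenti2014}, one can bypass the flow: in coordinates $(|V|,\hat V,\varphi,\psi)$, with $\varphi$ the angle between $-V$ and $\nu$ and $\psi$ the azimuth about $\hat V$, elasticity and planarity of the collision keep $|V|$ and $\psi$ fixed under $I$ and reduce everything to the single scalar scattering angle $\theta=\theta(|V|,\sin\varphi)$; the residual $2\times2$ Jacobian is then computed by hand, and monotonicity of the map impact parameter $\mapsto\theta$ for a repulsive potential gives invertibility.)

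The genuine work is concentrated in the first two steps: one must check that $I$ is well defined and smooth away from the indicated null set (grazing configurations, and the measure--zero set of trajectories whose pericentre lies exactly on $\{|r|=1\}$), that no trajectory re--enters the interaction ball, and that the ``Poincar\'{e}/transit map preserves the contracted volume $\iota_{X}\mu$'' principle applies despite $\Sigma_{\pm}$ being non--compact. All of this is guaranteed by $\Phi$ being positive, radially non--increasing, $C^{2}$ and compactly supported, which is precisely the standing assumption on the potential; the explicit--Jacobian route in the parenthetical remark avoids these geometric points at the price of a longer but elementary computation.
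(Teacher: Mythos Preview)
The paper does not actually prove this lemma; it simply records the statement and defers to \cite{Pulvirenti2014} for the argument. Your proposal is a correct, self-contained proof: realising $I$ as the first-exit map of the Hamiltonian relative flow through the cross-sections $\Sigma_{\pm}\subset\{|r|=1\}$, reading off from Liouville's theorem that the induced flux measure $|V\cdot\nu|\,d\nu\,dV$ is preserved by the transit map, and then cancelling the common factor via $\nu\cdot V=-\nu'\cdot V'$ is the standard conceptual route and is sound. The time-reversal argument for invertibility is likewise correct.

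As you yourself note parenthetically, the proof in \cite{Pulvirenti2014} goes instead by the explicit-Jacobian route: one parametrises $(\nu,V)$ by $(|V|,\hat V,\rho,\psi)$ with $\rho=\sin\varphi$ the impact parameter and $\psi$ the azimuth, observes that $|V|$ and $\psi$ are unchanged and that $\hat V$ is rotated in the scattering plane by the deflection angle $\theta(\rho,|V|)$, and then checks the remaining scalar Jacobian directly. Your symplectic/flow argument is shorter and explains \emph{why} the measure is preserved; the coordinate computation is more pedestrian but sidesteps the smoothness and non-reentry checks you flag at the end. Either is acceptable here.
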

We conclude this section with an estimate for the angle $\theta$,
for which a complete proof can be found in \cite{Desvillettes2001}
\begin{lem}
\label{lem:theta}Let $\Phi$ be a potential satisfying our assumption
and let $\theta(\rho,\alpha)$ be the scattering angle in function
of the impact parameter $\rho$. Then the following estimate holds
true:
\begin{equation}
\theta(\rho,\alpha)\leq\frac{-2}{\vert V\vert^{2}\sqrt{\alpha}}\gamma(\rho)+\frac{1}{\vert V\vert^{4}\alpha}M(\rho,\alpha)\label{eq:Thetaest}
\end{equation}
where 
\begin{equation}
\gamma(\rho)=\intop_{\vert\rho\vert}^{1}\frac{\rho}{u}\Phi^{'}\left(\frac{\vert\rho\vert}{u}\right)\frac{du}{\sqrt{1-u^{2}}}
\end{equation}
 and $M(\rho,\alpha)$ is positive bounded functions.
\end{lem}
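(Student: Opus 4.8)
The plan is to start from the exact central-field formula for the deflection angle and treat the rescaled potential $\alpha^{-1/2}\Phi$ as a small perturbation. Reducing the two-particle scattering to the motion of the relative coordinate in the field $\alpha^{-1/2}\Phi$, conservation of energy and angular momentum give, after the substitution $u=\rho/r$ ($r$ the interparticle distance), the representation
\[
\theta(\rho,\alpha)=\pi-2\,\psi_0(\lambda),\qquad \psi_0(\lambda)=\int_0^{u_0(\lambda)}\frac{du}{\sqrt{1-u^2-\lambda\,\Phi(\rho/u)}},
\]
where $\lambda=c\,|V|^{-2}\alpha^{-1/2}$ is the ratio of the rescaled potential to the kinetic energy of the relative motion (with $c$ the explicit constant produced by the reduction, so as to match (\ref{eq:Thetaest})), $\Phi(\rho/u)$ is understood to vanish for $u<\rho$, and $u_0(\lambda)$ is the unique positive root of $1-u^2-\lambda\Phi(\rho/u)=0$, which lies in $(\rho,1]$ when $\rho<1$. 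For $\lambda=0$ one has $u_0=1$ and $\psi_0(0)=\int_0^1(1-u^2)^{-1/2}du=\pi/2$, hence $\theta=-2\bigl(\psi_0(\lambda)-\psi_0(0)\bigr)$, and the statement reduces to a first-order Taylor expansion of $\psi_0$ at $\lambda=0$ with a uniformly controlled quadratic remainder.

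The difficulty is that both the integrand of $\psi_0$ and the endpoint $u_0(\lambda)$ depend on $\lambda$, and the integrand blows up like $(u_0-u)^{-1/2}$ at the turning point, so one cannot differentiate naively under the integral sign. I would remove this by changing the integration variable from $u$ to $W:=1-u^2-\lambda\Phi(\rho/u)$. Since $\Phi$ is non-increasing, $\partial_u W=-2u+\lambda\rho u^{-2}\Phi'(\rho/u)\le -2u<0$, so $u\mapsto W$ is a diffeomorphism of $[\rho,u_0(\lambda)]$ onto $[0,1-\rho^2]$ (note $W(\rho,\lambda)=1-\rho^2$ because $\Phi(1)=0$) and the turning point is non-degenerate; this yields
\[
\psi_0(\lambda)=\arcsin\rho+\int_0^{1-\rho^2}\frac{dW}{\sqrt{W}\,\bigl(2u(W,\lambda)-\lambda\rho\,u(W,\lambda)^{-2}\Phi'(\rho/u(W,\lambda))\bigr)},
\]
an integral over a \emph{fixed} domain whose only singularity is the $\lambda$-independent, integrable factor $W^{-1/2}$ and whose denominator is $\ge 2\rho$ uniformly in $\lambda$. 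By the implicit function theorem the map $(W,\lambda)\mapsto u(W,\lambda)$ is $C^2$, so $\psi_0$ is $C^2$ near $\lambda=0$ and one may differentiate under the integral sign.

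Implicit differentiation of $W=1-u^2-\lambda\Phi(\rho/u)$ gives $\partial_\lambda u\big|_{\lambda=0}=-\Phi(\rho/u)/(2u)$; inserting this and returning to the variable $r=\rho/u$ produces
\[
\psi_0'(0)=\frac{1}{2\rho}\int_\rho^1\frac{r\bigl(\Phi(r)+r\Phi'(r)\bigr)}{\sqrt{r^2-\rho^2}}\,dr .
\]
Integrating by parts the term containing $\Phi(r)$ — the boundary contributions vanish since $\Phi(1)=0$ and $\sqrt{r^2-\rho^2}$ vanishes at $r=\rho$ — collapses the right-hand side to $\frac{\rho}{2}\int_\rho^1 (r^2-\rho^2)^{-1/2}\Phi'(r)\,dr$, which is exactly $\frac12\gamma(\rho)$ after rewriting through $u=\rho/r$. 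Then Taylor's formula with integral remainder gives
\[
\theta(\rho,\alpha)=-2\lambda\,\psi_0'(0)-2\int_0^\lambda(\lambda-s)\,\psi_0''(s)\,ds=\frac{-2}{|V|^2\sqrt{\alpha}}\gamma(\rho)+O(\lambda^2),
\]
and differentiating the fixed-domain representation twice in $\lambda$, using $\Phi\in C^2$ together with the uniform bound $2u-\lambda\rho u^{-2}\Phi'(\rho/u)\ge 2\rho$, shows $\sup_{0\le s\le\lambda}|\psi_0''(s)|<\infty$; hence the remainder is $O(\lambda^2)=O(|V|^{-4}\alpha^{-1})$, giving the term $|V|^{-4}\alpha^{-1}M(\rho,\alpha)$ in (\ref{eq:Thetaest}).

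The core obstacle is precisely the turning-point singularity together with the $\lambda$-dependent endpoint; once the substitution $u\mapsto W$ has transported the problem onto a fixed interval, everything else is a careful but essentially routine perturbative computation. The one point requiring extra care is uniformity as $\rho\to 0$, where the bound on $\psi_0''$ degrades like a negative power of $\rho$; there, however, $\gamma(\rho)=O(\rho)$ and the turning point stays close to $\rho$ for $\alpha$ large, so $M(\rho,\alpha)$ can be arranged to remain bounded uniformly in $\alpha$ (treating, if necessary, the regime of small $\rho$ — where $\theta$ itself is small — separately from $\rho$ bounded away from $0$).
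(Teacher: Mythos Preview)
Your argument is correct. The paper does not give its own proof of this lemma; it simply states that ``a complete proof can be found in \cite{Desvillettes2001}''. What you have written is a self-contained version of the standard perturbative analysis of the central-field scattering integral. The change of variable $u\mapsto W=1-u^2-\lambda\Phi(\rho/u)$, which freezes the $\lambda$-dependent endpoint and absorbs the turning-point singularity into the $\lambda$-independent factor $W^{-1/2}$, is precisely the right device to justify differentiation under the integral sign; your computation of $\psi_0'(0)$ and the integration by parts collapsing it to $\tfrac12\gamma(\rho)$ check out, and the bound on $\psi_0''$ via the fixed-domain representation together with the lower bound $2u-\lambda\rho u^{-2}\Phi'(\rho/u)\ge 2u\ge 2\rho$ (using $\Phi'\le 0$) holds for each fixed $\rho\in(0,1)$.

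The reservation you raise about uniformity as $\rho\to 0$ is legitimate and is exactly the kind of technical point the cited reference addresses. For the paper's purposes it is harmless: every downstream use of (\ref{eq:Thetaest})---the expansion of $Q_B$ in Section~3.1 and the remainder control in Theorem~\ref{thm:conv}---integrates $\theta$ against the cross-section $|\nu\cdot V|\,d\nu=|V|\rho\,d\rho\,d\psi$, which supplies an extra factor of $\rho$, so an integrable blow-up of $M(\rho,\alpha)$ at $\rho=0$ would cause no trouble.
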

\begin{rem}
Formula (\ref{eq:Thetaest}) points out that when $\alpha\rightarrow\infty$
the collision becomes grazing. 
\end{rem}

\subsection{Statistical description}

Now we want to describe our system from a statistical point of view.
We will denote the phase space as 
\begin{equation}
\Lambda_{N}=\left\{ \boldsymbol{z}_{N}\in\left(\Gamma\times\mathbb{R}^{3}\right)^{N}\right\} 
\end{equation}
where $\boldsymbol{z}_{N}=\left(z_{1},z_{2},..,z_{N}\right)$, $z_{i}=(x_{i},v_{i})$
and $\Gamma$ is the torus of unitary side.

We consider a probability density function $W_{0,N}$ defined on $\Lambda_{N}$.
The time evolution of $W_{0,N}$ is given by the solution $W_{N}$
of the following Liouville equation
\begin{equation}
\begin{cases}
\partial_{t}W_{N}+\mathcal{L}_{N}W_{N}=0\\
W_{N}(0)=W_{0,N}
\end{cases}\label{eq:Liouvilleeq}
\end{equation}
where $\mathcal{L}_{N}=\mathcal{L}_{N}^{0}+\mathcal{L}_{N}^{I}$ with
\begin{equation}
\mathcal{L}_{N}^{0}=\sum_{i=1}^{N}v_{i}\cdot\nabla_{x_{i}}
\end{equation}
\begin{equation}
\mathcal{L}_{N}^{I}=\frac{1}{\epsilon}\underset{i\neq j}{\sum_{i,j=1}^{N}}F_{i,j}\cdot\nabla_{v_{i}}
\end{equation}
and $F_{i,j}=-\frac{1}{\sqrt{\alpha}}\nabla\Phi\left(\frac{x_{i}(t)-x_{j}(t)}{\epsilon}\right)$.
We suppose that $W_{0,N}$ is symmetric in the exchange of particles,
and hence $W_{N}(t)$ is still symmetric for any positive times. 

The marginals distribution of the measure $W_{N}(t)$ are defined
as 
\begin{equation}
\overline{f_{j}^{N}}(\boldsymbol{z}_{j},t)=\intop dz_{j+1}...dz_{N}\,W_{N}(\boldsymbol{z}_{N},t)
\end{equation}
Nevertheless, it is more convenient to work with the reduced marginals
$\widetilde{f_{j}^{N}}(\boldsymbol{z}_{j},t)$ that read as follow
\begin{equation}
\widetilde{f_{j}^{N}}(\boldsymbol{z}_{j},t)=\intop_{S(\boldsymbol{x}_{j})^{N-j}}dz_{j+1}...dz_{N}\,W_{N}(\boldsymbol{z}_{N},t)
\end{equation}
where 
\begin{equation}
S(\boldsymbol{x}_{j})^{N-j}=\left\{ z=(x,v)\in\Gamma\times\mathbb{R}^{3}\,|\,\vert x-x_{k}\vert>\epsilon\,\,\forall\,1\leq k\leq j\right\} 
\end{equation}
As can be easily seen the reduced marginals are asymptotically equivalent
(for $\epsilon\rightarrow0$) to the standard marginals.

For the reduced marginals it is possible to derive from the Liouville
equation the following hierarchy of equations, called the Grad hierarchy
(GH),
\begin{equation}
\left(\partial_{t}+\mathcal{L}_{j}\right)f_{j}^{N}=\sum_{m=0}^{N-j-1}A_{j+1+m}^{\epsilon}f_{j+1+m}^{N}\,\,\,0\leq j\leq N
\end{equation}
where 

\[
A_{j+1+m}^{\epsilon}f_{j+1+m}^{N}(z_{j},t)=\binom{N-j-1}{m}(N-j)\epsilon^{2}\sum_{i=1}^{j}\intop_{S^{2}}d\nu\chi_{\left\{ min_{l=1,...,j;l\neq i}\vert x_{i}+\nu\epsilon-x_{l}>\epsilon\vert\right\} }\left(\nu\right)
\]
\begin{equation}
\intop_{\mathbb{R}^{3}}dv_{j+1}\left(v_{j+1}-v_{j}\right)\cdot\nu\intop_{\Delta_{m}(\boldsymbol{x}_{j+1})}dz_{j+1}...dz_{j+1+m}f_{j+1+m}^{N}(\boldsymbol{z}_{j},x_{i}+\nu\epsilon,v_{j+1},\boldsymbol{z}_{j+1,m},t)
\end{equation}
and $\boldsymbol{z}_{j+1,m}=\left(z_{j+1},...,z_{j+1+m}\right)$.
The set $\Delta_{m}(x_{j+1})$ is defined as follows
\begin{alignat}{1}
\Delta_{m}(\boldsymbol{x}_{j+1})= & \left\{ \boldsymbol{z}_{j+1,m}\subset\right.S(x_{j})^{m}\,\text{such that}\,\,\,\forall\,l=j+2,...,j+1+m,\,\,\text{there exists}\nonumber \\
 & \text{a choice of index}\,\,\,h_{1},...,h_{r}\in\{j+2,...,j+1+m\}\,\nonumber \\
 & \text{such that}\,\vert x_{l}-x_{h_{1}}\vert\leq\epsilon,\,\vert x_{h_{k}-1}-x_{h_{k}}\vert\leq\epsilon\,\,\,\text{for}\,k=2,...,r\nonumber \\
 & \left.\text{and}\,\min_{i\in\left\{ l,h_{1},...,h_{r}\right\} }\vert x_{i}-x_{j+1}\vert\leq\epsilon\right\} 
\end{alignat}
This hierarchy was first introduced by Grad \cite{Grad1958}. Actually
in views of the Boltzmann-Grad limit only the first equation of this
hierarchy was considered. The full hierarchy was introduced and derived
by King in \cite{King1975}. A complete derivation of this hierarchy
can also be found in \cite{Gallagher} adn \cite{Pulvirenti2014}. 

It is possible to represent the solution of the Grad hierarchy as
a series obtained by iterating the Duhamel formula. It results that

\begin{equation}
\widetilde{f_{j}^{N}}(t)=\sum_{n=0}^{\infty}G_{j,n}^{\epsilon}(t)f_{0,j}^{N}\label{eq:GSS}
\end{equation}
where
\begin{equation}
f_{0,j}^{N}=\intop_{S(\boldsymbol{x}_{j})^{N-j}}dz_{j+1}...dz_{N}\,W_{0,N}(\boldsymbol{z}_{N})
\end{equation}
and $G_{j,n}^{\epsilon}(t)$ is defined for $n\leq N-j$ as
\[
G_{j,n}^{\epsilon}(t)=\underset{j+n+\sum_{i=1}^{n}m_{i}\leq N}{\sum_{m_{1},...,m_{n}\geq0}}\intop_{0}^{t}dt_{1}...\intop_{0}^{t_{n-1}}dt_{n}
\]
\begin{equation}
S_{j}^{\epsilon}(t-t_{1})A_{j+1+m_{1}}^{\epsilon}S_{j+1+m_{1}}^{\epsilon}(t_{1}-t_{2})...A_{j+n+\sum_{i=1}^{n}m_{i}}^{\epsilon}S_{j+n+\sum_{i=1}^{n}m_{i}}^{\epsilon}(t_{n})f_{0,j+n+\sum_{i=1}^{n}m_{i}}^{N}\label{eq:GSSOp}
\end{equation}
 and it is identically equal to zero for $n>N-j$. The operator $S_{j}^{\epsilon}(t)$
is the interacting flow operator: 
\begin{equation}
S_{j}^{\epsilon}(t)g(\boldsymbol{z}_{j})=g\left(T_{j}^{\epsilon}(-t)\boldsymbol{z}_{j}\right),
\end{equation}
where $T_{j}^{\epsilon}(t)$ is the solution of the Newton equation
(\ref{eq:Motioneq}). We call this series the Grad series solution
(GSS).

Next we introduce the following hierarchy of equations, called the
intermediate hierarchy (IH)
\begin{equation}
\left(\partial_{t}+\mathcal{L}_{j}\right)f_{j}^{N}=(N-j)\epsilon^{2}C_{j+1}^{\epsilon}(f_{j+1}^{N})
\end{equation}
\[
C_{j+1}^{\epsilon}(f_{j+1}^{N})=\sum_{k=1}^{j}\intop_{\mathbb{R}^{3}}dv_{j+1}\intop_{\nu\cdot(v_{k}-v_{j+1})\geq0}d\nu\vert\nu\cdot(v_{k}-v_{j+1})\vert
\]
\begin{equation}
\left[f_{j+1}^{N}(x_{1},v_{1},...,x_{k},v_{k}^{'},...,x_{j},v_{j},x_{k}-\eta\epsilon,v_{j+1}^{'})-f_{j+1}^{N}(x_{1},v_{1},...,x_{j},v_{j},x_{k}+\eta\epsilon,v_{j+1})\right]\label{eq:2.28}
\end{equation}
This hierarchy is formally similar to the BBGKY hierarchy for hard
spheres but the collision operator appearing in IH is different. Indeed,
in the IH we have that the trasfered momentum is 
\begin{equation}
p=\left(\mathrm{V}\cdot\omega\right)\omega
\end{equation}
while in hard spheres it is 
\begin{equation}
p=\left(\mathrm{V}\cdot\nu\right)\nu.
\end{equation}
Note that it may be convenient to express $\nu$ in terms of $\omega$,
which is the parameter appearing in the expression of the outgoing
velocities. However, as described in \cite{Pulvirenti2014}, this
is a delicate point and we prefer to avoid it, working as much as
possible with formula (\ref{eq:2.28}). We want to notice also that
$A_{j+1}^{\epsilon}f_{j+1}^{N}=C_{j+1}^{\epsilon}(f_{j+1}^{N})$,
i.e. the first term in the sum on the right hand side of equation
(\ref{eq:GSS}) is the collision term that arise in the IH case. As
we will see this will be the only $O(1)$ term as $\epsilon\rightarrow0$.

Also for IH we can write the following formal series for the solution,
that we will call intermediate series solution (ISS)

\begin{equation}
f_{j}^{N}(t)=\sum_{n=0}^{\infty}Q_{j,n}^{\epsilon}(t)f_{0,j}^{N}\label{eq:ISS}
\end{equation}
where the operator $Q_{j,n}^{\epsilon}(t)$ is defined for $n\leq N-j$
as 
\begin{alignat}{1}
Q_{j,n}^{\epsilon}(t)= & \left(N-j\right)...\left(N-j-n+1\right)\epsilon^{2n}\nonumber \\
 & \intop_{0}^{t}dt_{1}...\intop_{0}^{t_{n-1}}dt_{n}S_{j}^{\epsilon}(t-t_{1})C_{j+1}^{\epsilon}S_{j+1+m_{1}}^{\epsilon}(t_{1}-t_{2})...C_{j+n}^{\epsilon}S_{j+n}^{\epsilon}(t_{n})f_{0,j+n}^{N}\label{eq:ISSOp}
\end{alignat}
and it is identically equal to zero for $n>N-j$.

Finally we observe that by sending $\epsilon\rightarrow0$, $N\rightarrow\infty$,
$N\epsilon^{2}\rightarrow\alpha$ in the IH we obtain, formally, the
following hierarchy, called the Boltzmann hierarchy (BH) 

\begin{equation}
\left(\partial_{t}+v\cdot\nabla_{x}\right)f_{j}=\alpha C_{j+1}(f_{j})\,\,\,0\leq j
\end{equation}
\[
C_{j+1}(f^{j})=\sum_{k=1}^{j}\intop_{\mathbb{R}^{3}}dv_{j+1}\intop_{\nu\cdot(v_{k}-v_{j+1})\geq0}d\nu\vert\nu\cdot(v_{k}-v_{j+1})\vert
\]
\begin{equation}
\left[f_{j+1}(x_{1},v_{1},...,x_{k},v_{k}^{'},...,x_{j},v_{j},x_{k},v_{j+1})-f_{j+1}(x_{1},v_{1},...,x_{j},v_{j},x_{k},v_{j+1})\right]\label{eq:BH}
\end{equation}
If we assume the propagation of chaos, i.e. that $f_{j}=f_{1}^{\otimes j}$,
the first equation of this infinite hierarchy becomes the Boltzmann
equation.

The series solution for the Boltzmann hierarchy (BSS) is the following
\begin{equation}
f_{j}^{\alpha}(t)=\sum_{n=0}^{\infty}Q_{j,n}^{\alpha}(t)f_{0,j+n}\label{eq:BSS}
\end{equation}
where $f_{0,j+n}$ is the $j+n$ particles initial datum and $Q_{j,n}^{\alpha}(t)$
is defined as follows

\begin{equation}
Q_{j,n}^{\alpha}(t)=\alpha^{n}\intop_{0}^{t}dt_{1}...\intop_{0}^{t_{n-1}}dt_{n}S_{j}(t-t_{1})C_{j+1}S_{j+1+m_{1}}(t_{1}-t_{2})...C_{j+n}S_{j+n}(t_{n})f_{0,j+n}\label{eq:BSSOp}
\end{equation}
where $S_{j}(t)$ is the free flow operator, i.e. 
\begin{equation}
S_{j}(t)g^{j}(\boldsymbol{z}_{j})=g(\boldsymbol{x}_{j}-\boldsymbol{v}_{j}t).
\end{equation}

\pagebreak{}

\section{Linear regime}

In this section we formally derive the linear Boltzmann and Landau
equations. First we define the Gibbs measure defined by
\begin{equation}
M_{N,\beta}(\boldsymbol{z}_{n})=C_{N,\beta}e^{-\beta H_{N}(\boldsymbol{z}_{n})}\label{eq:GIbbs}
\end{equation}
where $\beta>0$ and $C_{N,\beta}$ is chosen so that 
\begin{equation}
\intop_{\Lambda_{N}}M_{N,\beta}(\boldsymbol{z}_{n})d\boldsymbol{z}_{n}=1\label{eq:cbet1}
\end{equation}
The Gibbs measure is an invariant measure for the gas dynamics and
(\ref{eq:GIbbs}) is a stationary solution of the Liouville equation. 

In case of the Boltzmann and Landau equations, a stationary solution
is given by the Maxwellian distribution (free gas)
\begin{equation}
M_{\beta}(v)=C_{\beta}e^{-\frac{\beta}{2}\vert v\vert^{2}},\label{eq:cbet2}
\end{equation}
where $\beta>0$ and $C_{\beta}$ is such that 
\begin{equation}
\intop_{\Gamma\times\mathbb{R}^{3}}M_{\beta}(v)dxdv=1.
\end{equation}
Moreover a stationary solution of the Boltzmann hierarchy is 
\begin{equation}
M_{\beta}^{\otimes j}(\boldsymbol{v}_{j})=\prod_{i=1}^{j}M_{\beta}(v_{i}).
\end{equation}

Now we consider the Liouville equation (\ref{eq:Liouvilleeq}) with
initial datum given by
\begin{equation}
W_{0,N}(\boldsymbol{z}_{N})=M_{N,\beta}(\boldsymbol{z}_{N})g_{0}(x_{1},v_{1})\label{eq:Initialpertu}
\end{equation}
where $g_{0}\in L^{\infty}\left(\Gamma\times\mathbb{R}^{3}\right)$
is a perturbation on the first particle such that $\intop d\boldsymbol{z}_{1}M_{N,\beta}(\boldsymbol{z}_{1})g_{0}(x_{1},v_{1})=1$. 
\begin{thm}
Let $W^{N}$ be the solution of the Liouville equation (\ref{eq:Liouvilleeq})
with initial datum (\ref{eq:Initialpertu}) and let $f_{j}^{N}$ be
the j-particles reduced marginal. Then for any $1\leq j\leq N$ the
following bound holds
\begin{equation}
\sup_{t}f_{j}^{N}(\boldsymbol{z}_{j},t)\leq M_{N,\beta}(\boldsymbol{z}_{j})\Vert g_{0}\Vert_{\infty}\leq M_{\beta}^{\otimes j}(\boldsymbol{z}_{j})\Vert g_{0}\Vert_{\infty}\label{eq:Apriori}
\end{equation}
\end{thm}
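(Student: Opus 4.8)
The backbone of the argument is that the Gibbs measure $M_{N,\beta}$ is itself a stationary solution of the Liouville equation (\ref{eq:Liouvilleeq}), as recalled above, so that the flow with the perturbed datum (\ref{eq:Initialpertu}) can be solved explicitly and the estimate is reduced to a property of $M_{N,\beta}$ alone. Concretely, the plan is: (i) solve the Liouville equation by transport and read off a pointwise bound on $W_N(t)$; (ii) integrate out $N-j$ variables over the reduced domain to bound $f_j^N(t)$ by the $j$-particle marginal of the Gibbs measure; (iii) dominate that marginal by $M_\beta^{\otimes j}$ using the positivity and the finite range of $\Phi$.

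For step (i): the equation $\partial_t W_N+\mathcal L_N W_N=0$ is a pure transport equation along the Hamiltonian flow $T_N^\epsilon(t)$ of (\ref{eq:Motioneq}) (globally well defined since $\Phi\in C^2$), so $W_N(\boldsymbol z_N,t)=W_{0,N}\big(T_N^\epsilon(-t)\boldsymbol z_N\big)$. Since the flow conserves $H_N$ and preserves Lebesgue measure, it leaves $M_{N,\beta}=C_{N,\beta}e^{-\beta H_N}$ invariant; writing $W_{0,N}=M_{N,\beta}\,g_0$ with $g_0=g_0(z_1)$ gives $W_N(\boldsymbol z_N,t)=M_{N,\beta}(\boldsymbol z_N)\,g_0\big((T_N^\epsilon(-t)\boldsymbol z_N)_1\big)$, hence $0\le W_N(\boldsymbol z_N,t)\le M_{N,\beta}(\boldsymbol z_N)\,\|g_0\|_\infty$ for all $t$. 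For step (ii), I would integrate this inequality in $z_{j+1},\dots,z_N$ over $S(\boldsymbol x_j)^{N-j}$ and, since the integrand is nonnegative, enlarge the domain to all of $(\Gamma\times\mathbb{R}^3)^{N-j}$; this yields $\sup_t f_j^N(\boldsymbol z_j,t)\le\|g_0\|_\infty\,M_{N,\beta}(\boldsymbol z_j)$, where $M_{N,\beta}(\boldsymbol z_j)$ is the $j$-particle marginal of the Gibbs measure. This is the first inequality.

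Step (iii) is where the real content lies. I would unfold $M_{N,\beta}(\boldsymbol z_j)$: the Gaussian integrals in $v_{j+1},\dots,v_N$ factor off and reconstruct $M_\beta^{\otimes j}(\boldsymbol v_j)$, leaving $M_{N,\beta}(\boldsymbol z_j)=M_\beta^{\otimes j}(\boldsymbol v_j)\,\rho_N^{(j)}(\boldsymbol x_j)$ with $\rho_N^{(j)}$ a purely configurational factor built from the interaction. Because $\Phi$ has range $\epsilon$ in macroscopic units, on $S(\boldsymbol x_j)^{N-j}$ the cross interaction between $\{1,\dots,j\}$ and the remaining particles vanishes, so the potential energy splits as $U_j+U_{N-j}$; then, using $\Phi\ge0$ (so every Boltzmann factor is $\le1$) to discard the mutual interaction among the first $j$ particles and to bound the configurational bath integral by the one without those $j$ particles, together with $|\Gamma|=1$, one controls $\rho_N^{(j)}$ by $1$ and obtains $M_{N,\beta}(\boldsymbol z_j)\le M_\beta^{\otimes j}(\boldsymbol z_j)$. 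The main obstacle is precisely this last step: it amounts to bounding a ratio of configurational partition functions of the interacting gas, and the care lies in exploiting the positivity and finite range of $\Phi$ (and the unit torus volume) so that this ratio stays under control; by contrast steps (i) and (ii) are routine transport and Fubini arguments.
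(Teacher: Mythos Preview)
Your proof is essentially the same as the paper's: the paper also invokes the maximum principle for the Liouville equation (your explicit transport formula) to get $W_N(t)\le M_{N,\beta}\|g_0\|_\infty$, integrates to pass to the $j$-marginal, and then asserts $M_{N,\beta}(\boldsymbol z_j)\le M_\beta^{\otimes j}(\boldsymbol z_j)$ ``by the positivity of the interaction.'' You supply considerably more detail than the paper on step~(iii) and correctly flag the partition-function ratio as the real content; one small wrinkle is that after enlarging the domain in~(ii) you no longer need the finite-range splitting on $S(\boldsymbol x_j)^{N-j}$ in~(iii), since $\Phi\ge0$ already gives $e^{-\beta U_N}\le e^{-\beta U_{N-j}}$ on the whole space.
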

\begin{proof}
From the choice of the initial datum we have that
\begin{equation}
f_{0}^{N}(\boldsymbol{z}_{N})\leq M_{N,\beta}(\boldsymbol{z}_{N})\Vert g_{0}\Vert_{\infty}
\end{equation}
Since the maximum principle holds for the Liouville equation and $M_{N,\beta}(\boldsymbol{z}_{N})$
is a stationary solution we have that
\begin{equation}
W^{N}(\boldsymbol{z}_{N},t)\leq M_{N,\beta}(\boldsymbol{z}_{N})\Vert g_{0}\Vert_{\infty}
\end{equation}
This implies the (\ref{eq:Apriori}) since $M_{N,\beta}(\boldsymbol{z}_{j})\leq M_{\beta}^{\otimes j}(\boldsymbol{z}_{j})$
by the positivity of the interaction. 
\end{proof}

\subsection{Linear Boltzmann equation and asymptotics}

In this section we derive the linear Boltzmann equation from the non
linear one and study its asymptotic behavior for $\alpha\rightarrow\infty$.
Suppose that the initial datum of the Boltzmann hierarchy (\ref{eq:BH})
is 
\begin{equation}
f_{0,j}(x_{1},v_{1},...,x_{j},v_{j})=M_{\beta}(v_{1})...M_{\beta}(v_{j})g_{0}(x_{1},v_{1})\label{eq:initialBH}
\end{equation}
with $g_{0}(x_{1},v_{1})\in L^{\infty}\left(\Gamma\right)$. Since
the Maxwellian distribution is a stationary solution of the equations
we look for a solution at time $t$ given by
\begin{equation}
f_{j}^{\alpha}(\boldsymbol{z}_{j},t)=M_{\beta}(v_{1})...M_{\beta}(v_{j})g^{\alpha}(x_{1},v_{1},t).\label{eq:BHsol}
\end{equation}
From (\ref{eq:initialBH}) and (\ref{eq:BH}) we have that (\ref{eq:BHsol})
is a solution of the Boltzmann hierarchy if $g^{\alpha}$ satisfies
the following equation
\begin{equation}
M_{\beta}(v)\left(\partial_{t}g^{\alpha}+v\cdot\nabla_{x}g^{\alpha}\right)=\alpha\intop dv_{1}\intop_{\nu\cdot\mathrm{V}>0}d\nu\vert\nu\cdot V\vert\left[M_{\beta}(v^{'})M_{\beta}(v_{1}^{'})g^{\alpha}(x,v^{'})-M_{\beta}(v)M_{\beta}(v_{1})g^{\alpha}(x,v)\right]\label{eq:lin}
\end{equation}
Since $M_{\beta}(v^{'})M_{\beta}(v_{1}^{'})=M_{\beta}(v)M_{\beta}(v_{1})$
the equation (\ref{eq:lin}) becomes the Linear Boltzmann equation
\begin{equation}
\partial_{t}g^{\alpha}+v\cdot\nabla_{x}g^{\alpha}=Q_{B}(g^{\alpha}),
\end{equation}
where
\begin{equation}
Q_{B}(g^{\alpha})=\alpha\intop dv_{1}M_{\beta}(v_{1})\intop_{\nu\cdot V>0}d\nu\vert\nu\cdot V\vert\left[g^{\alpha}(x,v^{'})-g^{\alpha}(x,v)\right].\label{eq:LinCollOp}
\end{equation}

We are interested to investigate the behavior of $Q_{B}$ when $\alpha\rightarrow\infty$.
We denote with $\left(\hat{e}_{1},\hat{e}_{2},\hat{e}_{3}\right)$
an orthonormal base of $\mathbb{R}^{3}$ such that $\hat{e}_{1}=\frac{\mathrm{V}}{\vert\mathrm{V}\vert}$.
Now we consider the semispehere $S_{+}^{2}=\left\{ \nu\in s^{2}\,|\,\nu\cdot\mathrm{V}>0\right\} $.
For a fixed $\nu$ in this semispehere the scattering takes place
in the plane generated by $\hat{e}_{1}$ and $\nu$. An orthonormal
base of the scattering plane is given by the vectors $\hat{e}_{1}$
and $\hat{e}\left(\psi\right)=\hat{e}_{2}\cos\psi+\hat{e}_{3}\sin\psi$,
calling with $\psi$ the angle between $\hat{e}_{2}$ and $\hat{e}$
. We also denote with $\varphi$ the angle between $\hat{e}_{1}$
and $\nu$. 

\begin{figure}[h]
\includegraphics[scale=0.2]{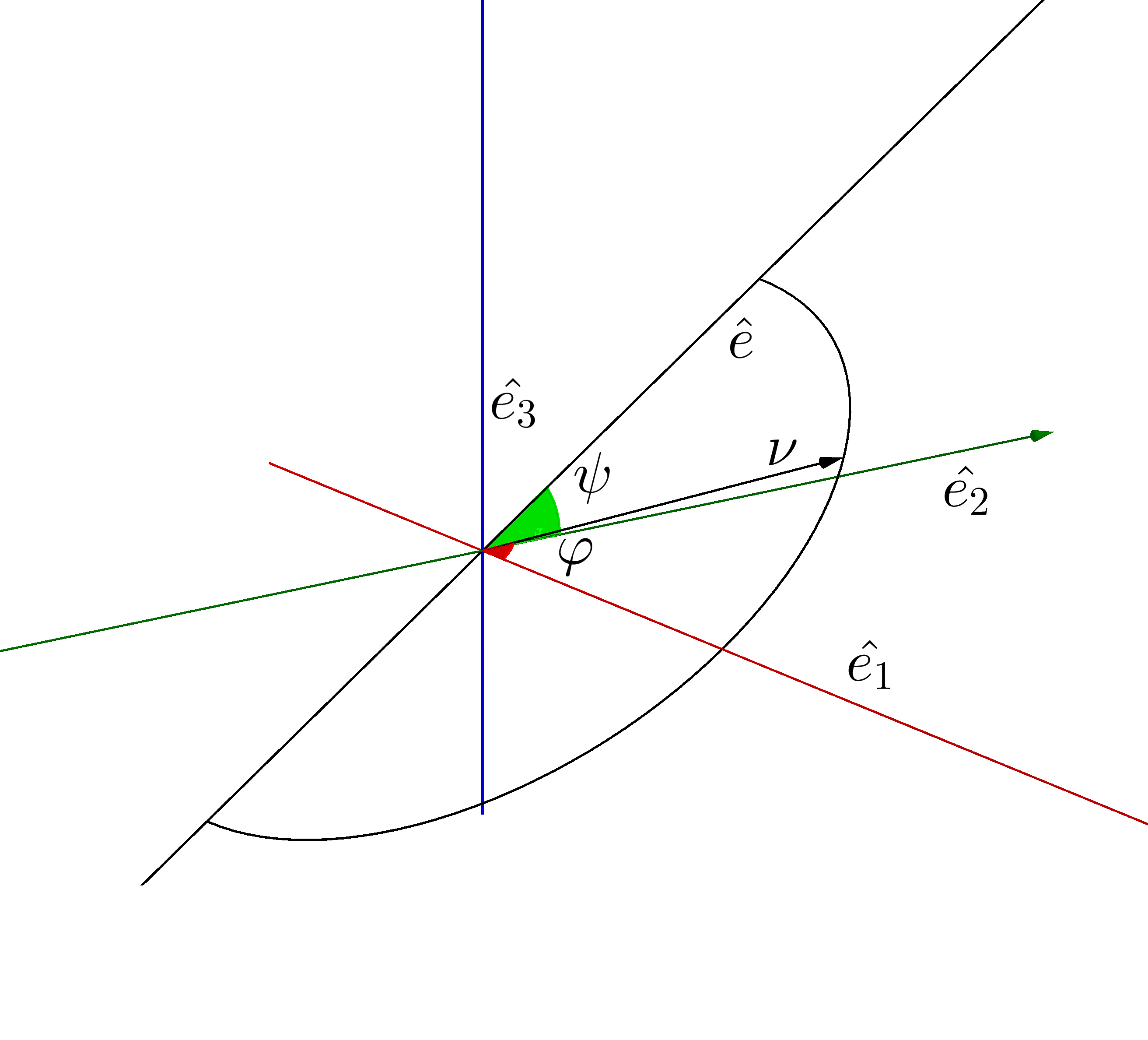}\caption{A representation of a three dimensional scattering.}
\end{figure}

From the $\sigma-representation$ (\ref{eq:sigmarapp}) we have that
\begin{equation}
v^{'}=c+r\sigma
\end{equation}
where $r=\frac{\vert\mathrm{V}\vert}{2}$ and $c=\frac{v+v_{1}}{2}$.
Notice that in our coordinates it results that
\begin{equation}
\sigma=\cos\theta\hat{e_{1}}-\sin\theta\hat{e}(\psi)
\end{equation}

\begin{figure}[tb]
\definecolor{ffqqtt}{rgb}{1.,0.,0.2}
\definecolor{ffqqqq}{rgb}{1.,0.,0.}
\begin{tikzpicture}[line cap=round,line join=round,>=triangle 45,x=1.0cm,y=1.0cm]
\clip(-1.6033333333333386,-3.133333333333341) rectangle (13.516666666666698,4.413333333333337);
\draw [shift={(5.5,-0.06)},color=ffqqtt,fill=ffqqtt,fill opacity=1.0] (0,0) -- (-0.4375436499996177:0.6666666666666683) arc (-0.4375436499996177:33.43172347935919:0.6666666666666683) -- cycle;
\draw [shift={(5.5,-0.06)},fill=black,fill opacity=0.1] (0,0) -- (128.5550770047361:0.4) arc (128.5550770047361:180.:0.4) -- cycle;
\draw(5.5,0.14) circle (2.673275144836386cm);
\draw (5.636666666666679,3.3866666666666685) node[anchor=north west] {$\hat{e}$};
\draw (8.316666666666686,-0.013333333333336386) node[anchor=north west] {$\hat{e_1}$};
\draw [->,color=ffqqqq] (-0.36,-0.06) -- (11.4,-0.06);
\draw [->] (5.5,-0.06) -- (7.818567339480824,1.4706560383107328);
\draw [->] (5.5,-0.06) -- (3.7381690391141147,2.150560037716834);
\draw (7.156666666666682,1.8) node[anchor=north west] {$\nu$};
\draw (4.583333333333343,0.6666666666666646) node[anchor=north west] {$\theta$};
\draw (4.036666666666675,2.373333333333334) node[anchor=north west] {$\sigma$};
\draw [->] (5.5,-3.133333333333341) -- (5.5,4.226666666666671);
\end{tikzpicture}

\caption{}
\end{figure}

We denote with $v^{'}(\theta)$ the post collisional velocity in function
of the scattering angle $\theta$
\begin{equation}
v^{'}(\theta)=c+r\cos\theta\hat{e}_{1}-r\sin\theta\hat{e}(\psi)
\end{equation}
This implies that 
\begin{equation}
g^{\alpha}(v^{'}(\theta))=g^{\alpha}(c+r\cos\theta\hat{e}_{1}-r\sin\theta\hat{e}(\psi))\label{eq:coord}
\end{equation}
For sake of brevity we will not take care of the dependence of $g$
from the spatial variable. Let us consider the Taylor expansion of
g with respect to $\theta$ up to the second order. We have

\begin{alignat}{1}
g^{\alpha}(v^{'})-g^{\alpha}(v)= & g^{\alpha}(v^{'}(\theta))-g^{\alpha}(v^{'}(0))\nonumber \\
= & \theta\nabla_{v}g^{\alpha}(v)\cdot\frac{dv^{'}}{d\theta}(0)+\frac{\theta^{2}}{2}\left[\nabla_{v}g^{\alpha}\cdot\frac{d^{2}v^{'}}{d\theta^{2}}(0)+\left(\frac{dv^{'}}{d\theta}(0),D_{v}^{2}(g^{\alpha})\frac{dv^{'}}{d\theta}(0)\right)\right]+o(\theta^{2})\label{eq:tayexp}
\end{alignat}
where $D_{v}^{2}(g^{\alpha})$ is the hessian matrix of $g^{\alpha}$
with respect to the velocity. A simple calculation gives us that
\begin{equation}
\frac{dv^{'}}{d\theta}(0)=-r\hat{e}(\psi)
\end{equation}
\begin{equation}
\frac{d^{2}v^{'}}{d\theta^{2}}(0)=-r\hat{e}_{1}
\end{equation}
It can be easily seen that the integration of the first term is zero
by symmetry. Moreover from Lemma \ref{lem:theta} we have that 
\begin{equation}
\theta^{2}(\rho,\alpha)\leq\frac{4}{\vert\mathrm{V}\vert^{4}\alpha}\gamma^{2}(\rho)+o(\alpha^{-1})
\end{equation}
From this remark and by equations (\ref{eq:tayexp}) and (\ref{eq:LinCollOp})
we have that 
\begin{alignat}{1}
Q_{B}= & \intop dv_{1}M_{\beta}(v_{1})\intop_{\nu\cdot V>0}d\nu\vert\nu\cdot V\vert\frac{2}{\vert V\vert^{4}}\gamma(\rho)^{2}\nonumber \\
 & \left[-\frac{1}{2}V\cdot\nabla_{v}g^{\alpha}(v)+\frac{\vert V\vert^{2}}{4}\left(\hat{e}(\psi),D^{2}\hat{e}(\psi)\right)\right]+o(\alpha^{-1})
\end{alignat}
From the change of variables $\nu\rightarrow\psi,\varphi$ , since
$d\nu=\sin\varphi d\varphi d\psi$, we have that

\begin{alignat}{1}
Q_{B}= & \intop dv_{1}M_{\beta}(v_{1})\intop_{-\frac{\pi}{2}}^{\frac{\pi}{2}}d\varphi\intop_{-\frac{\pi}{2}}^{\frac{\pi}{2}}d\psi\vert\nu\cdot V\vert\sin\varphi\frac{2}{\vert\mathrm{V}\vert^{4}}\gamma(\rho)^{2}\nonumber \\
 & \left[-\frac{1}{2}V\cdot\nabla_{v}g^{\alpha}(v)+\frac{\vert V\vert^{2}}{4}\left(\hat{e}(\psi),D^{2}\hat{e}(\psi)\right)\right]+o(\alpha^{-1})
\end{alignat}
Since $\vert\nu\cdot V\vert=\vert V\vert\cos\varphi$ and $\rho=\sin\varphi$,
it results that $\cos\varphi d\varphi=d\rho$ and so 
\[
Q_{B}=\intop dv_{1}M_{\beta}(v_{1})\intop_{-1}^{1}d\rho\frac{1}{\vert V\vert^{3}}\mbox{\ensuremath{\rho}}\gamma(\rho)^{2}\intop_{-\frac{\pi}{2}}^{\frac{\pi}{2}}d\psi\left[-V\cdot\nabla_{v}g^{\alpha}(v)+\frac{\vert V\vert^{2}}{2}\left(\hat{e}(\psi),D^{2}\hat{e}(\psi)\right)\right]+o(\alpha^{-1})=
\]
\begin{equation}
\intop dv_{1}M_{\beta}(v_{1})\frac{1}{\vert V\vert^{3}}\intop_{-\frac{\pi}{2}}^{\frac{\pi}{2}}d\psi\left[\frac{\vert V\vert^{2}}{2}\left(\hat{e}(\psi),D^{2}\hat{e}(\psi)\right)-V\cdot\nabla_{v}g^{\alpha}(v)\right]\intop_{-1}^{1}d\rho\rho\gamma(\rho)^{2}+o(\alpha^{-1})
\end{equation}
From the definition of $\hat{e}(\psi)$, and since $\intop_{-\frac{\pi}{2}}^{\frac{\pi}{2}}\sin^{2}\psi d\psi=\intop_{-\frac{\pi}{2}}^{\frac{\pi}{2}}\cos^{2}\psi d\psi=\frac{\pi}{2}$
and $\intop_{-\frac{\pi}{2}}^{\frac{\pi}{2}}\sin\psi\cos\psi d\psi=0$,
we have that
\begin{equation}
Q_{B}=\intop dv_{1}M_{\beta}(v_{1})\frac{1}{\vert V\vert^{3}}\left[\vert V\vert^{2}\left(\hat{e}_{2},D^{2}\hat{e}_{2}\right)+\vert V\vert^{2}\left(\hat{e}_{3},D^{2}\hat{e}_{3}\right)-4V\cdot\nabla_{v}g^{\alpha}(v)\right]\frac{\pi}{4}\intop_{-1}^{1}d\rho\rho\gamma(\rho)^{2}+o(\alpha^{-1})
\end{equation}
Now since the laplacian is the trace of the Hessian matrix and it
is invariant under changes of coordinates we have that 
\begin{equation}
\triangle g(v)=\left(\hat{e}_{1},D^{2}\hat{e}_{1}\right)+\left(\hat{e}_{2},D^{2}\hat{e}_{2}\right)+\left(\hat{e}_{3},D^{2}\hat{e}_{3}\right)
\end{equation}
and so 
\begin{equation}
\vert V\vert^{2}\left(\hat{e}_{2},D^{2}\hat{e}_{2}\right)+\vert V\vert^{2}\left(\hat{e}_{3},D^{2}\hat{e}_{3}\right)=\vert V\vert^{2}\triangle g^{\alpha}(v)-\left(V,D^{2}V\right)\label{eq:laptrace}
\end{equation}
Thanks to (\ref{eq:laptrace}) we finally arrive to 
\begin{alignat}{1}
Q_{B}(g)= & B\intop dv_{1}M_{\beta}(v_{1})\frac{1}{\vert V\vert^{3}}\left[\vert V\vert^{2}\triangle g^{\alpha}(v)-\left(V,D^{2}V\right)-4V\cdot\nabla_{v}g^{\alpha}(v)\right]+o(\alpha)\nonumber \\
= & Q_{L}(g)+o(\alpha)\label{eq:AsymLinBolt}
\end{alignat}
where 
\begin{equation}
B=\frac{\pi}{4}\intop_{-1}^{1}d\rho\rho\gamma(\rho)^{2}
\end{equation}

\subsection{Linear Landau equation}

In this subsection we will show that the linear operator $Q_{L}(g)$
is indeed the linear Landau operator obtained by the full nonlinear
equatios.. Consider
\[
\begin{cases}
\partial_{t}f+v\cdot\nabla_{x}f=C_{L}(f)\\
f(x,v,0)=f_{0}(x,v)
\end{cases}
\]
whit 
\begin{equation}
C_{L}(f)=A\intop dv_{1}\nabla_{v}\cdot\left[\frac{1}{\vert v-v_{1}\vert}P_{(v-v_{1})}^{\perp}\left(\nabla_{v}-\nabla_{v_{1}}\right)f(v)f(v_{1})\right]\label{eq:LandOp}
\end{equation}
where $A>0$ is a suitable constant and $P_{(v-v_{1})}^{\perp}$ is
the projector on the orthogonal subspace to $v-v_{1}$. 

Also in this case we consider a perturbation of the stationary state.
We set $f(v)=M_{\beta}(v)g(v)$ and $f(v_{1})=M_{\beta}(v_{1})$ in
(\ref{eq:LandOp}). This represents a single particle perturbed in
a stationary background. With this choice equation (\ref{eq:LandOp})
becomes
\[
M_{\beta}(v)\left(\partial_{t}g+v\cdot\nabla_{x}g\right)=K(g)
\]
\[
K(g)=A\intop dv_{1}\nabla_{v}\cdot\left[\frac{1}{\vert v-v_{1}\vert}P_{(v-v_{1})}^{\perp}\left(\nabla_{v}-\nabla_{v_{1}}\right)M_{\beta}(v)M_{\beta}(v_{1})g(v)\right]
\]
We suppose to have all the necessary regularity to give sense to the
following calculations. We start from the gradient term which leads
to 

\[
K(g)=A\intop dv_{1}\nabla_{v}\cdot\left[\frac{1}{\vert\mathrm{V}\vert}P_{\mathrm{V}}^{\perp}\left(M_{\beta}(v)M_{\beta}(v_{1})\nabla_{v}g(v)-2v\beta M_{\beta}(v)M_{\beta}(v_{1})h(v)+2v_{1}\beta M_{\beta}(v)M_{\beta}(v_{1})g(v)\right)\right]
\]
\begin{equation}
=A\intop dv_{1}M_{\beta}(v_{1})\nabla_{v}\cdot\left[\frac{1}{\vert\mathrm{V}\vert}P_{\mathrm{V}}^{\perp}\left(M_{\beta}(v)\nabla_{v}g(v)-2\beta M_{\beta}(v)g(v)\left(V\right)\right)\right]
\end{equation}
Notice that $P_{V}^{\perp}\left(2\beta M_{\beta}(v)g(v)\left(V\right)\right)=0$,
this yields 
\begin{equation}
K(g)=A\intop dv_{1}M_{\beta}(v_{1})\nabla_{v}\cdot\left[\frac{1}{\vert V\vert}P_{\mathrm{V}}^{\perp}\left(M_{\beta}(v)\nabla_{v}g(v)\right)\right]\label{eq:ut1}
\end{equation}
We also notice that $\nabla_{v}\frac{1}{\vert V\vert}$ is parallel
to $V$ , we calculate the divergence and obtain 
\[
\nabla_{v}\cdot\left[\frac{1}{\vert V\vert}P_{(\mathrm{V})}^{\perp}\left(M_{\beta}(v)\nabla_{v}g(v)\right)\right]=\nabla_{v}\frac{1}{\vert V\vert}\cdot P_{(\mathrm{V})}^{\perp}\left(M_{\beta}(v)\nabla_{v}g(v)\right)+\frac{1}{\vert V\vert}\nabla_{v}\cdot P_{(\mathrm{V})}^{\perp}\left(M_{\beta}(v)\nabla_{v}g(v)\right)=
\]
\[
\frac{1}{\vert V\vert}\nabla_{v}\cdot P_{V}^{\perp}\left(M_{\beta}(v)\nabla_{v}g(v)\right)
\]
Therefore by (\ref{eq:ut1}) we have 
\begin{equation}
K(g)=A\intop dv_{1}M_{\beta}(v_{1})\frac{1}{\vert V\vert}\nabla_{v}\cdot\left[M_{\beta}(v)P_{V}^{\perp}\left(\nabla_{v}g(v)\right)\right]\label{eq:ut2}
\end{equation}
We calculate again the divergence
\[
\nabla_{v}\cdot\left[M_{\beta}(v)P_{V}^{\perp}\left(\nabla_{v}g(v)\right)\right]=-2\beta vM_{\beta}(v)\cdot P_{V}^{\perp}\left(\nabla_{v}g(v)\right)+M_{\beta}(v)\nabla_{v}\cdot\left[P_{V}^{\perp}\left(\nabla_{v}g(v)\right)\right]=
\]
\[
-2\beta\left(v-v_{1}\right)M_{\beta}(v)\cdot P_{V}^{\perp}\left(\nabla_{v}g(v)\right)-2\beta v_{1}M_{\beta}(v)\cdot P_{V}^{\perp}\left(\nabla_{v}g(v)\right)+M_{\beta}(v)\nabla_{v}\cdot\left[P_{V}^{\perp}\left(\nabla_{v}g(v)\right)\right]=
\]
\begin{equation}
-2\beta v_{1}M_{\beta}(v)\cdot P_{V}^{\perp}\left(\nabla_{v}g(v)\right)+M_{\beta}(v)\nabla_{v}\cdot\left[P_{V}^{\perp}\left(\nabla_{v}g(v)\right)\right]\label{eq:ut3}
\end{equation}
From (\ref{eq:ut3}) and (\ref{eq:ut2}) we arrive to
\[
K(g)=A\intop dv_{1}M_{\beta}(v_{1})\frac{1}{\vert V\vert}\left\{ -2\beta v_{1}M_{\beta}(v)\cdot P_{V}^{\perp}\left(\nabla_{v}g(v)\right)+M_{\beta}(v)\nabla_{v}\cdot\left[P_{\mathrm{V}}^{\perp}\left(\nabla_{v}g(v)\right)\right]\right\} =
\]
\[
A\intop dv_{1}M_{\beta}(v_{1})\frac{1}{\vert V\vert}\left[-2\beta v_{1}M_{\beta}(v)\cdot P_{V}^{\perp}\left(\nabla_{v}g(v)\right)\right]+
\]
\begin{equation}
A\intop dv_{1}M_{\beta}(v_{1})\frac{1}{\vert V\vert}M_{\beta}(v)\nabla_{v}\cdot\left[P_{V}^{\perp}\left(\nabla_{v}g(v)\right)\right]\label{eq:ut4}
\end{equation}
Now we work on the first term of the right hand side of (\ref{eq:ut4}).
Since $-2\beta v_{1}M_{\beta}(v_{1})=\nabla_{v_{1}}M_{\beta}(v_{1})$,
by means of the divergence Theorem we have that 
\[
A\intop dv_{1}M_{\beta}(v_{1})\frac{1}{\vert V\vert}\left[-2\beta v_{1}M_{\beta}(v)\cdot P_{V}^{\perp}\left(\nabla_{v}g(v)\right)\right]=M_{\beta}(v)A\intop dv_{1}\left(-2\beta v_{1}\right)M_{\beta}(v_{1})\cdot\frac{P_{V}^{\perp}\left(\nabla_{v}g(v)\right)}{\vert V\vert}=
\]
\begin{equation}
-M_{\beta}(v)A\intop dv_{1}M_{\beta}(v_{1})\nabla_{v_{1}}\cdot\left[\frac{P_{V}^{\perp}\left(\nabla_{v}h(v)\right)}{\vert V\vert}\right]=-M_{\beta}(v)A\intop dv_{1}M_{\beta}(v_{1})\frac{1}{\vert V\vert}\nabla_{v_{1}}\cdot\left[P_{V}^{\perp}\left(\nabla_{v}g(v)\right)\right]\label{eq:ut5}
\end{equation}
From (\ref{eq:ut5}) and (\ref{eq:ut4}) we arrive to
\begin{equation}
K(g)=M_{\beta}(v)A\intop dv_{1}M_{\beta}(v_{1})\frac{1}{\vert V\vert}\left(\nabla_{v}-\nabla_{v_{1}}\right)\cdot\left[P_{V}^{\perp}\left(\nabla_{v}g(v)\right)\right]\label{eq:ut6}
\end{equation}
Now we want to calculate $\nabla_{v}\cdot\left[P_{V}^{\perp}\left(\nabla_{v}g(v)\right)\right]$
and $\nabla_{v_{1}}\cdot\left[P_{V}^{\perp}\left(\nabla_{v}g(v)\right)\right]$.
First we observe that 
\begin{equation}
P_{V}^{\perp}\left(\nabla_{v}g(v)\right)=\nabla_{v}g(v)-\frac{\left(V,\nabla_{v}g(v)\right)V}{\vert V\vert^{2}}
\end{equation}
and so 
\[
\nabla_{v_{1}}\cdot\left[P_{V}^{\perp}\left(\nabla_{v}g(v)\right)\right]=\nabla_{v_{1}}\cdot\left[\nabla_{v}g(v)-\frac{\left(V,\nabla_{v}g(v)\right)V}{\vert V\vert^{2}}\right]=-\nabla_{v_{1}}\cdot\left[\frac{\left(V,\nabla_{v}g(v)\right)V}{\vert V\vert^{2}}\right]=
\]
\begin{equation}
2\frac{\left(V,\nabla_{v}g(v)\right)}{\vert V\vert^{2}}\label{eq:ut7}
\end{equation}
For the other term we have that 
\[
\nabla_{v}\cdot\left[P_{V}^{\perp}\left(\nabla_{v}g(v)\right)\right]=\nabla_{v}\cdot\left[\nabla_{v}g(v)-\frac{\left(V,\nabla_{v}g(v)\right)V}{\vert V\vert^{2}}\right]=\triangle g(v)-\nabla_{v}\cdot\left[\frac{\left(V,\nabla_{v}g(v)\right)V}{\vert V\vert^{2}}\right]=
\]
\begin{equation}
\triangle g(v)-\left[2\frac{\left(V,\nabla_{v}g(v)\right)}{\vert V\vert^{2}}+\frac{\left(V,D^{2}V\right)}{\vert V\vert^{2}}\right]\label{eq:ut8}
\end{equation}
We now use (\ref{eq:ut7}) and (\ref{eq:ut8}) together with (\ref{eq:ut6})
to get
\begin{equation}
K(g)=M_{\beta}(v)A\intop dv_{1}M_{\beta}(v_{1})\frac{1}{\vert V\vert^{3}}\left[\vert V\vert^{2}\triangle g(v)-\left(V,D^{2}V\right)-4V\cdot\nabla_{v}g(v)\right]
\end{equation}
Finally we can define the linear Landau equation
\begin{equation}
\partial_{t}g+v\cdot\nabla_{x}g=\tilde{Q}_{L}(g)
\end{equation}
where $\tilde{Q}_{L}$ is the linear Landau operator defined as
\begin{equation}
\tilde{Q}_{L}(g)=A\intop dv_{1}M_{\beta}(v_{1})\frac{1}{\vert V\vert^{3}}\left[\vert V\vert^{2}\triangle g(v)-\left(V,D^{2}V\right)-4V\cdot\nabla_{v}g(v)\right]\label{eq:LinLandOp}
\end{equation}
Notice that $\tilde{Q}_{L}$ and $Q_{L}$ are the same operator if
$A=B$. The constant $A$ is precisally characterized by the formal
derivation of the Landau equation from a system of particles and it
has the following value 
\begin{equation}
A=\frac{1}{8\pi}\intop_{0}^{+\infty}dr\,r^{3}\hat{\Phi}(r)^{2}
\end{equation}
where $\hat{\Phi}(\vert k\vert)=\intop dx\,\Phi(\vert x\vert)e^{-ik\cdot x}$.
It can be easily proved that $A=B$ by following the calculations
made in \cite{Kirkpatrick2009} and, therefore, that $\tilde{Q}_{L}=Q_{L}$.\pagebreak{}

\section{Continuity estimates}

In this section we will prove some useful estimates for the operators
arising in the series solution of the hierarchies. Observe that in
the case of $\alpha=1$ these estimates are enough to prove the convergence
of the series solution for a small time. In our case since $\alpha\rightarrow\infty$
the time of the convergence of the series is going to zero. As we
will see in the next section we can still use these estimates in the
linear case thanks to the a priori estimate .

We define the following norm

\begin{equation}
\Vert f_{j}(\boldsymbol{z}_{j})\Vert_{\beta}=\sup_{\boldsymbol{z}_{j}\in\Lambda_{j}}\left(e^{\beta H(\boldsymbol{z}_{j})}f_{j}(\boldsymbol{z}_{j})\right)
\end{equation}
where the hamiltonian $H(\boldsymbol{z}_{j})$ in macroscopic variables
reads as
\begin{equation}
H(\boldsymbol{z}_{j})=\frac{1}{2}\sum_{i=1}^{j}\vert v_{i}\vert^{2}+\frac{1}{2\sqrt{\alpha}}\sum_{i,k=1,\,i\neq k}^{j}\Phi(\frac{x_{i}-x_{k}}{\epsilon})
\end{equation}

For sake of simplicity we don't indicate the dependence from j in
the definition of $\Vert\cdot\Vert_{\beta}$. Notice also that the
norm depends on $\alpha$ but not in a harmful way. 

Since we are interested in the linear regime we will take as initial
datum a perturbation of the stationary state, as we have seen in section
3.1 and 3.2. We assume that the initial datum of GH and IH has the
form
\begin{equation}
f_{j,0}^{N}(\boldsymbol{z}_{j})=M_{N,\beta}(\boldsymbol{z}_{j})g_{0}(x_{1},v_{1})
\end{equation}
We assume also that the initial data for the Boltzmann hierarchy is
\begin{equation}
f_{0}^{\alpha}(\boldsymbol{z}_{j})=M_{\beta}^{\otimes j}(v_{j})g_{0}(x_{1},v_{1})
\end{equation}
Notice that the estimates that we will prove work also in case of
a general $f_{0}$ with $\Vert f_{0}\Vert_{\beta}<\infty$ for a $\beta>0$. 

\subsection{Estimates of the operators}

We start by estimating the operator appearing in GSS
\begin{lem}
\label{lem:ConEst}Let $g_{j}^{N}$ be a sequence of continuous functions
with $g_{j}^{N}=0$ for $j>N$ and suppose that 
\begin{equation}
\Vert g_{j}^{N}\Vert_{\beta}\leq C^{j}
\end{equation}
 Then for $\beta^{'}<\beta$ there exist a constant $C_{1}=C_{1}(\beta,\beta^{'},g_{j}^{N})$
such that for $\epsilon$ small enough and $\forall j\geq0$ 
\begin{equation}
\Vert G_{j,n}^{\epsilon}(t)g_{j}^{N}(z_{j})\Vert_{\beta^{'}}\leq\left(C_{1}\alpha t\right)^{n}
\end{equation}
\end{lem}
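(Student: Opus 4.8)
The plan is to estimate $G_{j,n}^\epsilon(t)g_j^N$ by unfolding the definition (\ref{eq:GSSOp}) and controlling, term by term in the sum over $m_1,\dots,m_n$, the iterated action of the collision operators $A_{j+1+m_i}^\epsilon$ and the interacting flows $S_k^\epsilon$. First I would record the two basic ingredients. The interacting flow is measure-preserving on the energy shell and the Hamiltonian is conserved along $T_j^\epsilon$, so $S_k^\epsilon(s)$ is an isometry for $\Vert\cdot\Vert_\beta$: $\Vert S_k^\epsilon(s) h\Vert_\beta=\Vert h\Vert_\beta$. The real work is the one-step bound for $A^\epsilon_{j+1+m}$, which I would state as: for $\beta''<\beta'$ there is a constant $c=c(\beta',\beta'')$ with
\begin{equation}
\Vert A_{j+1+m}^\epsilon h\Vert_{\beta''}\le c\,\binom{N-j-1}{m}(N-j)\epsilon^2\,(j+m)\,(\beta'-\beta'')^{-1/2}\,\Vert h\Vert_{\beta'} .
\end{equation}
This comes from: the $d\nu$ integral is over $S^2$ (bounded), the factor $|v_{j+1}-v_j|\cdot\nu$ is linear in the velocities and is absorbed by spending a bit of the Gaussian weight (producing the $(\beta'-\beta'')^{-1/2}$ from $\sup_v |v| e^{-(\beta'-\beta'')|v|^2/2}$), the $dz_{j+1}\cdots dz_{j+1+m}$ integral over $\Delta_m(\boldsymbol x_{j+1})$ contributes the missing Maxwellian factors for the $m$ extra particles together with a volume factor $(C\epsilon^3)^m$ from the "connectedness" constraint defining $\Delta_m$, and the combinatorial factor $\binom{N-j-1}{m}(N-j)\epsilon^2$ is exactly what appears in $A^\epsilon$. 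With $N\epsilon^2\cong\alpha$ and $N\epsilon^3\to 0$, the product $\binom{N-j-1}{m}(N-j)\epsilon^2 (C\epsilon^3)^m$ is bounded by $\alpha\, (C')^{j}\, (C''N\epsilon^3)^m$, so summing over $m\ge 0$ gives a convergent geometric series and kills the $m$-dependence for $\epsilon$ small, leaving an effective bound $\Vert A^\epsilon_{j+1}\,\text{(summed)}\Vert_{\beta''}\le c\,\alpha\,(j+1)\,(\beta'-\beta'')^{-1/2}\Vert h\Vert_{\beta'}$.

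Next I would iterate. Splitting the loss $\beta-\beta'$ into $n$ equal slabs $\delta=(\beta-\beta')/n$ between consecutive applications, each of the $n$ collision operators costs a factor $c\,\alpha\,(j+i)\,\delta^{-1/2}$ (with $i$ running up to $n$ since at most one particle is added per step, apart from the harmless $m_i$'s already summed away), the flows cost nothing, and the time integrations $\int_0^t dt_1\int_0^{t_1}dt_2\cdots$ give $t^n/n!$. Using $\Vert g_{j+n}^N\Vert_\beta\le C^{j+n}$ from the hypothesis, and $\prod_{i=1}^n(j+i)\le (j+n)!/j!\le 2^{j+n} n!$ after a crude bound, the $n!$ from the time simplex cancels, and the $\delta^{-n/2}=(n/(\beta-\beta'))^{n/2}$ together with $n!^{-1/2}\sim (n/e)^{-n/2}$ (Stirling) recombine into a harmless constant raised to the $n$-th power. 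Collecting everything, $\Vert G_{j,n}^\epsilon(t) g_j^N\Vert_{\beta'}\le (C_1\,\alpha\, t)^n$ with $C_1=C_1(\beta,\beta',C)$ absorbing $c$, $C$, the factor $2$, the $e$ from Stirling, and the $(\beta-\beta')^{-1/2}$-type constants; the $j$-dependence is absorbed because $(2C)^j$ is dominated by enlarging $C_1$ only if we are careful — in fact the cleanest route is to prove the slightly stronger statement $\Vert G_{j,n}^\epsilon(t) g_j^N\Vert_{\beta'}\le (2C)^j (C_1\alpha t)^n$ and note that the displayed claim is the special case relevant later, or simply reindex so the $(2C)^j$ is carried as part of $C_1^n$ for the range of $j$ that matters.

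The main obstacle is the one-step estimate for $A^\epsilon_{j+1+m}$, specifically controlling the integral over the recollision set $\Delta_m(\boldsymbol x_{j+1})$: one must check that the "connectedness" constraints in the definition of $\Delta_m$ really do confine $m$ of the integration variables to a region of volume $\lesssim (C\epsilon^3)^m$ (a tree/cluster-counting argument: each of the $m$ particles is within $\epsilon$ of a previously placed one, so the number of spanning structures is $\le m^m$ or so and each edge costs an $\epsilon^3$-ball), and that the resulting $m^m$ is beaten by $1/m!$ from... it is not, so one instead writes $\binom{N-j-1}{m}\le N^m/m!$ and uses $N^m (\epsilon^3)^m m^m/m! \le (e N\epsilon^3 C)^m$ which $\to 0$. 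This is exactly the "technicalities developed in \cite{Pulvirenti2014}" the abstract refers to, so I would invoke that reference for the geometric/combinatorial part of the $\Delta_m$ estimate and focus the written proof on assembling the pieces and tracking the $\alpha$-dependence, which is the one genuinely new bookkeeping item here (the factor $\alpha$, not $1$, multiplying $t$ in the final bound, coming from $N\epsilon^2\cong\alpha$ rather than $\cong 1$).
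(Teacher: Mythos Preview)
Your proposal is correct and follows essentially the same approach as the paper: bound a single $A^\epsilon_{j+1+m}$ by exploiting the $(C\epsilon^3)^m$ volume of $\Delta_m(\boldsymbol x_{j+1})$, use $N\epsilon^2\cong\alpha$ to turn $\epsilon^2 N^{m+1}\epsilon^{3m}$ into $\alpha(C\epsilon\alpha)^m$, sum the geometric series in $m$ (this is where ``$\epsilon$ small enough'' enters, via $C\epsilon\alpha<1$), use that $S_k^\epsilon$ is an isometry for $\Vert\cdot\Vert_\beta$, and then iterate with the time-simplex integral supplying $t^n/n!$.

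If anything, you are more explicit than the paper on two points. First, the paper does not spell out the slicing $\beta\to\beta'$ into $n$ sub-intervals nor the resulting $(\beta'-\beta'')^{-1/2}$ factors; it simply writes one step as $\Vert A^\epsilon\cdot\Vert_{\beta'}\le n\alpha C$ (with an apparent typo $n$ for $j$) and then says ``alternate and perform the time integrals''. Your Stirling-type bookkeeping for the cancellation between $\prod_i(j+i)$, $t^n/n!$, and $\delta^{-n/2}$ is the standard way to make that step honest, and it is fine. Second, your worry about a residual $(2C)^j$ is well-founded: the paper's stated bound $(C_1\alpha t)^n$ uniform in $j$ is a bit loose given the hypothesis $\Vert g_j^N\Vert_\beta\le C^j$, and in the applications (Section~4.2 onwards) only bounded ranges of $j$ occur, so the $j$-dependent prefactor is harmless there. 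Your suggestion to record the sharper form $(2C)^j(C_1\alpha t)^n$ is the clean fix.
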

\begin{proof}
From the definition of the operator $A_{j+1+m}^{\epsilon}g_{j+1+m}^{N}$
we have that
\begin{equation}
e^{\beta^{'}H(z_{j})}\vert A_{j+1+m}^{\epsilon}g_{j+1+m}^{N}(\boldsymbol{z}_{j})\vert\leq C^{j+1}C^{m}\epsilon^{3m}\epsilon^{2}N^{m+1}\sum_{i=1}^{j}\intop dv_{j+1}\left(\vert v_{i}\vert+\vert v_{j+1}\vert\right)e^{-\frac{\beta-\beta^{'}}{2}\sum_{i=1}^{j}v_{i}^{2}}e^{-\frac{\beta}{2}v_{j+1}^{2}}
\end{equation}
since 
\begin{equation}
\intop_{\Delta_{m}(\boldsymbol{x}_{j+1})}d\boldsymbol{z}_{j+1,m}f_{j+1+m}^{N}(\boldsymbol{z}_{j},x_{i}+\nu\epsilon,v_{j+1},\boldsymbol{z}_{j+1,m},t)\leq C^{m}\epsilon^{3m}
\end{equation}
and
\begin{equation}
\Vert g_{j+1+m}^{N}\Vert_{\beta}\leq C^{j+1}C^{m}
\end{equation}
Now since $\epsilon^{2}N\cong\alpha$ we have that
\begin{equation}
\epsilon^{3m}\epsilon^{2}N^{m+1}\leq\alpha(C\epsilon\alpha)^{m}
\end{equation}
and so
\begin{equation}
e^{\beta^{'}H(z_{j})}\vert A_{j+1+m}^{\epsilon}g_{j+1+m}^{N}(\boldsymbol{z}_{j})\vert\leq n\alpha(C\epsilon\alpha)^{m}
\end{equation}
We can choose $\epsilon$ small enough, since $\alpha\cong\sqrt{\log\log N}$,
to have that $C\epsilon\alpha<1$. We performe the sum over m to obtain
\begin{equation}
\sum_{m\geq0}\left(C\epsilon\alpha\right)^{m}\leq C
\end{equation}
that leads us to
\begin{equation}
\Vert A_{j+1+m}^{\epsilon}g_{j+1+m}^{N}(\boldsymbol{z}_{j})\Vert_{\beta^{'}}\leq n\alpha C\label{eq:4ut1}
\end{equation}
Now since for any $\beta>0$ it results that
\begin{equation}
\Vert S_{j}^{\epsilon}(t)g_{j}^{N}\Vert_{\beta}=\Vert g_{j}^{N}\Vert_{\beta}\label{eq:4ut2}
\end{equation}
we can alternate estimate (\ref{eq:4ut1}) and (\ref{eq:4ut2}) and
performe the time integrals in (\ref{eq:GSSOp}). This gives us that
\begin{equation}
\Vert G_{j,n}^{\epsilon}(t)g_{j}^{N}(\boldsymbol{z}{}_{j})\Vert_{\beta^{'}}\leq\left(C_{1}\alpha t\right)^{n}
\end{equation}
\end{proof}
In the same way we can estimate the operators $Q_{j,n}^{\epsilon}(t)$
and $Q_{j,n}^{\alpha}(t)$ and prove the following lemma
\begin{lem}
Let $Q_{j,n}^{\epsilon}(t)$ and $Q_{j,n}^{\alpha}(t)$ be defined
respectively as in (\ref{eq:ISSOp}) and in (\ref{eq:BSSOp}). Let
also $g_{j}^{N},g_{j}$ be sequence of continuous functions with $g_{j}^{N}=0$
for $j>N$ suppose that 
\begin{equation}
\Vert g_{j}^{N}\Vert_{\beta}\leq C^{j}
\end{equation}
\begin{equation}
\Vert g_{j}\Vert_{\beta}\leq C^{j}
\end{equation}
 then there exist constants $C_{2}$ and $C_{3}$ such that for $\epsilon$
small enough and $\beta^{'}<\beta$ 
\begin{equation}
\Vert Q_{j,n}^{\epsilon}(t)g_{j}^{N}\Vert_{\beta^{'}}\leq\left(C_{2}\alpha t\right)^{n}
\end{equation}
\begin{equation}
\Vert Q_{j,n}^{\alpha}(t)g_{j}\Vert_{\beta^{'}}\leq\left(C_{3}\alpha t\right)^{n}
\end{equation}
\end{lem}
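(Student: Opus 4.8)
\emph{Proof idea.} The plan is to run the argument of Lemma~\ref{lem:ConEst} almost verbatim; in fact the present case is slightly \emph{simpler}, since the operators $Q_{j,n}^{\epsilon}$ and $Q_{j,n}^{\alpha}$ defined in (\ref{eq:ISSOp}) and (\ref{eq:BSSOp}) add exactly one particle at each Duhamel step, so there are no recollision clusters $\Delta_m$ and hence no sum over $m$ to perform. First I would estimate a single collision operator in $\Vert\cdot\Vert_{\beta''}$ for $\beta''$ slightly smaller than the norm of its argument. For $C_{j+1}^{\epsilon}$ in (\ref{eq:2.28}), and identically for $C_{j+1}$ in (\ref{eq:BH}) (its $\epsilon=0$ version), write $h_{j+1}\le\Vert h_{j+1}\Vert_{\beta}\,e^{-\beta H}$ and use $\Phi\ge0$ to get $e^{-\beta H}\le e^{-\frac{\beta}{2}\sum_i|v_i|^2}$; the outgoing argument $(\ldots,v_k',\ldots,v_{j+1}')$ obeys the same bound by the energy conservation $|v_k'|^2+|v_{j+1}'|^2=|v_k|^2+|v_{j+1}|^2$. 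Pairing the weight $e^{\beta''H(\boldsymbol z_j)}$ with $e^{-\beta H(\boldsymbol z_j)}$ coming from $h_{j+1}$ gives $e^{(\beta''-\beta)H(\boldsymbol z_j)}\le e^{-\frac{\beta-\beta''}{2}\sum_{i\le j}|v_i|^2}$, again by $\Phi\ge0$. Bounding the kernel by $|\nu\cdot(v_k-v_{j+1})|\le|v_k|+|v_{j+1}|$, integrating it against $e^{-\frac{\beta}{2}|v_{j+1}|^2}$ produces a constant times $(1+|v_k|)$, the half–sphere integral in $\nu$ is finite, and the spatial shift $x_k\pm\epsilon\nu$ present in the $\epsilon$‑case is irrelevant for a sup bound. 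Summing over $k=1,\dots,j$ and absorbing each $(1+|v_k|)$ into $e^{-\frac{\beta-\beta''}{2}|v_k|^2}$ yields
\[
\Vert C_{j+1}^{\epsilon}h_{j+1}\Vert_{\beta''}\le c(\beta,\beta'')\,j\,\Vert h_{j+1}\Vert_{\beta},
\qquad
\Vert C_{j+1}h_{j+1}\Vert_{\beta''}\le c(\beta,\beta'')\,j\,\Vert h_{j+1}\Vert_{\beta}.
\]
Moreover $(N-j)\cdots(N-j-n+1)\epsilon^{2n}\le(N\epsilon^{2})^{n}\le(\alpha+o(1))^{n}$ by the assumption $|N\epsilon^{2}-\alpha|\to0$, so the prefactor of $Q_{j,n}^{\epsilon}$ is bounded just as $\alpha^{n}$ bounds the prefactor of $Q_{j,n}^{\alpha}$.

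Next I would use that the interacting flow $S_j^{\epsilon}(t)$ and the free flow $S_j(t)$ conserve $H$, so that $\Vert S_j^{\epsilon}(t)h\Vert_{\beta}=\Vert h\Vert_{\beta}$ and $\Vert S_j(t)h\Vert_{\beta}=\Vert h\Vert_{\beta}$ for every $\beta>0$, exactly as in (\ref{eq:4ut2}). Then I iterate along the $n$‑fold Duhamel expansions (\ref{eq:ISSOp}) and (\ref{eq:BSSOp}): fix a decreasing chain $\beta=\beta_0>\beta_1>\cdots>\beta_n=\beta'$ with \emph{summable} gaps $\beta_{\ell-1}-\beta_\ell$ (say $\beta_{\ell-1}-\beta_\ell\propto(\beta-\beta')\ell^{-2}$), alternate the collision bound above — which at the $\ell$‑th step contributes a factor $c(\beta_{\ell-1},\beta_\ell)\,(j+n-\ell)$ — with the flow isometry, and carry out the time‑ordered integrals $\int_0^t dt_1\cdots\int_0^{t_{n-1}}dt_n=t^n/n!$. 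Choosing the gaps summable rather than equal keeps each $c(\beta_{\ell-1},\beta_\ell)$ of order $\ell$, so $\prod_{\ell}c(\beta_{\ell-1},\beta_\ell)$ is at most $C^n\,n!$; together with the combinatorial factor $\prod_{\ell}(j+\ell)\le 2^{j+n}\,n!$ from the collision‑partner sums this is compensated by the $1/n!$ from the time integrals (the leftover $2^{j}$ is absorbed by enlarging $C$, the $2^n$ by enlarging the final constant), and multiplying by the prefactor $(\alpha+o(1))^n$ respectively $\alpha^n$ gives $\Vert Q_{j,n}^{\epsilon}(t)g_j^N\Vert_{\beta'}\le(C_2\alpha t)^n$ and $\Vert Q_{j,n}^{\alpha}(t)g_j\Vert_{\beta'}\le(C_3\alpha t)^n$ for $\epsilon$ small enough — which, as in Lemma~\ref{lem:ConEst}, is guaranteed once $N$ is large because $\alpha\cong\sqrt{\log\log N}$.

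I expect the only genuinely delicate point to be this last bookkeeping step: checking that the loss of Gaussian weight, once it is spread over the $n$ layers with summable gaps, together with the factorial growth of the collision‑partner sums, is precisely balanced by the $1/n!$ produced by the nested time integrations, so that the bound stays geometric in $n$ (and hence, after summation over $n$, yields convergence of the series (\ref{eq:ISS}) and (\ref{eq:BSS}) for $\alpha t$ small). Everything else — the single‑operator estimate, the flow isometries, the treatment of the $x_k\pm\epsilon\nu$ shift and of the interaction part of $H$ via $\Phi\ge0$ — is a routine transcription of the proof of Lemma~\ref{lem:ConEst}.
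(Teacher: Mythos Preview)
Your overall route is exactly the paper's---the paper proves this lemma by the single sentence ``In the same way we can estimate the operators\ldots'', deferring entirely to Lemma~\ref{lem:ConEst}---and your single-operator bound, the flow isometries, and the handling of $\Phi\ge0$ and of the $\epsilon$-shift are all correct. The place where you go beyond the paper's sketch and try to make the factorial balance explicit contains a genuine error, though. With summable gaps $\beta_{\ell-1}-\beta_\ell\propto\ell^{-2}$ you have $c(\beta_{\ell-1},\beta_\ell)\sim\ell$, hence $\prod_\ell c\le C^n n!$, while the collision-partner sums give a \emph{separate} factor $\prod_\ell(j+\ell)\le 2^{j+n}n!$. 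A single $1/n!$ from the time simplex cannot absorb both factorials, so your estimate is $(C\alpha t)^n\, n!$, not $(C\alpha t)^n$; this is precisely the step you yourself flagged as delicate.

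A fix that stays close to your outline is to take \emph{equal} gaps $(\beta-\beta')/n$ and, in the single-step estimate, replace the crude bound on $\sum_{k\le m}|v_k|$ by the Cauchy--Schwarz inequality $\sum_{k\le m}|v_k|\le m^{1/2}\bigl(\sum_k|v_k|^2\bigr)^{1/2}$. Then the supremum of $\bigl(\sum_k|v_k|^2\bigr)^{1/2}\exp\bigl\{-\tfrac{\beta-\beta'}{2n}\sum_k|v_k|^2\bigr\}$ is $O(n^{1/2})$, the $\ell$-th layer contributes a factor $O\bigl((j+\ell)^{1/2}n^{1/2}\bigr)$, and Stirling's formula gives
\[
\frac{1}{n!}\prod_{\ell=1}^n C\,(j+\ell)^{1/2}n^{1/2}
\;\le\;\frac{C^n\,n^{n/2}\sqrt{(j+n)!/j!}}{n!}
\;\le\;(C')^{\,n}\,2^{\,j/2},
\]
which is geometric in $n$ as required.
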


\subsection{Estimates for an arbitrary time}

Now we want to use the a priori estimate to prove the convergence
of the series solution for an arbitrary time. The main idea is to
separate the interval $\left[0,t\right]$ in $s\in\mathbb{N}$ parts
of length $h$ such that

\begin{equation}
t=sh
\end{equation}
and write $\widetilde{f_{1}^{N}}(t)$, $f_{1,s}^{\alpha}(t)$ and
$f_{1,s}^{N}(t)$ in terms of a finite sum plus a remainder. We use
the technique used by Bodineau, Gallagher and Saint-Raymond \cite{Bodineau2015}.
It consists in bounding the number of interactions in an interval
$[ih,(i+1)h]\,\,0\leq i<s$ by $2^{i}-1$ and send the time $h$ to
zero in a suitable way.

In literature there is another method, which is employed by Colangeli,
Pezzoti and Pulvirenti in \cite{Colangeli1975}, that consists in
taking $h$ smaller than the Lanford time of the convergence of the
series solutions and then bounding in a suitable way the number of
creations in each interval. We cannot use this method since in our
case the time of the convergence of the series is going to zero.

We can write the solution at time t of the GH as the evolution of
a time $h$ of the solution at time $t-h$
\begin{equation}
\widetilde{f_{1}^{N}}(t)=\sum_{j_{1}=0}^{\infty}G_{1,j_{1}}^{\epsilon}(h)\widetilde{f_{j_{1}+1}^{N}}(t-h)\label{eq:imp1}
\end{equation}
We introduce the Grad truncated series solution (GTS) by truncating
the series (\ref{eq:imp1}) at $j_{1}=2^{1}-1=1$. We obtain 
\begin{equation}
\widetilde{f_{1}^{N}}(t)=\sum_{j_{1}=0}^{1}G_{1,j_{1}}^{\epsilon}(h)\widetilde{f_{j_{1}+1}^{N}}(t-h)+\widetilde{R_{1,1}}(t-h,t)\label{eq:imp2}
\end{equation}
\begin{equation}
\widetilde{R_{1,1}}(t-h,t)=\sum_{j_{1}=2}^{\infty}G_{1,j_{1}}^{\epsilon}(h)\widetilde{f_{j_{1}+1}^{N}}(t-h)
\end{equation}
Now we can iterate this procedure on $\widetilde{f_{j_{1}+1}^{N}}(t-h)$.
We have that
\begin{equation}
\widetilde{f_{j_{1}+1}^{N}}(t-h)=\sum_{j_{2}=0}^{\infty}G_{j_{1}+1,j_{2}}^{\epsilon}(h)\widetilde{f_{j_{2}+1}^{N}}(t-h)
\end{equation}
 We truncate again the series at $j_{2}=2^{2}-1$ and we arrive to

\begin{equation}
\widetilde{f_{j_{1}+1}^{N}}(t-h)=\sum_{j_{2}=0}^{2^{2}-1}G_{j_{1}+1,j_{2}}^{\epsilon}(h)\widetilde{f_{j_{2}+1}^{N}}(t-2h)+\widetilde{R_{j_{1}+1,2}}(t-2h,t-h)\label{eq:imp3}
\end{equation}
where 
\begin{equation}
\widetilde{R_{j_{1}+1,2}}(t-2h,t-h)=\sum_{p=4}^{N-j_{1}-1}G_{j_{1}+1,2}^{\epsilon}(h)\widetilde{f_{j_{1+1}+p}^{N}}(t-2h)
\end{equation}
From (\ref{eq:imp3}) and (\ref{eq:imp2}) we have 
\begin{equation}
\widetilde{f_{1}^{N}}(t)=\sum_{j_{1}=0}^{1}\sum_{j_{2}=0}^{2^{2}-1}G_{1,j_{1}}^{\epsilon}(h)G_{j_{1}+1,j_{2}}^{\epsilon}(h)\widetilde{f_{j_{2}+1}^{N}}(t-2h)+\widetilde{R_{N}^{2}}(t)
\end{equation}
where $\widetilde{R_{N}^{2}}(t)$ takes into account the evolution
of the remainders of each truncation and reads as follows 
\begin{equation}
\widetilde{R_{N}^{2}}(t)=\widetilde{R_{1,1}}(t-h,t)+\sum_{j_{1}=0}^{1}G_{1,j_{1}}^{\epsilon}(h)\widetilde{R_{j_{1}+1,2}}(t-2h,t-h)
\end{equation}
We iterate this procedure with a sequence of cutoffs $2^{i}-1$, this
leads to
\begin{equation}
\widetilde{f_{1}^{N}}(t)=\widetilde{f_{1,s}^{N}}(t)+\widetilde{R_{N}^{s}}(t)
\end{equation}
where, denoting with $P_{i}=1+\sum_{k=1}^{i}j_{k}$ the number of
particles after i iterations, 
\begin{equation}
\widetilde{f_{1,s}^{N}}(t)=\sum_{j_{1}=0}^{1}...\sum_{j_{s}=0}^{2^{s}-1}G_{1,j_{1}}^{\epsilon}(h)G_{P_{1},j_{2}}^{\epsilon}(h)...G_{P_{s-1},j_{S}}^{\epsilon}(h)f_{0}^{N}
\end{equation}

\begin{equation}
\widetilde{R_{N}^{s}}(t)=\sum_{i=1}^{s}\sum_{j_{1}=0}^{1}...\sum_{j_{i-1}=0}^{2^{i-1}-1}G_{1,j_{1}}^{\epsilon}(h)G_{P_{1},j_{2}}^{\epsilon}(h)...G_{P_{i-2},j_{i-1}}^{\epsilon}(h)\tilde{R}_{P_{i-1},i}\left(t-ih,t-(i-1)h\right)\label{eq:imp4}
\end{equation}
\begin{equation}
\tilde{R}_{P_{i-1},i}\left(t-ih,t-(i-1)h\right)=\sum_{p=2^{i}}^{N-P_{i-1}}G_{P_{i-1},p}^{\epsilon}(h)\tilde{f}_{P_{i-1}+p}^{N}
\end{equation}

We use the same procedure for the series solution of the Boltzmann
hierarchy and we obtain the truncated Boltzmann solution (BTS)
\begin{equation}
f_{1,s}^{\alpha}(t)=\sum_{j_{1}=0}^{1}...\sum_{j_{s}=0}^{2^{s}-1}Q_{1,j_{1}}^{\alpha}(h)Q_{P_{1},j_{2}}^{\alpha}(h)...Q_{P_{s-1},j_{s}}^{\alpha}(h)f_{0}^{N}
\end{equation}

\begin{equation}
R^{s}(t)=\sum_{i=1}^{s}\sum_{j_{1}=0}^{1}...\sum_{j_{i-1}=0}^{2^{i-1}-1}Q_{1,P_{1}}^{\alpha}(h)Q_{P_{1},j_{2}}^{\alpha}(h)...Q_{P_{i-2},j_{i-1}}^{\alpha}(h)R_{P_{i-1},i}\left(t-ih,t-(i-1)h\right)\label{eq:imp5}
\end{equation}
\begin{equation}
R_{P_{i-1},i}\left(t-ih,t-(i-1)h\right)=\sum_{p=2^{i}}^{N-P_{i-1}}Q_{P_{i-1},p}^{\alpha}(h)f_{J_{i-1}+p}^{N}
\end{equation}
We also define the intermediate truncated solution (ITS) 
\begin{equation}
f_{1,s}^{N}(t)=\sum_{j_{1}=0}^{1}...\sum_{j_{s}=0}^{2^{s}-1}Q_{1,j_{1}}^{\epsilon}(h)Q_{P_{1},j_{2}}^{\epsilon}(h)...Q_{P_{s-1},j_{s}}^{\epsilon}(h)f_{0}^{N}
\end{equation}

Now we want to prove an estimate for the remainder term. 
\begin{thm}
\label{thm:Rem}Let $\widetilde{R_{N}^{s}}(t)$,$R^{s}(t)$ be defined
respectively as in (\ref{eq:imp4}) and (\ref{eq:imp5}). Then the
following estimate holds 
\begin{equation}
\Vert\widetilde{R_{N}^{s}}(t)\Vert_{\infty}+\Vert R^{s}(t)\Vert_{\infty}\leq\Vert g_{0}\Vert_{\infty}\left(\frac{C\left(\alpha t\right)^{2}}{s}\right)^{2}
\end{equation}
\end{thm}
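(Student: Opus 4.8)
The plan is to estimate the two remainder series $(\ref{eq:imp4})$ and $(\ref{eq:imp5})$ term by term, using only the continuity estimates of the previous subsection --- Lemma~\ref{lem:ConEst} for the operators $G_{j,n}^{\epsilon}$, and the analogous bounds for $Q_{j,n}^{\epsilon}$ and $Q_{j,n}^{\alpha}$ --- together with the a priori estimate $(\ref{eq:Apriori})$. Two facts will be used repeatedly. First, after dividing by $\Vert g_{0}\Vert_{\infty}$, the bound $(\ref{eq:Apriori})$ says that every marginal occurring at an innermost slot has $\Vert\cdot\Vert_{\beta}\le C^{k}\Vert g_{0}\Vert_{\infty}$, with $k$ the number of variables, uniformly in time; hence the shifted times $t-ih$ appearing in $(\ref{eq:imp4})$--$(\ref{eq:imp5})$ play no role. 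Second, Lemma~\ref{lem:ConEst} and its analogue say that a single block $G_{k,n}^{\epsilon}(h)$, $Q_{k,n}^{\epsilon}(h)$ or $Q_{k,n}^{\alpha}(h)$ applied to such a marginal is bounded in $\Vert\cdot\Vert_{\beta'}$, for any $\beta'<\beta$, by $(C_{1}\alpha h)^{n}\Vert g_{0}\Vert_{\infty}$, where the constant $C_{1}$ depends on the gap $\beta-\beta'$ but not on $k$.

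Next I would bound a single stage $i$ of $\widetilde{R_{N}^{s}}(t)$. At that stage there are at most $i\le s$ successive compositions of such blocks, so I split the admissible weight loss, say from $\beta$ down to $\beta/2$, into $i$ equal parts of size $\beta/(2i)$ and apply the continuity estimate at each composition; since the constant in Lemma~\ref{lem:ConEst} depends on the weight gap only through an inverse power, the resulting per-block constant grows at most polynomially in $i\le s$. Taking $h=t/s$ and $s$ large enough that this inflated $C_{1}\alpha h<1/2$ --- which amounts to $s$ exceeding a fixed multiple of $(\alpha t)^{2}$, and is the only place where the hypothesis $s\to\infty$ enters --- makes every geometric series below convergent. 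The innermost factor $\widetilde{R}_{P_{i-1},i}$ is a sum over $p\ge2^{i}$ of single blocks with $p$ creations, hence bounded by $\sum_{p\ge2^{i}}(C_{1}\alpha h)^{p}\Vert g_{0}\Vert_{\infty}\le2(C_{1}\alpha h)^{2^{i}}\Vert g_{0}\Vert_{\infty}$; the outer chain $G_{1,j_{1}}^{\epsilon}(h)\cdots G_{P_{i-2},j_{i-1}}^{\epsilon}(h)$ contributes, after summing each $j_{k}$ over $\{0,\dots,2^{k}-1\}$, a product of $i-1$ geometric sums each bounded by $2$, hence a factor at most $2^{i}$. Collecting, the $i$-th stage is at most $\Vert g_{0}\Vert_{\infty}\,2^{i}(C\alpha h)^{2^{i}}=\Vert g_{0}\Vert_{\infty}\,2^{i}(C\alpha t)^{2^{i}}s^{-2^{i}}$, and since the denominator $s^{2^{i}}$ grows super-exponentially in $i$ the sum over $i=1,\dots,s$ is dominated by its first term, giving $\Vert\widetilde{R_{N}^{s}}(t)\Vert_{\infty}\le C\Vert g_{0}\Vert_{\infty}(\alpha t)^{2}/s^{2}$ (here one also uses $\Vert\cdot\Vert_{\infty}\le\Vert\cdot\Vert_{\beta/2}$, valid since $H\ge0$), which is in turn bounded by $\Vert g_{0}\Vert_{\infty}\big(C(\alpha t)^{2}/s\big)^{2}$ in the regime $\alpha t\ge1$ relevant here. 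The estimate for $\Vert R^{s}(t)\Vert_{\infty}$ follows by the very same argument with $Q_{\cdot,\cdot}^{\alpha}$ in place of $G_{\cdot,\cdot}^{\epsilon}$, invoking the corresponding continuity estimate and the a priori bound for the Boltzmann marginals; adding the two contributions proves the claim.

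The one genuine point to be careful about is the bookkeeping of the Maxwellian weight along the chain of up to $s$ compositions: one must verify that splitting a fixed weight loss into $O(s)$ steps inflates the per-block constant only polynomially in $s$, so that $C_{1}\alpha h$ stays below $1/2$ once $s$ exceeds a multiple of $(\alpha t)^{2}$, and that the super-exponential gain $(C_{1}\alpha h)^{2^{i}}$ produced by the growing cutoffs $2^{i}-1$ still beats both this polynomial inflation and the combinatorial factor $2^{i}$. Everything else is a routine iteration of the continuity estimates already established, with the sum over stages controlled by its first term.
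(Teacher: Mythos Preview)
Your overall strategy --- bound each stage $i$ using the continuity estimate and the a~priori bound, then sum over $i$ --- is the right one and matches the paper. The one substantive difference is exactly at the point you flag as delicate: the bookkeeping of the weight along the outer chain $G_{1,j_{1}}^{\epsilon}(h)\cdots G_{P_{i-2},j_{i-1}}^{\epsilon}(h)$. You propose to split the available weight loss into $i$ slices of size $\beta/(2i)$ and apply Lemma~\ref{lem:ConEst} block by block; this forces the per-block constant to depend on $i$, and whether the resulting $C_{1}(i)\,\alpha h$ can be kept below $1/2$ hinges on the precise power with which $C_{1}$ depends on the gap --- something you assume is $1/2$ (hence the threshold $s\gtrsim(\alpha t)^{2}$) but do not verify. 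The paper bypasses this entirely: by the \emph{semigroup property} of the Grad series it rewrites the whole outer chain as a single operator $G_{1,P_{i-1}-1}^{\epsilon}\big((i-1)h\big)$ acting on $\tilde R_{P_{i-1},i}$, and then applies the steps of Lemma~\ref{lem:ConEst} to it with a \emph{single} weight loss $\beta\to\beta/2$, yielding $(C\alpha(i-1)h)^{P_{i-1}-1}\Vert\tilde R_{P_{i-1},i}\Vert_{\beta/2}$ with $C$ independent of $i$ and $s$. One further application to $\tilde R_{P_{i-1},i}$ gives $\sum_{p\ge2^{i}}(C\alpha h)^{p}\Vert g_{0}\Vert_{\infty}$, after which the sums over $j_{1},\dots,j_{i-1}$ (at most $2^{i(i-1)}$ terms), the bound $P_{i-1}-1\le2^{i}$, the inequality $i(i-1)\le2^{i}$, and the assumption $C(\alpha t)^{2}/s<1/2$ combine to produce the stated $(C(\alpha t)^{2}/s)^{2}$.

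In short, your outline is sound, but the device you are missing --- collapsing the chain so that only \emph{one} weight loss is needed for the outer part --- is precisely what removes the issue you yourself identify as ``the one genuine point to be careful about''. As a side remark, your bound $2^{i}$ for the outer chain (from the product of geometric series) is actually sharper than the paper's $(C\alpha t)^{P_{i-1}-1}$ in the regime $\alpha t\ge1$, and would give $C(\alpha t)^{2}/s^{2}$ rather than $(C(\alpha t)^{2}/s)^{2}$ --- but only once the weight-splitting is justified, which the semigroup trick makes unnecessary.
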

\begin{proof}
Thanks to the semigroup property we have that
\begin{equation}
\widetilde{R_{s}^{N}}(t)=\sum_{i=1}^{s}\sum_{j_{1}=0}^{1}...\sum_{j_{i-1}=0}^{2^{i-1}-1}G_{1,P_{i-1}-1}^{\epsilon}((i-1)h)\tilde{R}_{P_{i-1},i}\left(t-ih,t-(i-1)h\right)
\end{equation}
Now from the steps of Lemma \ref{lem:ConEst} it follows that 
\[
\Vert G_{1,P_{i-1}-1}^{\epsilon}((i-1)h)\tilde{R}_{P_{i-1},i}\left(t-ih,t-(i-1)h\right)\Vert_{\infty}\leq
\]
\begin{equation}
\left(C\alpha(i-1)h\right)^{P_{i-1}-1}\Vert\tilde{R}_{P_{i-1},i}\left(t-ih,t-(i-1)h\right)\Vert_{\frac{\beta}{2}}
\end{equation}
Furthermore we have that 
\begin{equation}
\Vert\tilde{R}_{P_{i-1},i}\left(t-ih,t-(i-1)h\right)\Vert_{\frac{\beta}{2}}\leq\sum_{p=2^{i}}^{N-P_{i-1}}\left(C\alpha h\right)^{p}\Vert\widetilde{f_{P_{i-1}+p}^{N}}\Vert_{\frac{\beta}{2}}\leq\Vert g_{0}\Vert_{\infty}\sum_{p=2^{i}}^{N-P_{i-1}}\left(C\alpha h\right)^{p}
\end{equation}
We use together the last two estimates and that
\[
C\alpha h<\frac{1}{2}
\]
and we arrive to

\[
\Vert\widetilde{R_{s}^{N}}(t)\Vert_{\infty}\leq\Vert g_{0}\Vert_{\infty}\sum_{i=1}^{s}\sum_{j_{1}=0}^{1}...\sum_{j_{i-1}=0}^{2^{i-1}-1}\left(C\alpha t\right)^{P_{i-1}-1}\left(C\alpha h\right)^{2^{i}}\leq
\]
\[
\Vert g_{0}\Vert_{\infty}\sum_{i=1}^{s}\sum_{j_{1}=0}^{1}...\sum_{j_{i-1}=0}^{2^{i-1}-1}\left(C\alpha t\right)^{2^{i}}\left(C\alpha h\right)^{2^{i}}\leq\Vert g_{0}\Vert_{\infty}\sum_{i=1}^{s}2^{i(i-1)}\left(C\left(\alpha t\right)\alpha h\right)^{2^{i}}
\]
\begin{equation}
\leq\Vert g_{0}\Vert_{\infty}\sum_{i=1}^{s}\left(\frac{C\left(\alpha t\right)^{2}}{s}\right)^{2^{i}}
\end{equation}
In the last step we used that $h=\frac{t}{s}$ and that $i(i-1)\leq2^{i}$.
Now we assume also that 
\begin{equation}
\frac{C\left(\alpha t\right)^{2}}{s}<\frac{1}{2}
\end{equation}
and we finally arrive to
\begin{equation}
\Vert\widetilde{R_{N}^{s}}(t)\Vert_{\infty}\leq\Vert g_{0}\Vert_{\infty}\left(\frac{C\left(\alpha t\right)^{2}}{s}\right)^{2}
\end{equation}
The estimate for $R^{s}(t)$ can be obtained in the same way.
\end{proof}
Thanks to Theorem \ref{thm:Rem} we can work directly on the truncated
series since we have an estimate on the remainders. We want to prove
that the GTS is close to the ITS as $\epsilon\rightarrow0$. We have 
\begin{thm}
\label{thm:ghih}Let $\widetilde{f_{1,s}^{N}}(t)$, $f_{1,s}^{N}(t)$
be respectively the solution of the first equation of GH and IH. Then
the following estimate holds for all $t\geq0$ 
\begin{equation}
\Vert\widetilde{f_{1,s}^{N}}(t)-f_{1,s}^{N}(t)\Vert_{\infty}\leq\Vert g_{0}\Vert_{\infty}2^{s(s+1)}\left(C\alpha t\right)^{2^{s+1}}\epsilon
\end{equation}
\end{thm}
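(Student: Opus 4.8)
The plan is to compare the Grad truncated series $\widetilde{f_{1,s}^{N}}(t)$ and the intermediate truncated series $f_{1,s}^{N}(t)$ block by block. The two expansions are built by iterating the Duhamel formula over the \emph{same} time blocks $[(i-1)h,ih]$, $h=t/s$, and a Grad block $G_{P_{i-1},j_{i}}^{\epsilon}(h)$ differs from the corresponding intermediate block $Q_{P_{i-1},j_{i}}^{\epsilon}(h)$ only through the ``clustering'' contributions of the Grad collision operator. Writing out $G_{p,n}^{\epsilon}(h)$ as in \eqref{eq:GSSOp} as a sum over the cluster sizes $m_{1},\dots,m_{n}\ge 0$ of strings $S^{\epsilon}A_{p+1+m_{1}}^{\epsilon}S^{\epsilon}\cdots A_{p+n+\sum m_{\ell}}^{\epsilon}S^{\epsilon}$, the term $m_{1}=\dots=m_{n}=0$ reproduces $Q_{p,n}^{\epsilon}(h)$ exactly: the prefactors $\binom{N-q-1}{0}(N-q)\epsilon^{2}$ multiply up to $(N-p)\cdots(N-p-n+1)\epsilon^{2n}$, and by the identity $A_{q+1}^{\epsilon}=C_{q+1}^{\epsilon}$ noted in Section 2 the kernels coincide. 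Thus $G_{p,n}^{\epsilon}(h)=Q_{p,n}^{\epsilon}(h)+D_{p,n}^{\epsilon}(h)$, where $D_{p,n}^{\epsilon}(h)$ collects the strings with $m_{1}+\dots+m_{n}\ge 1$; in particular every term of $D_{p,n}^{\epsilon}(h)$ carries at least one creation operator $A_{q+1+m}^{\epsilon}$ with $m\ge 1$.

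First I would telescope over the $s$ blocks. For $0\le k\le s$ let $\Psi_{k}(t)$ be the hybrid truncated solution obtained from $\widetilde{f_{1,s}^{N}}(t)$ by replacing $G_{P_{i-1},j_{i}}^{\epsilon}(h)$ by $Q_{P_{i-1},j_{i}}^{\epsilon}(h)$ for $i\le k$ and keeping the Grad operators for $i>k$, so that $\Psi_{0}(t)=\widetilde{f_{1,s}^{N}}(t)$ and $\Psi_{s}(t)=f_{1,s}^{N}(t)$. Then
\begin{equation}
\widetilde{f_{1,s}^{N}}(t)-f_{1,s}^{N}(t)=\sum_{k=1}^{s}\bigl(\Psi_{k-1}(t)-\Psi_{k}(t)\bigr),
\end{equation}
and $\Psi_{k-1}(t)-\Psi_{k}(t)$ is the sum over the multi-indices $(j_{1},\dots,j_{s})$ of a string of the form
\begin{equation}
Q_{1,j_{1}}^{\epsilon}(h)\cdots Q_{P_{k-2},j_{k-1}}^{\epsilon}(h)\; D_{P_{k-1},j_{k}}^{\epsilon}(h)\; G_{\cdot,j_{k+1}}^{\epsilon}(h)\cdots G_{\cdot,j_{s}}^{\epsilon}(h)\, f_{0}^{N}.
\end{equation}
The particle number carried by the blocks to the right of $D^{\epsilon}$ is shifted by the clusters created inside it, but this is harmless since all the continuity estimates below are uniform in the number of particles.

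The core of the argument is the bound for a single block difference $D_{p,n}^{\epsilon}(h)$. Here I would reuse the pointwise estimate established in the proof of Lemma \ref{lem:ConEst}, namely $e^{\beta'' H(\boldsymbol{z}_{q})}\,\bigl|A_{q+1+m}^{\epsilon}g_{q+1+m}(\boldsymbol{z}_{q})\bigr|\le C\,q\,\alpha\,(C\epsilon\alpha)^{m}\,\Vert g_{q+1+m}\Vert_{\beta}$ for $\beta''<\beta$ and $\epsilon$ small, together with the norm invariance $\Vert S_{q}^{\epsilon}(\tau)g\Vert_{\beta}=\Vert g\Vert_{\beta}$ of \eqref{eq:4ut2} (which holds because the Hamiltonian is conserved along $T^{\epsilon}$). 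For the factors with $m_{\ell}=0$ and for the flows this reproduces exactly the $(C\alpha h)^{n}$ bound of Lemma \ref{lem:ConEst}; but in $D_{p,n}^{\epsilon}(h)$ at least one factor carries a spare $(C\epsilon\alpha)^{m}\le C\epsilon\alpha$, while summing the geometric series over the remaining $m_{\ell}$'s and performing the $n$ time integrals over $[0,h]$ only costs constants, exactly as in \eqref{eq:4ut1}--\eqref{eq:4ut2}. Using a decreasing chain of norm indices between $\beta$ and $\beta/2$ to absorb the norm losses, this yields, for $g$ with $\Vert g\Vert_{\beta}\le C^{q}\Vert g_{0}\Vert_{\infty}$,
\begin{equation}
\Vert D_{p,n}^{\epsilon}(h)\,g\Vert_{\beta/2}\le \epsilon\,\bigl(C\alpha h\bigr)^{n}\,\Vert g\Vert_{\beta},
\end{equation}
that is, exactly one power of $\epsilon$ better than the bound one gets for $G_{p,n}^{\epsilon}(h)$ or $Q_{p,n}^{\epsilon}(h)$ alone (the surviving power of $\alpha$ is absorbed into $C$, using $\alpha\cong\sqrt{\log\log N}$ and $\epsilon$ small).

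Finally I would chain the estimates: the continuity bound for the intermediate operators $Q_{P_{i-1},j_{i}}^{\epsilon}(h)$ to the left of the difference block, the bound above for $D^{\epsilon}$, the continuity bound of Lemma \ref{lem:ConEst} for the Grad operators to the right, the invariance \eqref{eq:4ut2} of the flows, and $\Vert f_{0}^{N}\Vert_{\beta}\le C^{\,j}\Vert g_{0}\Vert_{\infty}$ (from the a priori bound \eqref{eq:Apriori}), all with suitably nested norm indices — which is possible since there are $s$ blocks and $O(2^{s+1})$ collisions, the attendant constants being absorbed into the prefactor $2^{s(s+1)}$. A fixed multi-index $(j_{1},\dots,j_{s})$ then contributes at most $\Vert g_{0}\Vert_{\infty}\,\epsilon\,(C\alpha h)^{\sum_{i}j_{i}}$. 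Since $0\le j_{i}\le 2^{i}-1$, the number of multi-indices equals $\prod_{i=1}^{s}2^{i}=2^{s(s+1)/2}$ and $\sum_{i}j_{i}\le\sum_{i=1}^{s}(2^{i}-1)<2^{s+1}$; bounding $(C\alpha h)^{\sum_{i}j_{i}}\le(C\alpha t)^{2^{s+1}}$ (recall $h=t/s\le t$) and multiplying by the $s$ telescoping terms gives
\begin{equation}
\Vert\widetilde{f_{1,s}^{N}}(t)-f_{1,s}^{N}(t)\Vert_{\infty}\le \Vert g_{0}\Vert_{\infty}\, s\, 2^{s(s+1)/2}\,\bigl(C\alpha t\bigr)^{2^{s+1}}\epsilon\le \Vert g_{0}\Vert_{\infty}\, 2^{s(s+1)}\,\bigl(C\alpha t\bigr)^{2^{s+1}}\epsilon,
\end{equation}
since $s\,2^{s(s+1)/2}\le 2^{s(s+1)}$. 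I expect the main obstacle to be the first part: the clean combinatorial bookkeeping of the clustering terms, and in particular verifying that the weights $\binom{N-q-1}{m}(N-q)\epsilon^{2}$ together with the $O(\epsilon^{3m})$ volume of the domains $\Delta_{m}(\boldsymbol{x}_{q+1})$ reorganize as $\alpha(C\epsilon\alpha)^{m}$ uniformly in the growing number of particles $q$, and that inserting a cluster does not spoil the norm invariance of the interacting flows $S^{\epsilon}$; once this is settled, the rest is a routine summation.
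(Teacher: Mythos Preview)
Your proposal is correct and follows essentially the same route as the paper: a telescoping over the $s$ time blocks, the identification of $G_{p,n}^{\epsilon}(h)-Q_{p,n}^{\epsilon}(h)$ with the clustering contributions $\sum m_{\ell}\ge 1$, and the key observation (drawn from the proof of Lemma~\ref{lem:ConEst}) that each such term carries an extra factor $(C\epsilon\alpha)$, producing the spare $\epsilon$. The only cosmetic difference is the order in the telescoping identity: the paper places the Grad operators \emph{before} the difference block and the intermediate operators \emph{after} (then invokes the semigroup property to collapse each string into a single operator over the combined time interval $(l-1)h$, resp.\ $(s-l)h$), whereas you put the $Q$'s before and the $G$'s after and bound block by block; both orderings are valid and lead to the same estimate.
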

\begin{proof}
The definition of the truncated solution series leads to 
\begin{equation}
\widetilde{f_{j,s}^{N}}(t)-f_{j,s}^{N}(t)=\sum_{j_{1}=0}^{2}...\sum_{j_{s}=0}^{2^{s}-1}\left[G_{1,j_{1}}^{\epsilon}(h)G_{P_{1},j_{2}}^{\epsilon}(h)...G_{P_{s-1},j_{S}}^{\epsilon}(h)-Q_{1,j_{1}}^{\epsilon}(h)Q_{P_{1},j_{2}}^{\epsilon}(h)...Q_{P_{s-1},j_{s}}^{\epsilon}(h)\right]f_{0}^{N}
\end{equation}
Now from the semigroup property and the identity 
\begin{equation}
a^{n}-b^{n}=\sum_{i=1}^{n}a^{i-1}(a-b)b^{n-i}
\end{equation}
we have
\[
G_{1,j_{1}}^{\epsilon}(h)G_{P_{1},j_{2}}^{\epsilon}(h)...G_{P_{s-1},j_{S}}^{\epsilon}(h)-Q_{1,j_{1}}^{\epsilon}(h)Q_{P_{1},j_{2}}^{\epsilon}(h)...Q_{P_{s-1},j_{s}}^{\epsilon}(h)=
\]
\begin{equation}
\sum_{l=1}^{s}G_{1,P_{l-1}-1}^{\epsilon}((l-1)h)\left[G_{P_{l-1},j_{l}}^{\epsilon}(h)-Q_{P_{l-1},j_{l}}^{\epsilon}(h)\right]Q_{P_{l},P_{s}-P_{l}}^{\epsilon}((s-l)h)\label{eq:Asybbg1}
\end{equation}
Since the operator $Q_{P_{l-1},j_{l}}^{\epsilon}(h)$ is the first
term not equal to zero in the asymptothic of the operator $G_{P_{l-1},j_{l}}^{\epsilon}(h)$
we obtain that

\[
G_{P_{l-1},j_{l}}^{\epsilon}(h)-Q_{P_{l-1},j_{l}}^{\epsilon}(h)f^{N}(0)=\underset{P_{l-1}+j_{l}+\sum_{i=1}^{j_{l}}m_{i}\leq N}{\sum_{m_{1},...,m_{j_{l}}\geq0,\,\sum_{i=1}^{j_{l}}m_{i}\neq0}}\intop_{0}^{h}dt_{1}...\intop_{0}^{t_{j_{l}-1}}dt_{j_{l}}
\]
\begin{equation}
S_{P_{l-1}}^{\epsilon}(h-t_{1})A_{P_{l-1}+1+m_{1}}^{\epsilon}S_{P_{l-1}+1+m_{1}}^{\epsilon}(t_{1}-t_{2})...A_{P_{l-1}+j_{l}+\sum_{i=1}^{j_{l}}m_{i}}^{\epsilon}S_{P_{l-1}+j_{l}+\sum_{i=1}^{j_{l}}m_{i}}^{\epsilon}(t_{j_{l}})f_{0,P_{l-1}+j_{l}+\sum_{i=1}^{j_{l}}m_{i}}^{N}
\end{equation}
 The same steps of Lemma \ref{lem:ConEst} lead to
\begin{equation}
\Vert G_{P_{l-1},j_{l}}^{\epsilon}(h)-Q_{P_{l-1},j_{l}}^{\epsilon}(h)f^{N}(0)\Vert_{\beta^{'}}\leq\left(C\alpha h\right)^{j_{l}}\Vert g_{0}\Vert_{\infty}\epsilon\label{eq:asybbg2}
\end{equation}
From (\ref{eq:asybbg2}) and (\ref{eq:Asybbg1}) we arrive to 

\begin{equation}
\sum_{l=1}^{s}\Vert G_{1,P_{l-1}-1}^{\epsilon}((l-1)h)\left[G_{P_{l-1},j_{l}}^{\epsilon}(h)-Q_{P_{l-1},j_{l}}^{\epsilon}(h)\right]Q_{P_{l},P_{s}-P_{l}}^{\epsilon}((s-l)h)f_{0}^{N,\epsilon}\Vert_{\infty}\leq\Vert g_{0}\Vert_{\infty}\epsilon\left(C\alpha t\right)^{P_{s}-1}
\end{equation}
We perform the sum over $j_{1},...,j_{s}$ and we finally have that
\begin{equation}
\Vert\widetilde{f_{1,s}^{N}}(t)-f_{1,s}^{N}(t)\Vert_{\infty}\leq\Vert g_{0}\Vert_{\infty}2^{s(s+1)}\epsilon\left(C\alpha t\right)^{2^{s+1}}
\end{equation}
\end{proof}
Thanks to this theorem we can reduce us to study only the convergence
of the ITS to the BTS. 

\pagebreak{}

\section{Convergence to Linear Boltzmann equation}

\subsection{The Boltzmann backward flow and the Interacting backward flow}

In this section we will represent in a convenient way the series (\ref{eq:ISS})
and (\ref{eq:BSS}) for the first-particle marginal. These series
solutions can be represented graphically as a trees expansion. We
define a n-collision tree graph as the following collection of integer
\begin{equation}
\Gamma(n)=\left\{ \left(i_{1},...,i_{n}\right)\in\mathbb{N}^{n}\,|\,i_{k}\leq k\right\} 
\end{equation}
Roughly speaking, this integer represent the label of the particle
that creates a new particle in a creation term. In Figure (\ref{fig:alberi})
we give a picture of the tree $(1,1,2)$. 

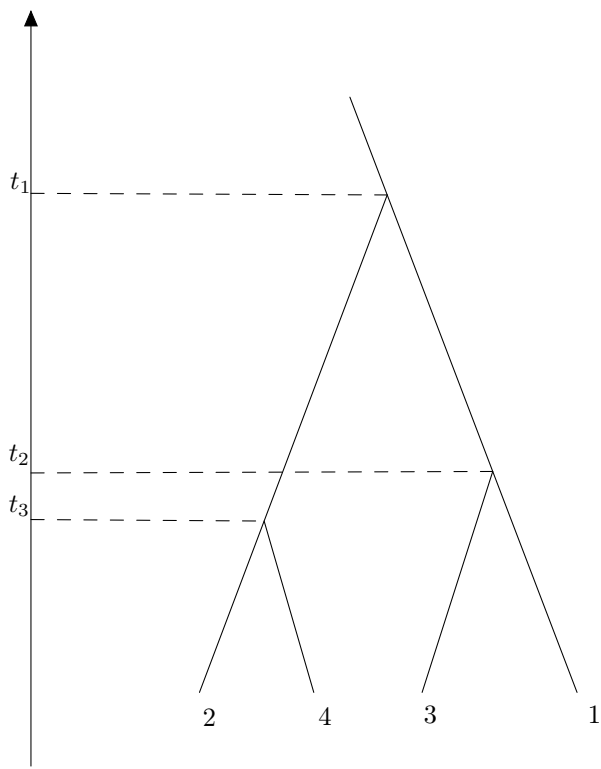
\begin{figure}[H]
\begin{tikzpicture}[line cap=round,line join=round,>=triangle 45,x=1.0cm,y=1.0cm]
\clip(-3.62,-4.2) rectangle (19.06,7.12);
\draw [->] (0.,-3.7) -- (0.,6.36);
\draw (4.24,5.2)-- (7.26,-2.72);
\draw (4.736252025142386,3.898570847970961)-- (2.24,-2.72);
\draw (6.13907402974106,0.21964691538106074)-- (5.2,-2.72);
\draw (3.0999317052606106,-0.43997424204024904)-- (3.76,-2.72);
\draw [dash pattern=on 5pt off 5pt] (0.,3.92)-- (4.736252025142386,3.898570847970961);
\draw [dash pattern=on 5pt off 5pt] (0.,0.2)-- (6.13907402974106,0.21964691538106074);
\draw [dash pattern=on 5pt off 5pt] (0.,-0.42)-- (3.0999317052606106,-0.43997424204024904);
\draw (7.28,-2.78) node[anchor=north west] {1};
\draw (2.16,-2.82) node[anchor=north west] {2};
\draw (5.1,-2.78) node[anchor=north west] {3};
\draw (3.7,-2.8) node[anchor=north west] {4};
\draw (-0.4,4.32) node[anchor=north west] {$t_1$};
\draw (-0.42,0.7) node[anchor=north west] {$t_2$};
\draw (-0.42,0.02) node[anchor=north west] {$t_3$};
\end{tikzpicture}

\caption{\label{fig:alberi}A representation of the tree graph $(1,1,2)$.
At the time $t_{1}$ we create the particle $2$ on the particle $1$.
Then at time $t_{2}$ we create the particle$3$ on the particle $1$.
Finally at time $t_{3}$ the particle $4$ is created on the particle
$2$.}
\end{figure}

We define the following collections of variables for the ITS

\begin{equation}
\boldsymbol{\sigma}_{n}=\left(\sigma_{1},...,\sigma_{n}\right)\,\,\,\,\,\,\sigma_{i}=\pm1\,\,\,\,\,\,\boldsymbol{\sigma}_{n}=\prod_{i=1}^{n}\sigma_{i}
\end{equation}

\begin{equation}
\boldsymbol{t}_{n}=\left(t_{1},...,t_{n}\right)
\end{equation}
\begin{equation}
\boldsymbol{w}_{n}=\left(w_{1},...,w_{n}\right)
\end{equation}
\begin{equation}
\boldsymbol{\nu}_{n}=\left(\nu_{1},...,\nu_{n}\right)
\end{equation}
Here $t_{1},...,t_{n}$ are the time variables appearing in the time
integrals, while $w_{i}$ and $\nu_{i}$ are the velocity and the
impact parameter that appears in the creation of the (i+1)-th particle.
Fixed these variables we can construct the interacting backwards flow
(IBF). We define the IBF at time $s\in\left(t_{k},t_{k+1}\right)$
as
\begin{equation}
\zeta^{\epsilon}(s)=\left(r_{1}^{\epsilon}(s),\xi_{1}^{\epsilon}(s),...,r_{1+k}^{\epsilon}(s),\xi_{1+k}^{\epsilon}(s)\right)
\end{equation}
where $r_{i}(s),\xi_{i}(s)$ are respectively position and velocity
of the i-Th particle at time $s$. At time $t$ we have that $\zeta^{\epsilon}(t)=\left(x_{1},v_{1}\right)$,
then we go back in time with the interacting flow defined as the solution
of equation (\ref{eq:Motioneq}). Between time $t$ and time $t-t_{1}$
we set $\zeta^{\epsilon}(s)=T_{1}^{\epsilon}(-s)\left(r_{1}^{\epsilon}(t),\xi_{1}^{\epsilon}(t)\right)$.
Then at time $t-t_{1}$ we create a new particle in position $r_{2}(t-t_{1})=r_{i_{1}}^{\epsilon}(t-t_{1})+\epsilon\nu_{1}$
with velocity $\xi_{2}(t-t_{1})=w_{1}$ in a pre-collisional state
if $\sigma_{1}=+1$ or in post collisional one if $\sigma_{1}=-1$.
Between time $t-t_{1}$ and $t-t_{1}-t_{2}$ we set the IBF as $\zeta^{\epsilon}(s)=T_{2}^{\epsilon}(-t+t_{1}+s)\left(r_{1}^{\epsilon}(t-t_{1}),\xi_{1}^{\epsilon}(t-t_{1}),r_{2}^{\epsilon}(t-t_{1}),\xi_{2}^{\epsilon}(t-t_{1})\right)$.
In this way we create a new particle at time $t-t_{1}-t_{2}$ in position
$r_{i_{2}}^{\epsilon}(t-t_{1}-t_{2})+\epsilon\nu_{2}$ with velocity
$w_{2}$ in pre-collisional or post-collisional configuration that
depends on $\sigma_{2}$. We iterate this procedure and we define
the IBF up to time $0$ by alternating the creation of new particles
with the interacting flow $T_{j}^{\epsilon}$ . For sake of simplicity
we define the following time variables 
\begin{equation}
\tau_{k}=t-\sum_{i=1}^{k}t_{i}
\end{equation}
With this definition $\tau_{k}$ are the backward times of a creation.

We can write the one particle marginal in a more manageable way thanks
to the IBF

\begin{equation}
f_{1,s}^{N}(t)=\sum_{j_{1}=0}^{1}...\sum_{j_{s}=0}^{2^{s}-1}\left(N-1\right)...\left(N-P_{s}-1\right)\left(\epsilon^{2}\right)^{P_{s}-1}\sum_{\Gamma(P_{s}-1)}\sum_{\sigma_{P_{s}-1}}\boldsymbol{\sigma}_{P_{s}-1}I_{\sigma_{P_{s}-1}}^{\epsilon}(z_{j},t)
\end{equation}
with
\begin{equation}
I_{\sigma_{P_{s}-1}}^{\epsilon}(z_{j},t)=\intop d\boldsymbol{t}_{P_{s-1}}d\boldsymbol{w}_{P_{s-1}}d\boldsymbol{\nu}_{P_{s-1}}\prod_{k=1}^{P_{s}-1}B^{\epsilon}\left(\nu_{k},w_{1+k}-\xi_{i_{k}}^{\epsilon}(\tau_{k})\right)\,f_{0,P_{s}}^{N}(\zeta^{\epsilon}(0))
\end{equation}
and 
\[
B^{\epsilon}\left(\nu_{k},w_{1+k}-\xi_{i_{k}}^{\epsilon}(\tau_{k})\right)=\vert\nu_{k}\cdot\left(w_{1+k}-\xi_{i_{k}}^{\epsilon}(\tau_{k})\right)\vert\chi\left\{ \vert r_{k+1}(\tau_{k})-r_{i_{k}}(\tau_{k})\vert>\epsilon\right\} \cdot
\]
\begin{equation}
\chi\left\{ \sigma_{k}\nu_{k}\cdot\left(w_{1+k}-\xi_{i_{k}}^{\epsilon}(\tau_{k})\right)\geq0\right\} 
\end{equation}

With a similar procedure we can build the Boltzmann backward flow
(BBF) but we have to take into account the following difference:
\begin{itemize}
\item The flow between two creation is the free flow and not the interacting
flow;
\item The new particle in each creation is created in the position of his
progenitor, i.e. $r_{i_{k}}(\tau_{k})=r_{k+1}(\tau_{k})$;
\item There is no constraint on $\nu_{k}$ other than the one implied by
the value of $\sigma_{k}$;
\item if $\sigma_{k}=+1$ before going back in time we have to change the
velocities from post collisional in pre-collisional according to the
scattering rules.
\end{itemize}
Taking into account these differences, we define the BBF at time $s\in\left(t_{k},t_{k+1}\right)$
as
\begin{equation}
\zeta(s)=\left(r_{1}(s),\xi_{1}(s),...,r_{1+k}(s),\xi_{1+k}(s)\right)
\end{equation}
We use the BBF to write the one particle marginal of the Boltzmann
equation as

\begin{equation}
f_{1,s}(t)=\sum_{j_{1}=0}^{1}...\sum_{j_{s}=0}^{2^{s}-1}\left(\alpha\right)^{P_{s}-1}\sum_{\Gamma(P_{s}-1)}\sum_{\sigma_{P_{s}-1}}\boldsymbol{\sigma}_{P_{s}-1}I_{\sigma_{P_{s}-1}}(z_{j},t)
\end{equation}
 
\begin{equation}
I_{\sigma_{P_{s}-1}}(z_{j},t)=\intop d\boldsymbol{t}_{P_{s-1}}d\boldsymbol{w}_{P_{s-1}}d\boldsymbol{\nu}_{P_{s-1}}\prod_{k=1}^{P_{s}-1}B\,f_{0,P_{s}}(\zeta(0))
\end{equation}
We also define the vectors of the only velocities at time $s\in\left(t_{k},t_{k+1}\right)$
as 
\begin{equation}
\xi^{\epsilon}(s)=\left(\xi_{1}^{\epsilon}(s),...,,\xi_{1+k}^{\epsilon}(s)\right)
\end{equation}
\begin{equation}
\xi(s)=\left(\xi_{1}(s),...,,\xi_{1+k}(s)\right)
\end{equation}

\begin{figure}[H]

\begin{tikzpicture}[line cap=round,line join=round,>=triangle 45,x=1.0cm,y=1.0cm]
\clip(-3.62,-4.2) rectangle (19.06,7.12);
\draw [->] (0.,-3.7) -- (0.,6.36);
\draw (4.24,5.2)-- (7.26,-2.72);
\draw (4.736252025142386,3.898570847970961)-- (2.24,-2.72);
\draw (6.13907402974106,0.21964691538106074)-- (5.2,-2.72);
\draw (3.0999317052606106,-0.43997424204024904)-- (3.76,-2.72);
\draw [dash pattern=on 5pt off 5pt] (0.,3.92)-- (4.736252025142386,3.898570847970961);
\draw [dash pattern=on 5pt off 5pt] (0.,0.2)-- (6.13907402974106,0.21964691538106074);
\draw [dash pattern=on 5pt off 5pt] (0.,-0.42)-- (3.0999317052606106,-0.43997424204024904);
\draw (7.28,-2.78) node[anchor=north west] {1};
\draw (2.16,-2.82) node[anchor=north west] {2};
\draw (5.1,-2.78) node[anchor=north west] {3};
\draw (3.7,-2.8) node[anchor=north west] {4};
\draw (-0.4,4.32) node[anchor=north west] {$t_1$};
\draw (-0.42,0.7) node[anchor=north west] {$t_2$};
\draw (-0.42,0.02) node[anchor=north west] {$t_3$};
\draw [dash pattern=on 1pt off 3pt on 5pt off 4pt] (3.653618915731883,-2.6960573188912624)-- (3.0130796020547397,-0.4834890955586059);
\draw [dash pattern=on 1pt off 3pt on 5pt off 4pt] (3.0130796020547397,-0.4834890955586059)-- (4.630644850540663,3.805328736330449);
\draw [dash pattern=on 1pt off 3pt on 5pt off 4pt] (4.594846449062731,3.8992106368950887)-- (4.138934844697329,5.094846367548729);
\end{tikzpicture}

\caption{We used a dashed line to evidence the virtual trajectory of the first
particle.}
\end{figure}
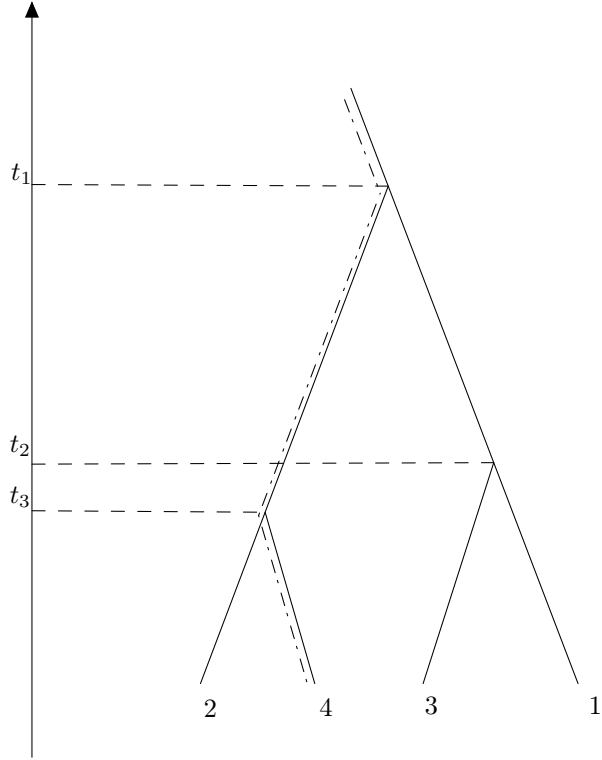

Now we define the virtual trajectory of the particle $i$ in the BBF
in an inductive way. We set $\zeta^{1}(s)=\left(r_{1}(s),\xi_{1}(s)\right)$,
then we define the inductive step 
\begin{equation}
\zeta^{i}(s)=\left(r^{i}(s),\xi^{i}(s)\right)=\begin{cases}
\left(r_{i}(s),\xi_{i}(s)\right) & s\in[0,t_{i-1}]\\
\left(r^{j}(s),\xi^{j}(s)\right) & s\in(t_{i-1},t]
\end{cases}
\end{equation}
where $j\in\{1,...,i-1\}$ is the progenitor of the particle $i$,
i.e. the particle where we create the particle $i$. With this definition
the virtual trajectory of a particle $i$ is its backward trajectory
until his creation, before of his creation it is the virtual trajectory
of its progenitors. 

\subsection{Estimate of the recolission }

We want to take advantage of the tree expansion to estimate the difference
between the intermediate truncated solution and the Boltzmann truncated
solution by estimating the difference between the IBF and the BBF.
The main difference between the IBF and the BBF are the recolission,
i.e. an interaction between particles which is not a creation. This
can happen only in the IBF and creates correlations. 

First we consider some cutoffs on the integration variables and estimate
the complementary term, denoting the various cutoff with an apex $Err_{i}$.
We establish some obvious estimates useful in the following. 
\begin{equation}
\sum_{j_{1}=0}^{1}...\sum_{j_{s}=0}^{2^{s}-1}1\leq2^{s(s+1)}
\end{equation}
\begin{equation}
P_{s}-1\leq\sum_{i=1}^{s}2^{i}\leq2^{s+1}
\end{equation}
\begin{equation}
\vert f_{0,P_{s}}^{N}(\zeta^{\epsilon}(0))\vert\leq Ce^{-\frac{\beta}{2}\vert\xi^{\epsilon}(0)\vert^{2}}
\end{equation}
\begin{equation}
\vert\vert f_{0,P_{s}}(\zeta(0))\vert\leq Ce^{-\frac{\beta}{2}\vert\xi(0)\vert^{2}}
\end{equation}
We also denote with $d\Lambda_{P_{s-1}}=d\boldsymbol{t}_{P_{s-1}}d\boldsymbol{w}_{P_{s-1}}d\boldsymbol{\nu}_{P_{s-1}}$.

First we estimate the error coming from the difference $\vert\alpha-N\epsilon^{2}\vert$.
\begin{lem}
\label{lem:Err1}Suppose that $\vert\alpha-N\epsilon^{2}\vert\leq\epsilon$
and let 
\begin{equation}
f_{1,s}^{N,Err_{1}}=\sum_{j_{1}=0}^{1}...\sum_{j_{s}=0}^{2^{s}-1}\left(\alpha\right)^{P_{s}-1}\sum_{\Gamma(P_{s}-1)}\sum_{\sigma_{P_{s}-1}}\boldsymbol{\sigma}_{P_{s}-1}\intop d\Lambda_{P_{s-1}}\prod_{k=1}^{P_{s}-1}B^{\epsilon}\,f_{0,P_{s}}^{N}(\zeta^{\epsilon}(0))
\end{equation}
Then 
\begin{equation}
\Vert f_{1,s}^{N}-f_{1,s}^{N,Err_{1}}\Vert_{\infty}\leq\Vert g_{0}\Vert_{\infty}\left(Ct\right)^{2^{s+1}}2^{s(s+1)}\epsilon
\end{equation}
\end{lem}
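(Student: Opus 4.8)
The plan is to exploit that $f_{1,s}^{N,Err_1}$ is obtained from $f_{1,s}^{N}(t)$ by a purely arithmetic change: every $n$-collision block carries the scaling prefactor $(N-j)\cdots(N-j-n+1)\epsilon^{2n}$ in $f_{1,s}^{N}$ and the prefactor $\alpha^{n}$ in $f_{1,s}^{N,Err_1}$, while the interacting backward flow $\zeta^\epsilon$, the collision kernels $B^{\epsilon}$ (hence the operators $C^{\epsilon}$) and the initial datum $f_{0,P_s}^{N}$ are literally the same. Writing $f_{1,s}^{N}(t)=\sum_{j_{1}=0}^{1}\cdots\sum_{j_{s}=0}^{2^{s}-1}Q_{1,j_{1}}^{\epsilon}(h)\cdots Q_{P_{s-1},j_{s}}^{\epsilon}(h)f_{0}^{N}$ and $f_{1,s}^{N,Err_1}=\sum_{j_{1}=0}^{1}\cdots\sum_{j_{s}=0}^{2^{s}-1}\widehat{Q}_{1,j_{1}}^{\epsilon}(h)\cdots\widehat{Q}_{P_{s-1},j_{s}}^{\epsilon}(h)f_{0}^{N}$, where $\widehat{Q}_{j,n}^{\epsilon}(h)$ is $Q_{j,n}^{\epsilon}(h)$ with its prefactor replaced by $\alpha^{n}$ (the IBF expansion of the hatted composition reproduces exactly the $\alpha^{P_s-1}$ of the statement, since the block prefactors $\alpha^{j_i}$ multiply to $\alpha^{P_s-1}$), I would apply the telescoping identity $a_{1}\cdots a_{s}-b_{1}\cdots b_{s}=\sum_{l=1}^{s}a_{1}\cdots a_{l-1}(a_{l}-b_{l})b_{l+1}\cdots b_{s}$ already used for Theorem~\ref{thm:ghih}, reducing matters to a single difference block $Q_{P_{l-1},j_{l}}^{\epsilon}(h)-\widehat{Q}_{P_{l-1},j_{l}}^{\epsilon}(h)$ squeezed between genuine and hatted operators.

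That difference block factors as a scalar times the common flow-and-collision operator $\mathrm{Int}_{P_{l-1},j_{l}}^{\epsilon}(h)$ (all the time integrals, the $S^{\epsilon}$'s and the $C^{\epsilon}$'s, with no prefactor):
\[
Q_{P_{l-1},j_{l}}^{\epsilon}(h)-\widehat{Q}_{P_{l-1},j_{l}}^{\epsilon}(h)=\Bigl[(N-P_{l-1})\cdots(N-P_{l-1}-j_{l}+1)\epsilon^{2j_{l}}-\alpha^{j_{l}}\Bigr]\,\mathrm{Int}_{P_{l-1},j_{l}}^{\epsilon}(h).
\]
For the operator factor, running the steps of the proof of Lemma~\ref{lem:ConEst} without the prefactor (which is $\ge(\alpha/2)^{j_{l}}$ once $\epsilon$ is small) gives $\Vert\mathrm{Int}_{P_{l-1},j_{l}}^{\epsilon}(h)g\Vert_{\beta'}\le(Ch)^{j_{l}}\Vert g\Vert_{\beta}$. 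For the scalar factor, the same telescoping on a product of $j_{l}$ terms yields $\bigl|(N-P_{l-1})\cdots(N-P_{l-1}-j_{l}+1)\epsilon^{2j_{l}}-\alpha^{j_{l}}\bigr|\le j_{l}(2\alpha)^{j_{l}-1}C\epsilon$: each single factor $(N-P_{l-1}-i)\epsilon^{2}$ differs from $\alpha$ by at most $|N\epsilon^{2}-\alpha|+(P_{l-1}+i)\epsilon^{2}\le\epsilon+P_{s}\epsilon^{2}\le2\epsilon$ for $\epsilon$ small (here the hypothesis $|\alpha-N\epsilon^{2}|\le\epsilon$ is used, together with $P_{s}-1\le2^{s+1}$ and $2^{s+1}\epsilon\to0$ in the regime $\alpha\cong\sqrt{\log\log N}$), while it is itself $\le N\epsilon^{2}\le2\alpha$. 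Hence $\Vert(Q_{P_{l-1},j_{l}}^{\epsilon}(h)-\widehat{Q}_{P_{l-1},j_{l}}^{\epsilon}(h))g\Vert_{\beta'}\le Cj_{l}\,\epsilon\,h\,(2C\alpha h)^{j_{l}-1}\Vert g\Vert_{\beta}$.

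It then remains to reassemble. Bounding the genuine and hatted blocks on either side by $(C\alpha h)^{j_{i}}$ (Lemma~\ref{lem:ConEst} and its companion for $Q^\epsilon$, chaining intermediate norms $\beta>\beta_{1}>\cdots>\beta'$ as in the proof of Theorem~\ref{thm:ghih}) and the initial datum by $\Vert f_{0,P_{s}}^{N}\Vert_{\beta}\le C\Vert g_{0}\Vert_{\infty}$, the $l$-th telescoping term is $\le Cj_{l}\,\epsilon\,h\,(2C\alpha h)^{P_{s}-2}\Vert g_{0}\Vert_{\infty}$. Since the choice of $s$ (the one already needed for Theorem~\ref{thm:Rem}) forces $2C\alpha h<1$, the factor $(2C\alpha h)^{P_{s}-2}$ is $\le1$; summing over $l$ (so $\sum_{l}j_{l}=P_{s}-1\le2^{s+1}$), then over $j_{1},\dots,j_{s}$ (at most $2^{s(s+1)}$ tuples), using $h=t/s\le t$ and absorbing the remaining powers of $2$ and of $C$ into the constant, one reaches
\[
\Vert f_{1,s}^{N}-f_{1,s}^{N,Err_1}\Vert_{\infty}\le\Vert g_{0}\Vert_{\infty}(Ct)^{2^{s+1}}2^{s(s+1)}\epsilon .
\]

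I expect the only genuinely delicate point to be the accounting that simultaneously keeps \emph{exactly one} power of $\epsilon$ and disposes of the diverging prefactor $\alpha^{j_{l}-1}$. The single $\epsilon$ survives because $|\alpha-N\epsilon^{2}|\le\epsilon$ controls each factor difference by $O(\epsilon)$, where it is essential that $2^{s+1}\epsilon\to0$ so that the correction $(P_{l-1}+i)\epsilon^{2}$ is truly of lower order. The diverging $\alpha$ is absorbed because the integral part of the difference block carries a compensating $h^{j_{l}}$, so the dangerous $\alpha$'s appear only in the combination $\alpha h$, and the smallness of $h$ forced by $s$ renders $(2C\alpha h)^{P_{s}-2}$ harmless; this is precisely why — unlike the estimates of Theorems~\ref{thm:Rem} and~\ref{thm:ghih} — the bound here contains no power of $\alpha$.
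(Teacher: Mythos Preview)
Your argument is correct, but it is considerably more elaborate than the paper's. The paper does not telescope block by block through the $Q^{\epsilon}$ operators; it works directly with the tree/IBF expansion, where the only difference between $f_{1,s}^{N}$ and $f_{1,s}^{N,Err_1}$ is the global scalar prefactor $(N-1)\cdots(N-P_s+1)\epsilon^{2(P_s-1)}$ versus $\alpha^{P_s-1}$ in front of one common integral. A single telescoping on that product of $P_s-1$ scalars gives
\[
\bigl|(N-1)\cdots(N-P_s+1)\epsilon^{2(P_s-1)}-\alpha^{P_s-1}\bigr|\le 2^{s+1}\alpha^{2^{s+1}}\,|\alpha-N\epsilon^2|,
\]
and multiplying by the continuity bound on the common integral yields $\|g_0\|_\infty 2^{s(s+1)}(C\alpha t)^{2^{s+1}}\epsilon$ in one line. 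What your longer route buys is the literal absence of $\alpha$ in the final bound: by organising the estimate so that every $\alpha$ appears only in the combination $\alpha h$ and then invoking $2C\alpha h<1$, you kill the factor $\alpha^{2^{s+1}}$ that the paper's direct method produces (and indeed the paper's own proof ends with $(C\alpha t)^{2^{s+1}}$, so the $(Ct)^{2^{s+1}}$ in the statement is presumably a typo). One small point: your reassembly actually yields roughly $\|g_0\|_\infty\, C\epsilon\, h\,(P_s-1)\,2^{s(s+1)}$, i.e.\ linear in $t$, which is \emph{stronger} than the displayed $(Ct)^{2^{s+1}}$ only when $Ct\ge 1$; you should either record the sharper estimate you actually proved or note explicitly that you are weakening it to match the statement in the regime $Ct\ge 1$.
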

\begin{proof}
We recall that
\[
f_{1,s}^{N}=\sum_{j_{1}=0}^{1}...\sum_{j_{s}=0}^{2^{s}-1}\left(N-1\right)...\left(N-P_{s}-1\right)\left(\epsilon^{2}\right)^{P_{s}-1}\sum_{\Gamma(P_{s}-1)}\sum_{\sigma_{P_{s}-1}}\boldsymbol{\sigma}_{P_{s}-1}\intop d\Lambda_{P_{s-1}}\prod_{k=1}^{P_{s}-1}B^{\epsilon}\,f_{0,P_{s}}^{N}(\zeta^{\epsilon}(0))
\]
and since 
\[
\vert\left(N-1\right)...\left(N-P_{s}-1\right)\left(\epsilon^{2}\right)^{P_{s}-1}-(\alpha)^{P_{s}-1}\vert\leq2^{s+1}\alpha^{2^{s+1}}\vert\alpha-N\epsilon^{2}\vert
\]
it results that
\begin{equation}
\Vert f_{1,s}^{N}-f_{1,s}^{N,Err_{1}}\Vert_{\infty}\leq\Vert g_{0}\Vert_{\infty}2^{s(s+1)^{2}}\left(C\alpha t\right)^{2^{s+1}}\epsilon
\end{equation}
\end{proof}
Next we control the terms $\prod_{k=1}^{P_{s}-1}B^{\epsilon}$ and
$\prod_{k=1}^{P_{s}-1}B$. For $\lambda\in\left(0,1\right)$ we define
the indicator function 
\begin{equation}
\chi_{\lambda}^{\epsilon}=\chi\left\{ \prod_{k=1}^{P_{s}-1}B^{\epsilon}\leq\epsilon^{-\lambda}\right\} 
\end{equation}
and 
\begin{equation}
\chi_{\lambda}=\chi\left\{ \prod_{k=1}^{P_{s}-1}B\leq\epsilon^{-\lambda}\right\} .
\end{equation}
The following lemma gives us an estimate for the complementary terms.
\begin{lem}
Let 
\begin{equation}
f_{1,s}^{N,Err_{2}}=\sum_{j_{1}=0}^{1}...\sum_{j_{s}=0}^{2^{s}-1}\left(\alpha\right)^{P_{s}-1}\sum_{\Gamma(P_{s}-1)}\sum_{\sigma_{P_{s}-1}}\boldsymbol{\sigma}_{P_{s}-1}\intop d\Lambda_{P_{s-1}}\chi_{\lambda}^{\epsilon}\prod_{k=1}^{P_{s}-1}B^{\epsilon}\,f_{0,P_{s}}^{N}(\zeta^{\epsilon}(0))
\end{equation}
 and 
\begin{equation}
f_{1,s}^{\alpha,Err_{2}}=\sum_{j_{1}=0}^{1}...\sum_{j_{s}=0}^{2^{s}-1}\left(\alpha\right)^{P_{s}-1}\sum_{\Gamma(P_{s}-1)}\sum_{\sigma_{P_{s}-1}}\boldsymbol{\sigma}_{P_{s}-1}\intop d\Lambda_{P_{s-1}}\chi_{\lambda}^{\epsilon}\prod_{k=1}^{P_{s}-1}B\,f_{0,P_{s}}(\zeta(0))
\end{equation}
Then
\begin{equation}
\Vert f_{1,s}^{N,Err_{1}}-f_{1,s}^{N,Err_{2}}\Vert_{\infty}+\Vert f_{1,s}^{\alpha}-f_{1,s}^{\alpha,Err_{2}}\Vert_{\infty}\leq\epsilon^{\lambda}\Vert g_{0}\Vert_{\infty}(C\alpha t)^{2^{(s+1)}}2^{s(s+1)}
\end{equation}
\end{lem}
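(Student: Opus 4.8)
The plan is to extract the factor $\epsilon^{\lambda}$ directly from the inequality defining the cutoff, paying for it with one extra power of each collision kernel, and then to re-run the continuity estimate of Lemma~\ref{lem:ConEst} with the squared kernels. First I would write the two differences as the same tree expansions as $f_{1,s}^{N,Err_{1}}$ and $f_{1,s}^{\alpha}$, but with the integrand multiplied by $1-\chi_{\lambda}^{\epsilon}$, resp.\ $1-\chi_{\lambda}$:
\begin{equation}
f_{1,s}^{N,Err_{1}}-f_{1,s}^{N,Err_{2}}=\sum_{j_{1}=0}^{1}\cdots\sum_{j_{s}=0}^{2^{s}-1}\alpha^{P_{s}-1}\sum_{\Gamma(P_{s}-1)}\sum_{\sigma_{P_{s}-1}}\boldsymbol{\sigma}_{P_{s}-1}\intop d\Lambda_{P_{s-1}}\big(1-\chi_{\lambda}^{\epsilon}\big)\prod_{k=1}^{P_{s}-1}B^{\epsilon}\,f_{0,P_{s}}^{N}(\zeta^{\epsilon}(0)),
\end{equation}
and the analogue for the Boltzmann term with $B$, $\zeta(0)$ and $f_{0,P_{s}}$. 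On the support of $1-\chi_{\lambda}^{\epsilon}$ one has $\prod_{k}B^{\epsilon}>\epsilon^{-\lambda}$, so $1<\epsilon^{\lambda}\prod_{k}B^{\epsilon}$ and hence $\big(1-\chi_{\lambda}^{\epsilon}\big)\prod_{k}B^{\epsilon}\le\epsilon^{\lambda}\big(\prod_{k}B^{\epsilon}\big)^{2}$; the same holds on the Boltzmann side with $B$ and $\chi_{\lambda}$.

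Next I would bound $\intop d\Lambda_{P_{s-1}}\big(\prod_{k}B^{\epsilon}\big)^{2}\,|f_{0,P_{s}}^{N}(\zeta^{\epsilon}(0))|$ by the same Gaussian integrability argument used in the proof of Lemma~\ref{lem:ConEst}. The only change is that each factor $|\nu_{k}\cdot(w_{1+k}-\xi_{i_{k}}^{\epsilon}(\tau_{k}))|$ now appears squared; since the second moment of the collision kernel is still integrable over $S^{2}\times\mathbb{R}^{3}$ against the Maxwellian weight $e^{-\frac{\beta}{2}|\xi^{\epsilon}(0)|^{2}}$ inherited from $f_{0,P_{s}}^{N}$ — the velocities inside the kernels being controlled, as usual, through conservation of energy along the interacting (resp.\ free) flow — each of the $P_{s}-1$ integrations in $(w_{k},\nu_{k})$ still yields a bounded constant. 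The pre/post-collisional indicators and the exclusion indicator $\chi\{|r_{k+1}(\tau_{k})-r_{i_{k}}(\tau_{k})|>\epsilon\}$ are simply discarded, the $\sigma_{k}=-1$ branches are handled by the measure-preserving change of variables of Lemma~\ref{lem:PresLeb}, and the nested time integrals cancel the cardinality of $\Gamma(P_{s}-1)$ exactly as before. Collecting the prefactor $\alpha^{P_{s}-1}$, the factor $2^{P_{s}-1}$ from the sum over $\boldsymbol{\sigma}_{P_{s}-1}$, the factor $2^{s(s+1)}$ from the sums over $j_{1},\dots,j_{s}$, and using $P_{s}-1\le2^{s+1}$, one reproduces the bound already available for $\Vert f_{1,s}^{N,Err_{1}}\Vert_{\infty}$ and $\Vert f_{1,s}^{\alpha}\Vert_{\infty}$ multiplied by the extra $\epsilon^{\lambda}$, i.e.\ $\epsilon^{\lambda}\Vert g_{0}\Vert_{\infty}(C\alpha t)^{2^{s+1}}2^{s(s+1)}$; adding the two contributions gives the claim.

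The step that requires the most care — though nothing conceptually new — is the second one: one has to check that squaring all $P_{s}-1$ collision kernels does not destroy the Gaussian integrability \emph{uniformly} in the collision tree and in $\epsilon$, i.e.\ that the energy-conservation estimates controlling the velocities $\xi_{i_{k}}^{\epsilon}(\tau_{k})$ appearing inside the kernels in terms of the integration variables $w_{j}$ survive the extra power, and that discarding the exclusion indicator is harmless at this stage. Granted the single-kernel bound underlying Lemma~\ref{lem:ConEst}, its squared analogue is a routine modification and the combinatorics of the truncated tree expansion are unchanged, so each error term inherits the bound of $f_{1,s}^{N,Err_{1}}$, resp.\ $f_{1,s}^{\alpha}$, with the gain $\epsilon^{\lambda}$.
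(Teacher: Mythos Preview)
Your proposal is correct and follows essentially the same approach as the paper: extract the gain $\epsilon^{\lambda}$ via the Chebyshev-type inequality $1<\epsilon^{\lambda}\prod_{k}B^{\epsilon}$ on the support of $1-\chi_{\lambda}^{\epsilon}$, and then control the resulting squared product of kernels against the Maxwellian weight coming from $f_{0,P_{s}}^{N}$ (resp.\ $f_{0,P_{s}}$), using energy conservation along the flow to bound the velocities $\xi_{i_{k}}^{\epsilon}(\tau_{k})$ in terms of the integration variables. The paper makes the squared-kernel bound slightly more explicit by writing $\sum_{\Gamma(P_{s}-1)}\big|\prod_{k}B^{\epsilon}\big|^{2}\le 2^{P_{s}-1}\prod_{k}\big[(P_{s})v_{k+1}^{2}+\sum_{i}v_{i}^{2}\big]$ and splitting the Gaussian, but this is exactly the ``routine modification'' you describe.
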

\begin{proof}
We prove the estimate only for $B^{\epsilon}$, the one for $B$ can
be obtained along the same lines. We have that 

\[
\vert f_{1,s}^{N}-f_{1,s}^{N,\lambda}\vert\leq\sum_{j_{1}=0}^{1}...\sum_{j_{s}=0}^{2^{s}-1}\left(C\alpha\right)^{P_{s}-1}\sum_{\Gamma(P_{s}-1)}\intop d\Lambda_{P_{s-1}}\chi\left\{ \prod_{k=1}^{P_{s}-1}B^{\epsilon}>\epsilon^{-\lambda}\right\} \vert\prod_{k=1}^{P_{s}-1}B^{\epsilon}\,f_{0,P_{s}}^{N}(\zeta^{\epsilon}(0))\vert\leq
\]
\begin{equation}
\epsilon^{\lambda}\Vert g_{0}\Vert_{\infty}\sum_{j_{1}=0}^{1}...\sum_{j_{s}=0}^{2^{s}-1}\left(C\alpha\right)^{P_{s}-1}\sum_{\Gamma(P_{s}-1)}\intop d\Lambda_{P_{s-1}}\vert\prod_{k=1}^{P_{s}-1}B^{\epsilon}\vert^{2}\,e^{-\frac{\beta}{2}\vert\xi^{\epsilon}(0)\vert^{2}}
\end{equation}
where we used that $1=\epsilon^{-\lambda}\epsilon^{\lambda}\leq\epsilon^{\lambda}\prod_{k=1}^{P_{s}-1}B^{\epsilon}$.
Now we observe that 

\begin{equation}
\sum_{\Gamma(P_{s}-1)}\vert\prod_{k=1}^{P_{s}-1}B^{\epsilon}\vert^{2}\leq2^{P_{s}-1}\prod_{k=1}^{P_{s}-1}\left(P_{s}\right)v_{k+1}^{2}+\sum_{i=1}^{P_{s}}v_{i}^{2}
\end{equation}
Therefore:
\begin{alignat}{1}
\vert f_{1,s}^{N}-f_{1,s}^{N,\lambda}\vert\leq & \epsilon^{\lambda}\Vert g_{0}\Vert_{\infty}(C\alpha)^{2^{(s+1)}}\sum_{j_{1}=0}^{1}...\sum_{j_{s}=0}^{2^{s}-1}\intop d\Lambda_{P_{s-1}}e^{-\frac{\beta}{4}\vert\xi^{\epsilon}(0)\vert^{2}}\nonumber \\
 & \prod_{k=1}^{P_{s}-1}\left[\left(P_{s}\right)v_{k+1}^{2}e^{-\frac{\beta}{4}\vert v_{k+1}\vert^{2}}+\sum_{i=1}^{P_{s}}v_{i}^{2}e^{-\frac{\beta}{4P_{s}}\vert v_{i}\vert^{2}}\right]\nonumber \\
\leq & \epsilon^{\lambda}\Vert g_{0}\Vert_{\infty}(C\alpha t)^{2^{(s+1)}}2^{s(s+1)}.
\end{alignat}
\end{proof}
The next step is to consider an energy cutoff. We define 
\begin{equation}
\chi_{\lambda,E}^{\epsilon}=\chi\left\{ \prod_{k=1}^{P_{s}-1}B^{\epsilon}\leq\epsilon^{-\lambda}\right\} \chi\left\{ \vert\xi^{\epsilon}(0)\vert\leq2E\right\} 
\end{equation}
 and 
\begin{equation}
\chi_{\lambda,E}=\chi\left\{ \prod_{k=1}^{P_{s}-1}B\leq\epsilon^{-\lambda}\right\} \chi\left\{ \vert\xi(0)\vert\leq2E\right\} 
\end{equation}
The following estimate holds true
\begin{lem}
Let 
\begin{equation}
f_{1,s}^{N,Err_{3}}=\sum_{j_{1}=0}^{1}...\sum_{j_{s}=0}^{2^{s}-1}\left(\alpha\right)^{P_{s}-1}\sum_{\Gamma(P_{s}-1)}\sum_{\sigma_{P_{s}-1}}\boldsymbol{\sigma}_{P_{s}-1}\intop d\Lambda_{P_{s-1}}\chi_{\lambda,E}^{\epsilon}\prod_{k=1}^{P_{s}-1}B^{\epsilon}\,f_{0,P_{s}}^{N}(\zeta^{\epsilon}(0))
\end{equation}
 and let 
\begin{equation}
f_{1,s}^{\alpha,Err_{3}}=\sum_{j_{1}=0}^{1}...\sum_{j_{s}=0}^{2^{s}-1}\left(\alpha\right)^{P_{s}-1}\sum_{\Gamma(P_{s}-1)}\sum_{\sigma_{P_{s}-1}}\boldsymbol{\sigma}_{P_{s}-1}\intop d\Lambda_{P_{s-1}}\chi_{\lambda,E}\prod_{k=1}^{P_{s}-1}B\,f_{0,P_{s}}(\zeta(0))
\end{equation}
Then it results:
\begin{equation}
\Vert f_{1,s}^{N,Err_{2}}-f_{1,s}^{N,Err_{3}}\Vert_{\infty}+\Vert f_{1,s}^{\alpha,Err_{2}}-f_{1,s}^{\alpha,Err_{3}}\Vert_{\infty}\leq\Vert g_{0}\Vert_{\infty}e^{-\beta E^{2}}\left(C\alpha t\right)^{2^{s+1}}2^{s(s+1)}
\end{equation}
\end{lem}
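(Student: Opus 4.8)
The plan is to notice that $f_{1,s}^{N,Err_2}-f_{1,s}^{N,Err_3}$ and $f_{1,s}^{\alpha,Err_2}-f_{1,s}^{\alpha,Err_3}$ are precisely the parts of the tree expansions carrying the complementary indicator, since $\chi_{\lambda,E}^{\epsilon}=\chi_{\lambda}^{\epsilon}\,\chi\{|\xi^{\epsilon}(0)|\le 2E\}$ (and likewise for the Boltzmann flow); hence the differences carry $\chi_{\lambda}^{\epsilon}\,\chi\{|\xi^{\epsilon}(0)|>2E\}$, respectively $\chi_{\lambda}\,\chi\{|\xi(0)|>2E\}$. On that high-energy region I will extract a Gaussian smallness factor $e^{-\beta E^{2}}$ from the initial datum and bound what remains by the same mechanism as in Lemma~\ref{lem:ConEst}. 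I carry out the estimate for the interacting backward flow; the Boltzmann backward flow is treated identically and is in fact simpler, since it consists only of free transport and instantaneous elastic collisions, so no recollision enters.

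First, by the form of the initial datum and positivity of $\Phi$ one has $f_{0,P_{s}}^{N}(\zeta^{\epsilon}(0))=M_{N,\beta}(\zeta^{\epsilon}(0))g_{0}(\cdot)\le C\|g_{0}\|_{\infty}e^{-\beta H_{N}(\zeta^{\epsilon}(0))}$. Along the IBF the Hamiltonian is conserved between two consecutive creations, and each creation places the new particle at inter-particle distance equal to the interaction range, where $\Phi$ vanishes, so it raises the total energy only by the kinetic energy $\tfrac12|w_{k}|^{2}$ of the created particle; this is the energy bookkeeping of \cite{Pulvirenti2014}, which gives $H_{N}(\zeta^{\epsilon}(0))=\tfrac12|v_{1}|^{2}+\tfrac12\sum_{k=1}^{P_{s}-1}|w_{k}|^{2}$, hence $f_{0,P_{s}}^{N}(\zeta^{\epsilon}(0))\le C\|g_{0}\|_{\infty}e^{-\frac{\beta}{2}(|v_{1}|^{2}+\sum_{k}|w_{k}|^{2})}$. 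Since the potential energy is nonnegative, $|\xi^{\epsilon}(0)|^{2}\le|v_{1}|^{2}+\sum_{k}|w_{k}|^{2}$, so on $\{|\xi^{\epsilon}(0)|>2E\}$ we get $|v_{1}|^{2}+\sum_{k}|w_{k}|^{2}>4E^{2}$ and therefore
\begin{equation}
e^{-\frac{\beta}{2}\left(|v_{1}|^{2}+\sum_{k}|w_{k}|^{2}\right)}\le e^{-\beta E^{2}}\,e^{-\frac{\beta}{4}\left(|v_{1}|^{2}+\sum_{k}|w_{k}|^{2}\right)}.
\end{equation}

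It then remains to bound, uniformly in $z_{1}$, the sum $\sum_{j_{1}=0}^{1}\cdots\sum_{j_{s}=0}^{2^{s}-1}\alpha^{P_{s}-1}\sum_{\Gamma(P_{s}-1)}\sum_{\sigma}\int d\Lambda_{P_{s-1}}\,\chi_{\lambda}^{\epsilon}\prod_{k=1}^{P_{s}-1}B^{\epsilon}\cdot e^{-\frac{\beta}{4}(|v_{1}|^{2}+\sum_{k}|w_{k}|^{2})}$, where $\chi_{\lambda}^{\epsilon}\le1$. I bound each $B^{\epsilon}$ by $|w_{k}|+|\xi_{i_{k}}^{\epsilon}(\tau_{k})|$ and use energy conservation once more, $|\xi_{i_{k}}^{\epsilon}(\tau_{k})|\le|\xi^{\epsilon}(\tau_{k})|\le(|v_{1}|^{2}+\sum_{j<k}|w_{j}|^{2})^{1/2}$, so that $\prod_{k}B^{\epsilon}$ is a polynomial in $|v_{1}|,|w_{1}|,\dots,|w_{P_{s}-1}|$ that is absorbed by the residual Gaussians. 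Then the $w_{k}$-integrals contribute $C^{P_{s}}$, the $\nu_{k}$-integrals over $S^{2}$ contribute $(4\pi)^{P_{s}-1}$, and the time integrals over the simplex $\{0<t_{P_{s}-1}<\dots<t_{1}<t\}$ give $t^{P_{s}-1}/(P_{s}-1)!$; the sum over $\sigma\in\{\pm1\}^{P_{s}-1}$ gives $2^{P_{s}-1}$, the sum over $\Gamma(P_{s}-1)$ gives $(P_{s}-1)!$ which cancels the simplex denominator, and the sum over $j_{1},\dots,j_{s}$ is at most $2^{s(s+1)}$. With $P_{s}-1\le2^{s+1}$ this yields $\|g_{0}\|_{\infty}e^{-\beta E^{2}}(C\alpha t)^{2^{s+1}}2^{s(s+1)}$, and the Boltzmann term is estimated along the same lines.

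The only mildly delicate point is the second use of energy conservation: the kernels $B^{\epsilon}$ involve the velocities $\xi_{i_{k}}^{\epsilon}(\tau_{k})$ of the interacting flow at the creation instants rather than the sampling velocities $w_{k}$, so one must invoke that the interacting dynamics conserves energy (and that creations take place where $\Phi=0$) in order to reduce $\prod_{k}B^{\epsilon}$ to a polynomial weight dominated by the Gaussian. Apart from this, the argument is a routine repetition of the bookkeeping already carried out in Lemma~\ref{lem:ConEst}; I do not expect any real obstruction, since the recollisions, which are the genuine difficulty elsewhere, are irrelevant to a pure size estimate of this type.
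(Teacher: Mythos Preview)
Your proof is correct and follows essentially the same approach as the paper's. The paper's argument is extremely terse: it simply writes $e^{-\frac{\beta}{2}|\xi^{\epsilon}(0)|^{2}}\le e^{-\frac{\beta}{4}|\xi^{\epsilon}(0)|^{2}}e^{-\beta E^{2}}$ on the high-energy set and then declares the remaining integral bounded by $(C\alpha t)^{2^{s+1}}2^{s(s+1)}$, implicitly invoking the continuity-estimate machinery of Lemma~\ref{lem:ConEst}. You unpack precisely the energy bookkeeping that this machinery hides---namely that along the IBF the Hamiltonian is conserved between creations and that each creation, occurring at the boundary of the interaction range, adds only the kinetic energy $\tfrac12|w_k|^{2}$---so that the Gaussian in $|\xi^{\epsilon}(0)|$ can be traded for a Gaussian in the integration variables $v_{1},w_{1},\dots,w_{P_{s}-1}$. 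This is the same mechanism, just made explicit.
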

\begin{proof}
We give a proof only for $\Vert f_{1,s}^{N,Err_{2}}-f_{1,s}^{N,Err_{3}}\Vert_{\infty}$,
the other one can be proved in the same way. We have that

\[
\vert f_{1,s}^{N,Err_{2}}-f_{1,s}^{N,Err_{3}}\vert\leq\Vert g_{0}\Vert_{\infty}\sum_{j_{1}=0}^{1}...\sum_{j_{s}=0}^{2^{s}-1}\left(C\alpha\right)^{P_{s}-1}\sum_{\Gamma(P_{s}-1)}\intop d\Lambda_{P_{s-1}}\,\prod_{k=1}^{P_{s}-1}B^{\epsilon}\,\chi\left\{ \vert\xi^{\epsilon}(0)\vert>2E\right\} e^{-\frac{\beta}{2}\vert\xi^{\epsilon}(0)\vert^{2}}
\]
\begin{equation}
\leq\Vert g_{0}\Vert_{\infty}e^{-\beta E^{2}}\left(C\alpha t\right)^{2^{s+1}}2^{s(s+1)}
\end{equation}
where we used that $e^{-\frac{\beta}{2}\vert\xi^{\epsilon}(0)\vert^{2}}\le e^{-\frac{\beta}{4}\vert\xi^{\epsilon}(0)\vert^{2}}e^{-\beta E^{2}}$.
\end{proof}
The next cutoff regards the time variables. We want to separate enough
the time between two creation, i.e. we want that $t_{i}-t_{i-1}>\delta$
$\forall\,0<i\leq P_{s}-1$. We define 
\begin{equation}
\chi_{\lambda,E,\delta}^{\epsilon}=\chi\left\{ \prod_{k=1}^{P_{s}-1}B^{\epsilon}\leq\epsilon^{-\lambda}\right\} \chi\left\{ \vert\xi^{\epsilon}(0)\vert\leq2E\right\} \chi\left\{ t_{i}-t_{i-1}>\delta,\,0<i\leq P_{s}-1\right\} 
\end{equation}
 and 
\begin{equation}
\chi_{\lambda,E,\delta}=\chi\left\{ \prod_{k=1}^{P_{s}-1}B\leq\epsilon^{-\lambda}\right\} \chi\left\{ \vert\xi(0)\vert\leq2E\right\} \chi\left\{ t_{i}-t_{i-1}>\delta,\,0<i\leq P_{s}-1\right\} 
\end{equation}
For the complementary set we have the following lemma
\begin{lem}
Let 
\begin{equation}
f_{1,s}^{N,Err_{4}}=\sum_{j_{1}=0}^{1}...\sum_{j_{s}=0}^{2^{s}-1}\left(\alpha\right)^{P_{s}-1}\sum_{\Gamma(P_{s}-1)}\sum_{\sigma_{P_{s}-1}}\boldsymbol{\sigma}_{P_{s}-1}\intop d\Lambda_{P_{s-1}}\chi_{\lambda,E,\delta}^{\epsilon}\prod_{k=1}^{P_{s}-1}B^{\epsilon}\,f_{0,P_{s}}^{N}(\zeta^{\epsilon}(0))
\end{equation}
 and let 
\begin{equation}
f_{1,s}^{\alpha,Err_{4}}=\sum_{j_{1}=0}^{1}...\sum_{j_{s}=0}^{2^{s}-1}\left(\alpha\right)^{P_{s}-1}\sum_{\Gamma(P_{s}-1)}\sum_{\sigma_{P_{s}-1}}\boldsymbol{\sigma}_{P_{s}-1}\intop d\Lambda_{P_{s-1}}\chi_{\lambda,E,\delta}\prod_{k=1}^{P_{s}-1}B\,f_{0,P_{s}}(\zeta(0))
\end{equation}
Then the following estimate holds

\begin{equation}
\Vert f_{1,s}^{N,Err_{3}}-f_{1,s}^{N,Err_{4}}\Vert_{\infty}+\Vert f_{1,s}^{\alpha,Err_{3}}-f_{1,s}^{\alpha,Err_{5}}\Vert_{\infty}\leq\epsilon^{-\lambda}\Vert g_{0}\Vert_{\infty}\left(C\alpha t\right)^{2^{s+1}}2^{(s+2)(s+1)}\frac{\delta}{t}
\end{equation}
\end{lem}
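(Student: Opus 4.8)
The plan is to bound the difference by the phase-space volume of the region that is cut away — namely the time configurations in which two consecutive creations occur within a time $\delta$ — every other ingredient being handled exactly as in the three preceding lemmas. Since $\chi_{\lambda,E}^{\epsilon}-\chi_{\lambda,E,\delta}^{\epsilon}=\chi_{\lambda,E}^{\epsilon}\,\chi\{\exists\,i:\,|t_{i}-t_{i-1}|\le\delta\}$, and likewise for the free-flow quantities, linearity of the tree expansion together with the a priori bound $|f_{0,P_{s}}^{N}(\zeta^{\epsilon}(0))|\le\|g_{0}\|_{\infty}C^{P_{s}}e^{-\frac{\beta}{2}|\xi^{\epsilon}(0)|^{2}}$ used in the earlier estimates give
\begin{equation}
|f_{1,s}^{N,Err_{3}}-f_{1,s}^{N,Err_{4}}|\le\|g_{0}\|_{\infty}\sum_{j_{1}=0}^{1}\cdots\sum_{j_{s}=0}^{2^{s}-1}(C\alpha)^{P_{s}-1}\sum_{\Gamma(P_{s}-1)}\sum_{\boldsymbol{\sigma}_{P_{s}-1}}\intop d\Lambda_{P_{s-1}}\,\chi_{\lambda,E}^{\epsilon}\,\chi\{\exists\,i:\,|t_{i}-t_{i-1}|\le\delta\}\prod_{k=1}^{P_{s}-1}B^{\epsilon}\,e^{-\frac{\beta}{2}|\xi^{\epsilon}(0)|^{2}},
\end{equation}
together with the same inequality for the Boltzmann term, with $B^{\epsilon},\xi^{\epsilon},\zeta^{\epsilon}$ replaced by $B,\xi,\zeta$.

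Next I would dispose of the kernels and of the velocity and impact-parameter integrations. On the support of $\chi_{\lambda,E}^{\epsilon}$ one has $\prod_{k=1}^{P_{s}-1}B^{\epsilon}\le\epsilon^{-\lambda}$, which comes out as the factor $\epsilon^{-\lambda}$; the $\boldsymbol{\nu}_{P_{s-1}}$-integration over $(S^{2})^{P_{s}-1}$ costs at most $(4\pi)^{P_{s}-1}$; and the $\boldsymbol{w}_{P_{s-1}}$-integration converges thanks to the Gaussian weight $e^{-\frac{\beta}{2}|\xi^{\epsilon}(0)|^{2}}$, contributing at most $C^{P_{s}-1}$. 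This last point is the only genuinely delicate one: one must use conservation of energy along the interacting flow $T_{j}^{\epsilon}$ — together with $\Phi\ge 0$ and the fact that each creation inserts a particle at mutual distance $\epsilon|\nu_{k}|=\epsilon$, at the edge of the potential support — to dominate the created velocities $w_{k+1}$ by $|\xi^{\epsilon}(0)|$ up to harmless multiplicative constants; this is precisely the mechanism already exploited in the proof of the $Err_{2}$ estimate (and here it is actually easier, because the kernels have already been removed), so no new argument is required. For the Boltzmann term it is even simpler, since the free flow preserves each kinetic energy, so that $|\xi(0)|^{2}=|v_{1}|^{2}+\sum_{k}|w_{k+1}|^{2}$ exactly. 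The factor carrying the smallness is then the volume of the truncated simplex that meets the ``close-times'' set: bounding by a union over which inter-creation interval is short, fixing that one variable (length $\le\delta$) and integrating the remaining ones over a simplex of dimension $P_{s}-2$,
\begin{equation}
\intop_{\{t>t_{1}>\cdots>t_{P_{s}-1}>0\}}d\boldsymbol{t}_{P_{s-1}}\,\chi\{\exists\,i:\,|t_{i}-t_{i-1}|\le\delta\}\le(P_{s}-1)\,\delta\,\frac{t^{P_{s}-2}}{(P_{s}-2)!}=(P_{s}-1)^{2}\,\frac{\delta}{t}\cdot\frac{t^{P_{s}-1}}{(P_{s}-1)!},
\end{equation}
so that, compared with the full simplex volume $t^{P_{s}-1}/(P_{s}-1)!$ that would appear without the time cutoff, this yields the gain $\delta/t$ at the cost of $(P_{s}-1)^{2}\le 2^{2(s+1)}$.

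Finally I would collect the pieces: the $(P_{s}-1)!=|\Gamma(P_{s}-1)|$ cancels the $(P_{s}-1)!$ in the simplex volume; the surviving $(C\alpha)^{P_{s}-1}$, $2^{P_{s}-1}$ (from the $\boldsymbol{\sigma}$-sum), $(4\pi)^{P_{s}-1}$, $C^{P_{s}-1}$ and $t^{P_{s}-1}$ combine into $(C\alpha t)^{P_{s}-1}\le(C\alpha t)^{2^{s+1}}$ after renaming the constant and using $P_{s}-1\le 2^{s+1}$; the factor $\epsilon^{-\lambda}$ stays; and $\sum_{j_{1}}\cdots\sum_{j_{s}}1\le 2^{s(s+1)}$ together with $(P_{s}-1)^{2}\le 2^{2(s+1)}$ gives $2^{s(s+1)+2(s+1)}=2^{(s+2)(s+1)}$. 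This produces
\begin{equation}
\|f_{1,s}^{N,Err_{3}}-f_{1,s}^{N,Err_{4}}\|_{\infty}\le\epsilon^{-\lambda}\,\|g_{0}\|_{\infty}\,(C\alpha t)^{2^{s+1}}\,2^{(s+2)(s+1)}\,\frac{\delta}{t},
\end{equation}
and the identical computation for $f_{1,s}^{\alpha,Err_{3}}-f_{1,s}^{\alpha,Err_{4}}$ (using the free flow, as noted above) completes the bound. The step I expect to require the most care is the $\boldsymbol{w}_{P_{s-1}}$-integration in the interacting case, i.e. making the Gaussian weight on $\xi^{\epsilon}(0)$ genuinely control the created velocities despite the mixing and the potential contributions of $T_{j}^{\epsilon}$; everything else is the routine combinatorics of the simplex and of the tree sums.
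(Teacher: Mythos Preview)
Your proposal is correct and follows the same approach as the paper: bound $\prod B^{\epsilon}\le\epsilon^{-\lambda}$ on the surviving cutoff, use the Gaussian control on the initial datum for the velocity integrals, and estimate the time-simplex integral restricted to $\{\exists\,i:|t_{i}-t_{i-1}|\le\delta\}$ by a union bound over the $P_{s}-1$ indices, gaining the factor $\delta/t$. The paper's own proof is in fact terser than yours---it simply records the inequality with the $\epsilon^{-\lambda}$ extracted and states ``there are $P_{s}-1$ choices of time variables such that $t_{i}-t_{i-1}\le\delta$''---so your more explicit tracking of the simplex volume and of how $(P_{s}-1)^{2}\le 2^{2(s+1)}$ combines with $2^{s(s+1)}$ to give $2^{(s+2)(s+1)}$ is a welcome elaboration rather than a deviation; your concern about the $\boldsymbol{w}$-integration in the interacting case is legitimate but, as you note, is exactly the mechanism already used (and implicitly invoked by the paper) in the earlier error estimates.
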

\begin{proof}
As the other lemma we give a proof only for the term $\Vert f_{1,s}^{N,Err_{3}}-f_{1,s}^{N,Err_{4}}\Vert_{\infty}$
since for the other one the proof is similar. We have that
\begin{equation}
\vert f_{1,s}^{N,Err_{3}}-f_{1,s}^{N,Err_{4}}\vert\leq\epsilon^{-\lambda}\Vert g_{0}\Vert_{\infty}\sum_{j_{1}=0}^{1}...\sum_{j_{s}=0}^{2^{s}-1}\left(C\alpha\right)^{P_{s}-1}\sum_{\Gamma(P_{s}-1)}\intop d\Lambda_{P_{s-1}}\left(\chi_{\lambda,E}^{\epsilon}-\chi_{\lambda,E,\delta}^{\epsilon}\right)e^{-\frac{\beta}{2}\vert\xi^{\epsilon}(0)\vert^{2}}
\end{equation}
There are $P_{s}-1$ choices of time variables such that $t_{i}-t_{i-1}\leq\delta$,
this gives us that 
\begin{equation}
\vert f_{1,s}^{N,Err_{3}}-f_{1,s}^{N,Err_{4}}\vert\leq\epsilon^{-\lambda}\Vert g_{0}\Vert_{\infty}\left(C\alpha t\right)^{2^{s+1}}2^{(s+2)(s+1)}\frac{\delta}{t}
\end{equation}
\end{proof}
Finally we introduce the last cutoff in the integrals. We define the
indicator function
\begin{equation}
\chi_{\lambda,E,\delta,q}^{\epsilon}=\chi_{\lambda,E,\delta}^{\epsilon}\chi\left\{ \vert\omega_{k}\cdot(w_{k+1}-\xi_{i_{k}}^{\epsilon}(\tau_{k}))\vert\geq\epsilon^{q},\,\vert\rho_{k}\vert\geq\epsilon^{q},\,1\leq k\leq P_{s}-1\right\} 
\end{equation}
 and 
\begin{equation}
\chi_{\lambda,E,\delta,q}=\chi_{\lambda,E,\delta}\chi\left\{ \vert\omega_{k}\cdot(w_{k+1}-\xi_{i_{k}}(\tau_{k}))\vert\geq\epsilon^{q},\,\vert\rho_{k}\vert\geq\epsilon^{q},\,1\leq k\leq P_{s}-1\right\} \label{eq:Cutoff}
\end{equation}
With this cutoff we are neglecting the grazing and the central velocities
in the creation of new particles. We have the following estimate:
\begin{lem}
\label{lem:Err5}Let 
\begin{equation}
f_{1,s}^{N,Err_{5}}=\sum_{j_{1}=0}^{1}...\sum_{j_{s}=0}^{2^{s}-1}\left(\alpha\right)^{P_{s}-1}\sum_{\Gamma(P_{s}-1)}\sum_{\sigma_{P_{s}-1}}\boldsymbol{\sigma}_{P_{s}-1}\intop d\Lambda_{P_{s-1}}\chi_{\lambda,E,\delta,q}^{\epsilon}\prod_{k=1}^{P_{s}-1}B^{\epsilon}\,f_{0,P_{s}}^{N}(\zeta^{\epsilon}(0))
\end{equation}
 and let 
\begin{equation}
f_{1,s}^{\alpha,Err_{5}}=\sum_{j_{1}=0}^{1}...\sum_{j_{s}=0}^{2^{s}-1}\left(\alpha\right)^{P_{s}-1}\sum_{\Gamma(P_{s}-1)}\sum_{\sigma_{P_{s}-1}}\boldsymbol{\sigma}_{P_{s}-1}\intop d\Lambda_{P_{s-1}}\chi_{\lambda,E,\delta,q}\prod_{k=1}^{P_{s}-1}B\,f_{0,P_{s}}(\zeta(0))
\end{equation}
Then:

\begin{equation}
\Vert f_{1,s}^{N,Err_{4}}-f_{1,s}^{N,Err_{5}}\Vert_{\infty}+\Vert f_{1,s}^{\alpha,Err_{4}}-f_{1,s}^{\alpha,Err_{5}}\Vert_{\infty}\leq\Vert g_{0}\Vert_{\infty}\epsilon^{\frac{q}{2}-\lambda}\left(C\alpha t\right)^{2^{s+1}}2^{(s+2)(s+1)}
\end{equation}
with $0<q<1$.
\end{lem}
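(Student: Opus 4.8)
The plan is to mimic the scheme of the preceding lemmas, proving the bound only for the interacting quantity $\Vert f_{1,s}^{N,Err_{4}}-f_{1,s}^{N,Err_{5}}\Vert_{\infty}$, the Boltzmann one being obtained by the same computation with the free backward flow in place of the interacting one (and with $\rho_{k},\omega_{k}$ read off the elementary two--body geometry of Section~2.3 applied to $\bigl(\nu_{k},w_{k+1}-\xi_{i_{k}}(\tau_{k})\bigr)$, no interacting--flow constraint being present). First I would write the difference as the usual sum over $j_{1},\dots,j_{s}$, over the trees $\Gamma(P_{s}-1)$, over the signs $\boldsymbol{\sigma}_{P_{s}-1}$ and over $d\Lambda_{P_{s-1}}$, the only change being the replacement of $\chi^{\epsilon}_{\lambda,E,\delta}$ by $\chi^{\epsilon}_{\lambda,E,\delta}-\chi^{\epsilon}_{\lambda,E,\delta,q}\ge 0$; by a union bound this increment is dominated by $\sum_{k=1}^{P_{s}-1}\bigl(\chi\{|\omega_{k}\cdot V_{k}^{\epsilon}|<\epsilon^{q}\}+\chi\{|\rho_{k}|<\epsilon^{q}\}\bigr)$ with $V_{k}^{\epsilon}=w_{k+1}-\xi_{i_{k}}^{\epsilon}(\tau_{k})$, so it is enough to control, for each fixed $k$, the contribution of each of these two indicators.

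Next I would, on the support of $\chi^{\epsilon}_{\lambda}$, bound $\prod_{k}B^{\epsilon}\le\epsilon^{-\lambda}$ and pull this factor out, and use $|f_{0,P_{s}}^{N}(\zeta^{\epsilon}(0))|\le C^{P_{s}}\Vert g_{0}\Vert_{\infty}e^{-\frac{\beta}{2}|\xi^{\epsilon}(0)|^{2}}$. Exactly as in the proof of the preceding lemma, conservation of the kinetic energy of the interacting backward flow between two successive creations lets the Gaussian weight dominate a product $\prod_{k}e^{-c|w_{k}|^{2}}$, so that all the velocity integrals converge with a constant $C^{P_{s}}$; the time integral $\int_{0}^{t}dt_{1}\cdots$ yields $t^{P_{s}-1}/(P_{s}-1)!$, cancelling the cardinality $(P_{s}-1)!$ of $\Gamma(P_{s}-1)$; and the $\nu_{j}$--integrals for $j\neq k$ cost at most $(4\pi)^{P_{s}-2}$, the weights $|\nu_{j}\cdot V_{j}^{\epsilon}|$ having already been spent in $\epsilon^{-\lambda}$. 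Everything is then reduced to estimating, uniformly in $V_{k}^{\epsilon}$, the two quantities $\int_{S^{2}}d\nu_{k}\,\chi\{|\rho_{k}|<\epsilon^{q}\}$ and $\int_{S^{2}}d\nu_{k}\,\chi\{|\omega_{k}\cdot V_{k}^{\epsilon}|<\epsilon^{q}\}$.

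The first is immediate: with $\rho_{k}=\sin\varphi_{k}$, $\varphi_{k}$ the angle between $\nu_{k}$ and the line spanned by $V_{k}^{\epsilon}$, the set $\{|\rho_{k}|<\epsilon^{q}\}$ is a union of two spherical caps of angular radius $\simeq\epsilon^{q}$, of area $\le C\epsilon^{2q}\le C\epsilon^{q/2}$. For the second I would use the collision geometry of Section~2.3: the bisector $\omega_{k}$ makes the angle $(\pi+\theta_{k})/2$ with $V_{k}^{\epsilon}$, where $\theta_{k}=\theta(\rho_{k},\alpha)$ is the deflection angle, so $|\omega_{k}\cdot V_{k}^{\epsilon}|=|V_{k}^{\epsilon}|\sin(\theta_{k}/2)$ and the cutoff forces $\sin(\theta_{k}/2)<\epsilon^{q}/|V_{k}^{\epsilon}|$. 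Since $\rho\mapsto\theta(\rho,\alpha)$ is monotone and, by Lemma~\ref{lem:theta} together with the vanishing of $\gamma$ at $\rho=0$ and $\rho=1$, this confines $\rho_{k}$ to a union of small neighbourhoods of the two endpoints of the impact--parameter range; quantitatively inverting the estimate $\theta(\rho,\alpha)\le\frac{2}{|V|^{2}\sqrt{\alpha}}|\gamma(\rho)|+O(\alpha^{-1})$ bounds the length of these neighbourhoods by $\epsilon^{q}$ times a power of $|V_{k}^{\epsilon}|\sqrt{\alpha}$. Passing finally from $\rho_{k}$ to $\varphi_{k}$ by $d\rho_{k}=\cos\varphi_{k}\,d\varphi_{k}$, whose Jacobian $\cos\varphi_{k}=\sqrt{1-\rho_{k}^{2}}$ degenerates like $\sqrt{1-\rho_{k}}$ at the grazing value $\rho_{k}=1$, turns the $\epsilon^{q}$ gain into an $\epsilon^{q/2}$ gain for the $\nu_{k}$--measure (the residual factors $|V_{k}^{\epsilon}|^{1/2}$ being absorbed by the Gaussian and the residual powers of $\alpha$ into $C$). \textbf{This transfer of the cutoff on the transferred momentum $(\omega_{k}\cdot V_{k}^{\epsilon})\omega_{k}$ --- a nonlinear and, near $\rho=0,1$, degenerate function of the integration variable $\nu_{k}$ --- into a measure bound in $\nu_{k}$ is the step I expect to be the main obstacle}, and it is exactly what makes Lemma~\ref{lem:theta}, i.e.\ the grazing nature of the collisions, indispensable here.

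Finally I would collect the estimates: the $\epsilon^{-\lambda}$ taken out of $\prod_{k}B^{\epsilon}$ and the $\epsilon^{q/2}$ just obtained combine to $\epsilon^{q/2-\lambda}$; $\alpha^{P_{s}-1}$ together with $C^{P_{s}}$ give $(C\alpha)^{P_{s}-1}$; the sign sum gives $2^{P_{s}-1}$; the time integral and the tree sum cancel; so a single term of the outer sum is $\le(C\alpha t)^{P_{s}-1}\le(C\alpha t)^{2^{s+1}}$. The union bound over $k$ costs a further $P_{s}-1\le 2^{s+1}$ and the outer sum $\sum_{j_{1}=0}^{1}\cdots\sum_{j_{s}=0}^{2^{s}-1}$ costs $2^{s(s+1)}$, all of which is absorbed into $2^{(s+2)(s+1)}$ since $(s+2)(s+1)-s(s+1)=2(s+1)$. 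This gives $\Vert f_{1,s}^{N,Err_{4}}-f_{1,s}^{N,Err_{5}}\Vert_{\infty}\le\Vert g_{0}\Vert_{\infty}\epsilon^{q/2-\lambda}(C\alpha t)^{2^{s+1}}2^{(s+2)(s+1)}$, and the same computation with the free backward flow gives the identical bound for the Boltzmann term, whence the claim with $0<q<1$.
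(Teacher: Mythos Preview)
Your argument departs from the paper's at the hard step, and the departure contains a real gap.

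The paper does not try to invert the scattering map $\rho_k\mapsto\theta_k$. It splits instead on the size of $|V_k^\epsilon|=|w_{k+1}-\xi_{i_k}^\epsilon(\tau_k)|$. When $|V_k^\epsilon|\le\epsilon^{q/2}$ it drops the constraint on $\omega_k$ altogether and restricts the \emph{velocity} integral $dw_{k+1}$ to the ball of radius $\epsilon^{q/2}$ around $\xi_{i_k}^\epsilon(\tau_k)$; this is a linear constraint, uses nothing about the two--body problem, and already produces the factor $\epsilon^{q/2}$. Only in the complementary case $|V_k^\epsilon|>\epsilon^{q/2}$, where $|\omega_k\cdot V_k^\epsilon|<\epsilon^q$ forces $|\cos\gamma|\le\epsilon^{q/2}$ (with $\gamma$ the angle between $V_k^\epsilon$ and $\omega_k$), does any scattering geometry enter.

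Your direct attack via $\theta_k$ has two concrete problems. First, the claimed monotonicity of $\rho\mapsto\theta(\rho,\alpha)$ is false in this regime: since $\Phi\in C^2(\mathbb R^3)$ has $\Phi(0)<\infty$, once $\alpha$ is large enough that $\Phi(0)/\sqrt\alpha$ falls below the relative kinetic energy, a head--on collision ($\rho=0$) passes through the origin with $\theta=0$; hence $\theta$ vanishes at \emph{both} endpoints $\rho=0,1$ and cannot be monotone---indeed your own conclusion ``neighbourhoods of the two endpoints'' already contradicts monotonicity. Second, and this is the decisive gap, Lemma~\ref{lem:theta} is only an \emph{upper} bound $\theta\le\tfrac{2}{|V|^2\sqrt\alpha}|\gamma(\rho)|+O(\alpha^{-1})$; you cannot ``invert'' it to conclude that $\theta$ small forces $|\gamma(\rho)|$ small. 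That implication requires the reverse inequality, which the lemma does not supply. A smaller slip in the same spirit: the ``residual powers of $\alpha$'' you propose to absorb into $C$ are positive powers of a quantity tending to infinity; at best they can be pushed into the factor $(C\alpha t)^{2^{s+1}}$, not into a fixed constant. The paper's $|V_k^\epsilon|$--split is precisely what sidesteps both issues in the small--velocity case.
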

\begin{proof}
We have that
\begin{equation}
\vert f_{1,s}^{N,Err_{4}}-f_{1,s}^{N,Err_{5}}\vert\leq\epsilon^{-\lambda}\Vert g_{0}\Vert_{\infty}\sum_{j_{1}=0}^{1}...\sum_{j_{s}=0}^{2^{s}-1}\left(C\alpha\right)^{P_{s}-1}\sum_{\Gamma(P_{s}-1)}\intop d\Lambda_{P_{s-1}}\left(\chi_{\lambda,E,\delta}^{\epsilon}-\chi_{\lambda,E,\delta,q}^{\epsilon}\right)e^{-\frac{\beta}{2}\vert\xi^{\epsilon}(0)\vert^{2}}
\end{equation}
This means that there exist a $k$ such that $\vert\omega_{k}\cdot(v_{k+1}-\xi_{i_{k}}^{\epsilon}(\tau_{k})\vert\leq\epsilon^{q}$
. If $\vert(v_{k+1}-\xi_{i_{k}}^{\epsilon}(\tau_{k})\vert\leq\epsilon^{\frac{q}{2}}$
then we simply have that 
\begin{equation}
\vert f_{1,s}^{N,Err_{4}}-f_{1,s}^{N,Err_{5}}\vert\leq\Vert g_{0}\Vert_{\infty}\epsilon^{\frac{q}{2}-\lambda}\left(C\alpha t\right)^{2^{s+1}}2^{(s+2)(s+1)}\label{eq:err1}
\end{equation}
Otherwise if $\vert(v_{k+1}-\xi_{i_{k}}^{\epsilon}(\tau_{k})\vert>\epsilon^{\frac{q}{2}}$
it results that $\vert\cos\gamma\vert\leq\epsilon^{\frac{q}{2}}$,
where $\gamma$ is the angle between $v_{k+1}-\xi_{i_{k}}^{\epsilon}(\tau_{k})$
and $\omega_{k}$. Therefore $\vert\frac{\pi}{2}-\gamma\vert\leq C\epsilon^{\frac{q}{2}}$
and, fixed $v_{k+1}-\xi_{i_{k}}^{\epsilon}(\tau_{k})$, $\omega_{k}$
must be in a set of measure bounded by $C\epsilon^{q}$. The case
$\rho_{k}\leq\epsilon^{q}$ can be easily estimated, since $d\nu_{k}=\rho_{k}d\rho_{k}d\psi$.
We have that 

\begin{equation}
\vert f_{1,s}^{N,Err_{4}}-f_{1,s}^{N,Err_{5}}\vert\leq\Vert g_{0}\Vert_{\infty}\epsilon^{q-\lambda}\left(C\alpha t\right)^{2^{s+1}}2^{(s+2)(s+1)}\label{eq:err2}
\end{equation}
From (\ref{eq:err1}) and (\ref{eq:err2}) we arrive to
\begin{equation}
\Vert f_{1,s}^{N,Err_{4}}-f_{1,s}^{N,Err_{5}}\Vert_{\infty}\leq\Vert g_{0}\Vert_{\infty}\epsilon^{\frac{q}{2}-\lambda}\left(C\alpha t\right)^{2^{s+1}}2^{(s+2)(s+1)}
\end{equation}
\end{proof}
We are now in position to estimate the difference between the BBF
and the IBF.

We define the following set 
\begin{equation}
N^{P_{s}}(\epsilon_{0})=\left\{ (\boldsymbol{t}_{P_{s}-1},\boldsymbol{\nu}_{P_{s}-1},\boldsymbol{w}_{P_{s}-1})\in\mathbb{R}^{P_{s}-1}\times S^{2(P_{s}-1)}\times\mathbb{R}^{3(P_{s}-1)}\,|\,\min_{i<k}\min_{\tau\in[0,t_{i-1}]}\,d(r_{i}(\tau),r_{k}(\tau))<\epsilon_{0}\right\} \label{eq:bset}
\end{equation}
where $d(\cdot,\cdot)$ denotes the distance over the torus $\Gamma$.
This set is completely defined via the BBF and it is the set of variables
for which a recollision can appear. At this point we need to prove
that the measure of the set $N^{P_{s}}(\epsilon_{0})$ is small, taking
into account also the constraints given by $\chi_{\lambda,E,\delta,q}^{\epsilon}$
and $\chi_{\lambda,E,\delta,q}$. This smallness is proved in \cite{Pulvirenti2014}
in the case of particles moving in the whole $\mathbb{R}^{3}$ instead
that in a torus. In the following lemma we adapt this result to the
present context by using also some geometrical estimate proved in
\cite{Bodineau2015}.
\begin{lem}
\label{lem:Ric}Let $\chi_{\lambda,E,\delta,q}$ be defined as in
(\ref{eq:Cutoff}) and let $\chi\left\{ N^{P_{s}}(\epsilon_{0})\right\} $
be the characteristic function of the set (\ref{eq:bset}). Then it
results that
\[
\sum_{j_{1}=0}^{1}...\sum_{j_{s}=0}^{2^{s}-1}\left(\alpha\right)^{P_{s}-1}\sum_{\Gamma(P_{s}-1)}\sum_{\sigma_{P_{s}-1}}\boldsymbol{\sigma}_{P_{s}-1}\intop d\Lambda_{P_{s-1}}\chi_{\lambda,E,\delta,q}\chi\left\{ N^{P_{s}}(\epsilon_{0})\right\} \prod_{k=1}^{P_{s}-1}B\,f_{0,P_{s}}^{N}(\zeta(0))\leq
\]
\begin{equation}
\Vert g_{0}\Vert_{\infty}\left(C\alpha t\right)^{2^{s+1}}E^{8}2^{(s+4)(s+1)}\left(\epsilon_{0}^{\frac{2}{5}-\lambda}+\frac{\epsilon_{0}^{\frac{4}{5}-\lambda}}{\delta^{2}}+\epsilon_{0}^{\frac{4}{5}-\lambda}\right)
\end{equation}
\end{lem}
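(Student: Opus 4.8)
The plan is to fix at the outset the external parameters — the truncation levels $j_{1},\dots,j_{s}$, the tree $\boldsymbol{i}\in\Gamma(P_{s}-1)$ and the signs $\boldsymbol{\sigma}_{P_{s}-1}$ — and to bound, uniformly in them, the single integral
\begin{equation}
\intop d\Lambda_{P_{s-1}}\,\chi_{\lambda,E,\delta,q}\,\chi\left\{ N^{P_{s}}(\epsilon_{0})\right\} \prod_{k=1}^{P_{s}-1}B\,f_{0,P_{s}}(\zeta(0)),
\end{equation}
recovering the combinatorial prefactors only at the end. On the support of $\chi_{\lambda,E,\delta,q}$ one has $\prod_{k}B\le\epsilon^{-\lambda}$ and all velocities entering the BBF are bounded by $2E$, while $|f_{0,P_{s}}(\zeta(0))|\le\|g_{0}\|_{\infty}M_{\beta}^{\otimes P_{s}}(\xi(0))$ gives Gaussian damping in all the integration velocities $\boldsymbol{w}_{P_{s}-1}$ (the backward scattering maps being measure preserving by Lemma~\ref{lem:PresLeb}). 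Thus the whole estimate reduces to controlling the Lebesgue measure of the recollision set $N^{P_{s}}(\epsilon_{0})$ inside $\{|\boldsymbol{w}_{P_{s}-1}|\le 2E\}$, subject to the time separation $t_{i}-t_{i-1}>\delta$ and to the grazing/central cut-offs $|\rho_{k}|\ge\epsilon^{q}$, $|\omega_{k}\cdot(w_{k+1}-\xi_{i_{k}}(\tau_{k}))|\ge\epsilon^{q}$.

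To estimate that measure I would decompose $N^{P_{s}}(\epsilon_{0})$ according to the unordered pair $(i,k)$, $i<k$, of virtual trajectories that become $\epsilon_{0}$-close and to the sub-interval $(\tau_{m},\tau_{m-1})$ on which this occurs, a decomposition costing at most $P_{s}^{3}\le 2^{3(s+1)}$. On such a sub-interval both virtual trajectories move freely, so $r^{i}(\tau)-r^{k}(\tau)$ is affine in $\tau$ with slope $V=\xi^{i}-\xi^{k}$, and the condition $\min_{\tau}d(r^{i}(\tau),r^{k}(\tau))<\epsilon_{0}$ on the torus means that a segment of length between $\delta|V|$ and $t|V|$ passes within $\epsilon_{0}$ of some lattice translate of the origin. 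Freezing everything except the velocity $w$ (equivalently the impact parameter $\nu$) created at time $\tau_{m}$, this forces $w$ into a thin tube: for one lattice point its direction must lie in a solid angle $\lesssim(\epsilon_{0}/(\delta|V|))^{2}$. Here I would split into $|V|\le\epsilon_{0}^{\kappa}$, handled by the trivial velocity-volume bound $\lesssim\epsilon_{0}^{3\kappa}$, and $|V|>\epsilon_{0}^{\kappa}$, handled by the tube bound $\lesssim E\,\epsilon_{0}^{2}\delta^{-2}|V|^{-2}$ (and by a $\delta$-free variant $\lesssim\epsilon_{0}^{2}$ when the near-approach happens at a creation, i.e.\ a close encounter with a progenitor at distance between $\epsilon$ and $\epsilon_{0}$); a standard optimization of the threshold $\kappa$ over the two regimes produces the fractional exponents $\epsilon_{0}^{2/5}$ and $\epsilon_{0}^{4/5}$ in the statement, the division by $\delta^{2}$ being the cost of the "in-transit" recollisions, while the overall $\epsilon^{-\lambda}$ is exactly $\prod_{k}B\le\epsilon^{-\lambda}$ and the powers of $E$ (at most $E^{8}$) come from the $\int_{|V|\le2E}|V|^{-2}\,dV\sim E$ type integrals over the at most two trajectory pieces involved.

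The step I expect to be the genuine obstacle — and the only one not already in \cite{Pulvirenti2014} — is that on the torus a single free segment can pass near many lattice translates of the origin, so one cannot argue "at most one recollision". For this I would invoke the geometric estimates of \cite{Bodineau2015}: since the segment has length $\le2Et$ and thickness $\epsilon_{0}$, and since velocities whose direction is too well approximated by rational ones form a set of small, controlled measure, for all but a negligible set of $w$ the tube meets $\lesssim Et$ lattice points, so the per-point bounds above are multiplied only by harmless polynomial factors in $E$ and $t$. Once the single integral is bounded by $\|g_{0}\|_{\infty}E^{8}\epsilon^{-\lambda}\big(\epsilon_{0}^{2/5}+\epsilon_{0}^{4/5}\delta^{-2}+\epsilon_{0}^{4/5}\big)$ up to constants, I would reinstate the prefactors: the sum over $\Gamma(P_{s}-1)$ gives $(P_{s}-1)!$, which cancels the time-simplex volume $t^{P_{s}-1}/(P_{s}-1)!$ from $\intop d\boldsymbol{t}_{P_{s}-1}$, so $\alpha^{P_{s}-1}$ combines into $(C\alpha t)^{P_{s}-1}\le(C\alpha t)^{2^{s+1}}$; the sums over $\boldsymbol{\sigma}_{P_{s}-1}$, over $j_{1},\dots,j_{s}$, and the $P_{s}^{3}$ from the choice of recolliding pair and interval are all absorbed into $2^{(s+4)(s+1)}$. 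This yields precisely the claimed bound.
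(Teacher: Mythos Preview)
Your reduction to a single velocity constraint works only in what the paper calls the case $l=0$, i.e.\ when the $\epsilon_{0}$-overlap occurs on the very first backward segment after the creation of $k$: there $Y^{0}=0$, the time to recollision is bounded below by $\delta$ thanks to the cutoff $|W^{0}|\ge\epsilon^{q}$, and one indeed gets that $W^{0}$ lies in a set of measure $\lesssim E\epsilon_{0}^{2}/\delta^{2}+(Et)^{3}\epsilon_{0}^{2}$. For $l\ge1$ your scheme breaks down. If you freeze all earlier creations and vary the parameters of the last creation on the $i,k$ virtual trajectories, you do fix the relative position $Y^{l}$ at time $t^{l}$ and let the relative velocity $W^{l}$ vary; but the recollision condition $|Y^{l}-\tau W^{l}-p|<\epsilon_{0}$ for some $\tau>0$ only confines $W^{l}$ to a cone of solid angle $\lesssim(\epsilon_{0}/|Y^{l}-p|)^{2}$, and nothing prevents $|Y^{l}|$ (the torus distance) from being barely above $\epsilon_{0}$. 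In that situation the cone has solid angle of order one and your ``thin tube'' gives no gain at all. There is also no lower bound on $\tau$ here (the $\delta$-separation is between consecutive creations, not between a creation and the recollision), so the $\epsilon_{0}/\delta$ cylinder does not apply either.

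The paper's proof avoids this obstruction by a genuinely different mechanism. It never tries to constrain $W^{l}$. Instead it writes $Y^{l}=\hat p-W^{0}t^{0}+\sum_{q=1}^{l}(W^{q-1}-W^{q})t^{q}$ and runs a trichotomy on the velocity jumps: (i) if $\sum_{q}|W^{q}-W^{q-1}|\le\epsilon_{0}^{2/5}$ one is back to an $l=0$-type constraint on $W^{0}$ with an inflated radius $\epsilon_{0}^{2/5}$, producing the $\epsilon_{0}^{4/5}$ terms; (ii) if some cross product $|(W^{q^{\star}}-W^{q^{\star}-1})\wedge\hat W^{l}|>\epsilon_{0}^{3/5}/l$ is large, the recollision forces the \emph{time} $t^{q^{\star}}$ into an interval of length $\lesssim\epsilon_{0}^{2/5}(P_{s}-1)(Et)^{3}$; (iii) otherwise one obtains an angular constraint $|\hat V'_{r_{1}}\wedge\hat U_{r_{2}}|\lesssim\epsilon_{0}^{1/5}(P_{s}-1)$ between two \emph{different} creation velocities, which is then integrated via the measure-preserving change $(\nu,w)\mapsto(\nu',V')$. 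The exponents $2/5$ and $4/5$ come from the choice of threshold $\epsilon_{0}^{2/5}$ in this dichotomy, not from a small-$|V|$/large-$|V|$ split as you suggest; your proposed optimisation in $\kappa$ does not reproduce them. Finally, the parenthetical about ``a close encounter with a progenitor at distance between $\epsilon$ and $\epsilon_{0}$'' is off: in the BBF particles are created at distance zero from their progenitor, so that scenario does not arise.
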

We leave the proof of this lemma in the appendix II.

Thanks to these estimates we can now give a proof of the convergence
of the IBF to the BBF and then of the one particle marginal of the
GH to the solution of the Boltzmann equation. First we choose the
magnitude of the parameters in the following way
\begin{equation}
\alpha\cong C\left(\log\log N\right)^{\frac{1}{2}}\,s\cong\frac{\log\log N}{2\log2}
\end{equation}
Furthermore we have that 
\begin{equation}
2^{s+1}\leq2\left(\log N\right)^{\frac{1}{2}}
\end{equation}
\begin{equation}
2^{(s+2)(s+1)}\leq2\left(\log N\right)^{\frac{\log\log N}{2\log2}}
\end{equation}
\begin{equation}
\left(C\alpha t\right)^{2^{s+1}}\leq\left(C\log\log N\right)^{\sqrt{\log N}}
\end{equation}
\begin{equation}
N\epsilon^{2}\leq C\left(\log\log N\right)^{\frac{1}{2}}
\end{equation}
\begin{equation}
\epsilon\leq C\frac{\left(\log\log N\right)^{\frac{1}{4}}}{N^{\frac{1}{2}}}
\end{equation}
We also set $\epsilon_{0}=\epsilon^{\frac{5}{6}}$, $\delta=\epsilon^{\frac{1}{8}}$,
$E=\frac{\sqrt{\log N}}{\beta}$ and we fix $q=\frac{1}{8}$ and $\lambda=\frac{1}{32}$.
We have the following theorem 
\begin{thm}
\label{thm:ricollision}Let $\widetilde{f_{1}^{N}}(t)$ be the one
particle marginal of the Grad hierarchy with initial datum as (\ref{eq:Initialpertu})
and let $f_{1}^{\alpha}(t)$ be the solution of the Boltzmann equation
with initial datum as (\ref{eq:initialBH}). Then $\forall t\in[0,T]$
it results that 
\begin{equation}
\Vert\widetilde{f_{1}^{N}}(t)-f_{1}^{\alpha}(t)\Vert_{\infty}\rightarrow0
\end{equation}
for $N\rightarrow\infty$, $\epsilon\rightarrow0$, $\vert N\epsilon^{2}-\alpha\vert\leq\epsilon$.
\end{thm}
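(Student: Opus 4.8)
The plan is to assemble the estimates proved above into one telescoping comparison and then to verify that all the combinatorial losses are beaten by the powers of $\epsilon\sim N^{-1/2}$ that each cutoff and Lemma \ref{lem:Ric} produce, which is exactly what forces the choice $\alpha\cong(\log\log N)^{1/2}$, $s\cong(\log\log N)/(2\log 2)$. First I would use the splittings $\widetilde{f_{1}^{N}}(t)=\widetilde{f_{1,s}^{N}}(t)+\widetilde{R_{N}^{s}}(t)$ and $f_{1}^{\alpha}(t)=f_{1,s}^{\alpha}(t)+R^{s}(t)$; by Theorem \ref{thm:Rem} both remainders are bounded by $\|g_{0}\|_{\infty}(C(\alpha t)^{2}/s)^{2}$, which is negligible for the chosen magnitudes of $s$ and $\alpha$ on a suitable time interval. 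Next, Theorem \ref{thm:ghih} gives $\|\widetilde{f_{1,s}^{N}}(t)-f_{1,s}^{N}(t)\|_{\infty}\le\|g_{0}\|_{\infty}2^{s(s+1)}(C\alpha t)^{2^{s+1}}\epsilon$, and since $2^{s+1}\le 2\sqrt{\log N}$ and $(C\alpha t)^{2^{s+1}}\le(C\log\log N)^{\sqrt{\log N}}$ are $e^{o(\log N)}$ while $\epsilon=e^{-(\log N)/2+o(\log N)}$, this vanishes. Hence everything reduces to showing $\|f_{1,s}^{N}(t)-f_{1,s}^{\alpha}(t)\|_{\infty}\to 0$, i.e. comparing the ITS with the BTS.

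Second, I would pass from $f_{1,s}^{N}$, $f_{1,s}^{\alpha}$ to their fully cut-off versions $f_{1,s}^{N,Err_{5}}$, $f_{1,s}^{\alpha,Err_{5}}$ through the chain of Lemmas \ref{lem:Err1}--\ref{lem:Err5}: Lemma \ref{lem:Err1} replaces the prefactor $(N-1)\cdots(N-P_{s}-1)\epsilon^{2(P_{s}-1)}$ by $\alpha^{P_{s}-1}$ at the cost of $|\alpha-N\epsilon^{2}|\le\epsilon$ times combinatorics; then one successively inserts the cutoff $\chi_{\lambda}$ on $\prod_{k}B$, the energy cutoff $\chi\{|\xi(0)|\le 2E\}$, the time-separation cutoff $\chi\{t_{i}-t_{i-1}>\delta\}$ and the grazing/central-velocity cutoff $\chi\{|\omega_{k}\cdot(w_{k+1}-\xi_{i_{k}})|\ge\epsilon^{q},\,|\rho_{k}|\ge\epsilon^{q}\}$. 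Each step costs a positive power of $\epsilon$ (namely $\epsilon^{\lambda}$, $e^{-\beta E^{2}}$, $\epsilon^{-\lambda}\delta/t$, $\epsilon^{q/2-\lambda}$ respectively) times $(C\alpha t)^{2^{s+1}}2^{(s+c)(s+1)}$; with $\epsilon_{0}=\epsilon^{5/6}$, $\delta=\epsilon^{1/8}$, $q=1/8$, $\lambda=1/32$, $E=\sqrt{\log N}/\beta$ every such term is $e^{-c\log N+o(\log N)}\to 0$, because the $2^{(s+c)(s+1)}$ and $(C\alpha t)^{2^{s+1}}$ factors are all $e^{o(\log N)}$.

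Third, and this is the core, I would compare $f_{1,s}^{N,Err_{5}}$ with $f_{1,s}^{\alpha,Err_{5}}$. Both are now the \emph{same} sum over collision trees $\Gamma(P_{s}-1)$ and signs, with the same prefactor $\alpha^{P_{s}-1}$ and the same cutoff $\chi_{\lambda,E,\delta,q}$, so the only discrepancy is in the backward flows: the interacting backward flow uses $T_{j}^{\epsilon}$ between creations and creates a new particle at distance $\epsilon$, while the Boltzmann backward flow uses free transport and creates at distance zero. I would split the integration domain by the recollision set $N^{P_{s}}(\epsilon_{0})$ of (\ref{eq:bset}) and its complement. On $N^{P_{s}}(\epsilon_{0})$, Lemma \ref{lem:Ric} bounds the BBF part by $\|g_{0}\|_{\infty}(C\alpha t)^{2^{s+1}}E^{8}2^{(s+4)(s+1)}(\epsilon_{0}^{2/5-\lambda}+\epsilon_{0}^{4/5-\lambda}\delta^{-2}+\epsilon_{0}^{4/5-\lambda})$, again a positive power of $\epsilon$ times $e^{o(\log N)}$; the IBF part on this set is controlled identically after noting that until the first recollision the IBF trajectory is an $O(\epsilon)$ perturbation of the BBF one. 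On the complement, no two BBF particles ever come within $\epsilon_{0}>\epsilon$ of each other before their creation time, so the interacting flow degenerates to free flow and, crucially, the two-body scattering occurring at each creation in the IBF matches — via the change of variables $\nu_{k}\leftrightarrow\omega_{k}$ and the measure-preserving map $I$ of Lemma \ref{lem:PresLeb} — the instantaneous scattering rule used in the BBF; thus the velocity configurations $\xi^{\epsilon}(s)$ and $\xi(s)$ coincide and the positions differ by at most $C\epsilon P_{s}t$. Since the initial data then agree up to this $O(\epsilon)$ spatial displacement and up to the difference between $M_{N,\beta}$ and $M_{\beta}^{\otimes P_{s}}$ (which is $O(\epsilon)$ in sup-norm, the interaction energy being supported on pair-overlaps of volume $O(\epsilon^{3})$), the two integrands differ by $O(\epsilon^{\gamma})$ for some $\gamma>0$, and summing over trees with the usual combinatorial bounds closes the argument.

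The step I expect to be the main obstacle is precisely this matching of the interacting and Boltzmann backward flows away from recollisions: one must show that the constraint $|r_{k+1}(\tau_{k})-r_{i_{k}}(\tau_{k})|>\epsilon$ present in $B^{\epsilon}$ but absent in $B$, the $O(\epsilon)$ shift of every creation point, and the replacement of genuine potential scattering by the idealized map $I$ together generate only an error vanishing with $\epsilon$ — this is the delicate point already flagged after (\ref{eq:2.28}), and it is where the technicalities of \cite{Pulvirenti2014}, transposed from $\mathbb{R}^{3}$ to the torus through the geometric estimates of \cite{Bodineau2015}, are indispensable. The remaining difficulty is purely bookkeeping: checking that all the factors $2^{(s+c)(s+1)}$ and $(C\alpha t)^{2^{s+1}}$ generated along the way are $e^{o(\log N)}$, hence are dominated by the positive powers of $\epsilon$ coming from every cutoff and from Lemma \ref{lem:Ric}, so that the sum of all errors tends to $0$ for $N\to\infty$, $\epsilon\to 0$, $|N\epsilon^{2}-\alpha|\le\epsilon$.
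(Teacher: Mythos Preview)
Your proposal is correct and follows essentially the same route as the paper: split off the remainders via Theorem~\ref{thm:Rem}, pass from the Grad to the intermediate truncated series via Theorem~\ref{thm:ghih}, peel off the cutoffs through Lemmas~\ref{lem:Err1}--\ref{lem:Err5}, and then compare $f_{1,s}^{N,Err_5}$ with $f_{1,s}^{\alpha,Err_5}$ by decomposing on the recollision set $N^{P_s}(\epsilon_0)$ and invoking Lemma~\ref{lem:Ric}, with the same parameter choices $\epsilon_0=\epsilon^{5/6}$, $\delta=\epsilon^{1/8}$, $q=1/8$, $\lambda=1/32$, $E=\sqrt{\log N}/\beta$. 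The only nuance is on the complement of $N^{P_s}(\epsilon_0)$: the paper uses that there the IBF and BBF velocities coincide exactly (so $\chi_{\lambda,E,\delta,q}^{\epsilon}=\chi_{\lambda,E,\delta,q}$ and $\xi^{\epsilon}(0)=\xi(0)$), whence $f_{0,P_s}^{N}(\zeta^{\epsilon}(0))$ and $f_{0,P_s}(\zeta(0))$ differ only through $|C_{P_s,\beta}-C_\beta^{P_s}|\le 2^{2(s+1)}\epsilon^3$, which is the concrete form of the ``$O(\epsilon^\gamma)$'' you anticipate.
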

\begin{proof}
We have
\begin{equation}
\Vert\widetilde{f_{1}^{N}}(t)-f_{1}^{\alpha}(t)\Vert_{\infty}\leq\Vert\widetilde{f_{1,s}^{N}}(t)-f_{1,s}^{N}(t)\Vert_{\infty}+\Vert f_{1,s}^{N}(t)-f_{1,s}^{\alpha}(t)\Vert_{\infty}+\Vert\widetilde{R_{N}^{s}}(t)\Vert_{\infty}+\Vert R^{s}(t)\Vert_{\infty}
\end{equation}
 From Theorems \ref{thm:Rem} and \ref{thm:ghih} it results that
\begin{equation}
\Vert\widetilde{f_{1,s}^{N}}(t)-f_{1,s}^{N}(t)\Vert_{\infty}\leq\Vert g_{0}\Vert_{\infty}2^{s(s+1)}\epsilon\left(C\alpha t\right)^{2^{s+1}}\leq\Vert g_{0}\Vert_{\infty}\left(C\log\log N\right)^{\sqrt{\log N}}\left(\log N\right)^{\frac{\log\log N}{4\log2}}\frac{\left(\log\log N\right)^{\frac{1}{4}}}{N^{\frac{1}{2}}}
\end{equation}
\begin{equation}
\Vert\widetilde{R_{N}^{s}}(t)\Vert_{\infty}+\Vert R^{s}(t)\Vert_{\infty}\leq\Vert g_{0}\Vert_{\infty}\left(\frac{C\left(\alpha t\right)^{2}}{s}\right)^{2}\leq\frac{C\Vert g_{0}\Vert_{\infty}}{\log\log N}
\end{equation}
We have to work on the term $\Vert f_{1,s}^{N}(t)-f_{1,s}^{\alpha}(t)\Vert_{\infty}$.
First it results that 

\begin{alignat}{1}
\Vert f_{1,s}^{N}(t)-f_{1,s}^{\alpha}(t)\Vert_{\infty}\leq & \sum_{l=1}^{5}\Vert f_{1,s}^{N,Err_{l-1}}-f_{1,s}^{N,Err_{l}}\Vert_{\infty}+\sum_{l=0}^{5}\Vert f_{1,s}^{\alpha,Err_{l-1}}(t)-f_{1,s}^{\alpha,Err_{l}}\Vert_{\infty}\nonumber \\
+ & \Vert f_{1,s}^{N,Err_{5}}(t)-f_{1,s}^{\alpha,Err_{5}}(t)\Vert_{\infty}
\end{alignat}
where $f_{1,s}^{N,Err_{0}}(t)=f_{1,s}^{N}(t)$ and $f_{1,s}^{\alpha,Err_{0}}(t)=f_{1,s}^{\alpha}(t)$.
We focus on the last term, it results that 
\begin{alignat}{1}
\vert f_{1,s}^{N,Err_{5}}(t)-f_{1,s}^{\alpha,Err_{5}}(t)\vert\leq & \sum_{j_{1}=0}^{1}...\sum_{j_{s}=0}^{2^{s}-1}\left(C\alpha\right)^{P_{s}-1}\sum_{\sigma_{P_{s}-1}}\sum_{\Gamma(P_{s}-1)}\nonumber \\
 & \sum_{\Gamma(P_{s}-1)}\intop d\Lambda_{P_{s-1}}\,\vert\chi_{\lambda,E,\delta,q}^{\epsilon}\,f_{0,P_{s}}^{N}(\zeta^{\epsilon}(0))-\chi_{\lambda,E,\delta,q}\,f_{0,P_{s}}(\zeta(0))\vert
\end{alignat}
Now we split the integrals by using the indicator functions $1-\chi\left\{ N^{P_{s}}(\epsilon_{0})\right\} $
and $\chi\left\{ N^{P_{s}}(\epsilon_{0})\right\} $. In the first
case since we are outside the set $N^{P_{s}}(\epsilon_{0})$ the particles
must be at a distance greater than $\epsilon_{0}$, this implies that
$M_{N,\beta}(\boldsymbol{z}_{N})=C_{N,\beta}e^{-\frac{\beta}{2}\vert\boldsymbol{v}_{N}\vert^{2}}$
and that $\chi_{\lambda,E,\delta,q}^{\epsilon}=\chi_{\lambda,E,\delta,q}$.
Then we have that 
\[
\intop d\Lambda_{P_{s-1}}\,\left(1-\chi\left\{ N^{P_{s}}(\epsilon_{0})\right\} \right)\chi_{\lambda,E,\delta,q}\,\vert f_{0,P_{s}}^{N}(\zeta^{\epsilon}(0))-f_{0,P_{s}}(\zeta(0))\vert\leq
\]
\begin{equation}
\intop d\Lambda_{P_{s-1}}\,\left(1-\chi\left\{ N^{P_{s}}(\epsilon_{0})\right\} \right)\chi_{\lambda,E,\delta,q}\,\left[\vert f_{0,P_{s}}^{N}(\zeta^{\epsilon}(0))-f_{0,P_{s}}(\zeta^{\epsilon}(0))\vert+\vert f_{0,P_{s}}(\zeta^{\epsilon}(0))-f_{0,P_{s}}(\zeta(0))\vert\right]
\end{equation}
From the definition of the initial datum it turns out that 
\begin{equation}
\vert f_{0,P_{s}}^{N}(\zeta^{\epsilon}(0))-f_{0,P_{s}}(\zeta^{\epsilon}(0))\vert\leq\Vert g_{0}\Vert_{\infty}\vert C_{P_{s},\beta}-C_{\beta}^{P_{s}}\vert
\end{equation}
A straightforward calculation from the definition (\ref{eq:cbet1})
and (\ref{eq:cbet2}) gives us that 
\begin{equation}
\vert C_{P_{s},\beta}-C_{\beta}^{P_{s}}\vert\leq2^{2(s+1)}\epsilon^{3}
\end{equation}
Moreover outside the set $N^{P_{s}}(\epsilon_{0})$ the velocities
of the BBF and of the IBF are the same and also $p_{1}^{\epsilon}(s)=p_{1}(s)\,0\leq s\leq t$,
it follows that 
\begin{equation}
\vert f_{0,P_{s}}(\zeta^{\epsilon}(0))-f_{0,P_{s}}(\zeta(0))\vert=\vert C_{N,\beta}e^{-\frac{\beta}{2}\vert\xi^{\epsilon}(0)\vert^{2}}g_{0}(p_{1}^{\epsilon}(0),\xi_{1}^{\epsilon}(0))-C_{N,\beta}e^{-\frac{\beta}{2}\vert\xi(0)\vert^{2}}g_{0}(r_{1}^{\epsilon}(0),\xi_{1}^{\epsilon}(0))\vert=0
\end{equation}
Finally we have that
\[
\sum_{j_{1}=0}^{1}...\sum_{j_{s}=0}^{2^{s}-1}\left(C\alpha\right)^{P_{s}-1}\sum_{\sigma_{P_{s}-1}}\sum_{\Gamma(P_{s}-1)}\intop d\Lambda_{P_{s-1}}\,\left(1-\chi\left\{ N^{P_{s}}(\epsilon_{0})\right\} \right)\chi_{\lambda,E,\delta,q}\,\vert\,f_{0,P_{s}}^{N}(\zeta^{\epsilon}(0))-\,f_{0,P_{s}}(\zeta(0))\vert\leq
\]

\begin{equation}
\left(C\alpha t\right)^{2^{s+1}}2^{2s(s+1)}\epsilon^{3-\lambda}
\end{equation}
In the second case we use the estimates of Lemma \ref{lem:Ric} to
obtain that
\[
\sum_{j_{1}=0}^{1}...\sum_{j_{s}=0}^{2^{s}-1}\left(C\alpha\right)^{P_{s}-1}\sum_{\sigma_{P_{s}-1}}\sum_{\Gamma(P_{s}-1)}\intop d\Lambda_{P_{s-1}}\,\chi\left\{ N^{P_{s}}(\epsilon_{0})\right\} \,\vert\chi_{\lambda,E,\delta,q}^{\epsilon}\,f_{0,P_{s}}^{N}(\zeta^{\epsilon}(0))-\chi_{\lambda,E,\delta,q}\,f_{0,P_{s}}^{\alpha}(\zeta(0))\vert
\]
\[
\leq\epsilon^{-\lambda}\Vert g_{0}\Vert_{\infty}\left(C\alpha t\right)^{2^{s+1}}E^{8}2^{(s+4)(s+1)}\left(\epsilon_{0}^{\frac{2}{5}}+\frac{\epsilon_{0}^{\frac{4}{5}}}{\delta^{2}}+\epsilon_{0}^{\frac{4}{5}}\right)\leq
\]
\begin{equation}
\Vert g_{0}\Vert_{\infty}\epsilon^{\frac{1}{20}}\left(C\log\log N\right)^{\sqrt{\log N}}\left(\log N\right)^{4\log\log N}
\end{equation}
We have proved that 
\begin{equation}
\Vert f_{1,s}^{N,Err_{5}}(t)-f_{1,s}^{\alpha,Err_{5}}(t)\Vert_{\infty}\leq\Vert g_{0}\Vert_{\infty}\epsilon^{\frac{1}{20}}\left(C\log\log N\right)^{\sqrt{\log N}}\left(\log N\right)^{4\log\log N}
\end{equation}
The remainders can be easily handled with the estimates proved in
Lemmas \ref{lem:Err1}-\ref{lem:Err5}. It follows that
\[
\sum_{l=1}^{5}\Vert f_{1,s}^{N,Err_{l-1}}-f_{1,s}^{N,Err_{l}}\Vert_{\infty}+\sum_{l=0}^{5}\Vert f_{1,s}^{\alpha,Err_{l-1}}(t)-f_{1,s}^{\alpha,Err_{l}}\Vert_{\infty}\leq\Vert g_{0}\Vert_{\infty}\left(C\alpha t\right)^{2^{s+1}}2^{(s+2)(s+1)}
\]
\begin{equation}
\left(\epsilon^{\frac{q}{2}-\lambda}+\epsilon^{\lambda}+\epsilon^{-\lambda}\delta+e^{-\beta E^{2}}\right)\leq\Vert g_{0}\Vert_{\infty}\left(C\log\log N\right)^{\sqrt{\log N}}\left(\log N\right)^{\frac{\log\log N}{\log2}}\left(\epsilon^{\frac{1}{32}}+\frac{1}{N}\right)
\end{equation}
Summarizing, we have that 
\[
\Vert\widetilde{f_{1}^{N}}(t)-f_{1}^{\alpha}(t)\Vert_{\infty}\leq\Vert\widetilde{f_{1,s}^{N}}(t)-f_{1,s}^{N}(t)\Vert_{\infty}+\Vert f_{1,s}^{N}(t)-f_{1,s}^{\alpha}(t)\Vert_{\infty}+\Vert\widetilde{R_{N}^{s}}(t)\Vert_{\infty}+\Vert R^{s}(t)\Vert_{\infty}\leq
\]
\[
\frac{C\Vert g_{0}\Vert_{\infty}}{\log\log N}+\Vert g_{0}\Vert_{\infty}\left(C\log\log N\right)^{\sqrt{\log N}}\left(\log N\right)^{\left(\frac{\log\log N}{\log2}\right)^{3}}\left[\epsilon^{\frac{1}{20}}+\frac{1}{N}\right]
\]
If we send $N\rightarrow\infty$, $\epsilon\rightarrow0$ with $N\epsilon^{2}\cong C\left(\log\log N\right)^{\frac{1}{2}}$
we obtain the proof of the theorem. \pagebreak{}
\end{proof}

\section{From Linear Boltzmann to Linear Landau}

\subsection{Existence of semigroups}

In this section we want to prove that the solution of the Linear Boltzmann
equation converges as $\alpha\rightarrow\infty$ to the solution of
the Linear Landau equation. For this purpose we rewrite in the following
way the linear Boltzmann and Landau equations 

\begin{equation}
\begin{cases}
\partial_{t}f=G_{\alpha}(f)\\
f(x,v,0)=f_{0}(x,v)
\end{cases}
\end{equation}

\begin{equation}
\begin{cases}
\partial_{t}f=G(f)\\
f(x,v,0)=f_{0}(x,v)
\end{cases}
\end{equation}
where 
\begin{equation}
G_{\alpha}(f)=Q_{B}(f)-v\cdot\nabla_{x}f\label{eq:groupB}
\end{equation}
 and 
\begin{equation}
G(f)=Q_{L}(f)-v\cdot\nabla_{x}f.\label{eq:groupL}
\end{equation}

Now we want to set the problem in the Hilbert space $\mathbf{H}=L^{2}\left(\Gamma\times\mathbb{R}^{3},dxd\mu\right)$
where $d\mu=M_{\beta}(v)dv$. This space arises naturally from the
definition of the operators $G$ and $G_{\alpha}$. Indeed, we have
that $G_{\alpha}$ and $G$ are unbounded linear operators densely
defined respectively on $D(G_{\alpha})=H^{1}(\Gamma,dx)\times L^{2}(\mathbb{R}^{3},d\mu)$
and $D(G)=H^{1}(\Gamma,dx)\times H^{2}(\mathbb{R}^{3},d\mu)$, where
$H^{1}$ and $H^{2}$ denote the usual Sobolev spaces.

The main motivation to introduce $\mathbf{H}$ is the following lemma:
\begin{lem}
\label{lem:OperPorp}The operators \textup{$Q_{B}(f)$ and $Q_{L}(f)$
are well defined as }self-adjoint operators\textup{ on $L^{2}(\mathbb{R}^{3},d\mu)$
and $H^{2}(\mathbb{R}^{3},d\mu)$ respectively. }Moreover for the
operators $G$ and $G_{\alpha}$, defined in (\ref{eq:groupB}) and
(\ref{eq:groupL}), we have that $\forall f\in\mathbf{H}$ and $\forall g\in D(G)$
\begin{equation}
\left(G_{\alpha}^{*}f,f\right)=(f,G_{\alpha}f)\leq0\label{eq:Aa1}
\end{equation}
and 
\begin{equation}
(G^{*}g,g)=(g,Gg)\leq0\label{eq:Aa2}
\end{equation}
i.e. $G_{\alpha}$ and $G$ are dissipative operators. Furthermore
$G_{\alpha}$ and $G$ are closed operators.
\end{lem}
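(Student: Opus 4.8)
The plan is to verify the three assertions in turn: self-adjointness of the collision operators on the appropriate velocity spaces, dissipativity of $G_\alpha$ and $G$, and closedness. For the self-adjointness of $Q_B$, I would work in $L^2(\mathbb{R}^3,d\mu)$ and write, for $f,g\in L^2(d\mu)$,
\[
(Q_B(f),g)_{L^2(d\mu)}=\alpha\intop dv\,M_\beta(v)\intop dv_1\,M_\beta(v_1)\intop_{\nu\cdot V>0}d\nu\,\vert\nu\cdot V\vert\left[f(v')-f(v)\right]g(v).
\]
The key point is the microreversibility of the collision: the map $(v,v_1,\nu)\mapsto(v',v_1',\nu')$ preserves $dv\,dv_1\,d\nu$ (this is essentially Lemma \ref{lem:PresLeb}) and, crucially, $M_\beta(v)M_\beta(v_1)=M_\beta(v')M_\beta(v_1')$ together with $\vert\nu\cdot V\vert=\vert\nu'\cdot V'\vert$. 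Using the involution $(v,v_1)\leftrightarrow(v',v_1')$ one symmetrizes the bilinear form into
\[
(Q_B(f),g)_{L^2(d\mu)}=-\frac{\alpha}{2}\intop\!\!\intop\!\!\intop dv\,dv_1\,d\nu\,M_\beta(v)M_\beta(v_1)\vert\nu\cdot V\vert\left[f(v')-f(v)\right]\left[g(v')-g(v)\right],
\]
from which both symmetry $(Q_B(f),g)=(f,Q_B(g))$ and the sign $(Q_B(f),f)\le 0$ are immediate. One then checks that $Q_B$ with the natural maximal domain is self-adjoint, e.g. by noting it is the difference of a bounded multiplication (the collision frequency $\nu(v)=\alpha\intop dv_1 M_\beta(v_1)\intop d\nu\,\vert\nu\cdot V\vert$, which is bounded since $\Phi$ is short range) and a bounded symmetric integral operator, so $Q_B$ is itself bounded and symmetric, hence self-adjoint on all of $L^2(d\mu)$. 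For $Q_L$ one integrates by parts in the original divergence form \eqref{eq:LandOp}: writing $Q_L=\tilde Q_L$ (the identity $A=B$ was established in Section 3.2) and using the symmetric form
\[
(Q_L(f),g)_{L^2(d\mu)}=-A\intop\!\!\intop dv\,dv_1\,M_\beta(v)M_\beta(v_1)\frac{1}{\vert V\vert}\left(P_V^\perp\nabla_v f,\nabla_v g\right),
\]
obtained by transferring the $v$-divergence onto $g$ and the $v_1$-derivative onto $M_\beta(v_1)$ as in the computations (\ref{eq:ut1})--(\ref{eq:ut6}), again gives symmetry and $(Q_L(f),f)\le 0$ since $P_V^\perp$ is a nonnegative projector. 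Self-adjointness on $H^2(\mathbb{R}^3,d\mu)$ follows because $Q_L$ is an elliptic-type second order operator whose associated quadratic form is closed on $H^1(d\mu)$ with form domain compactly controlling the symmetric form above; alternatively one invokes the known self-adjoint realization of the linearized Landau operator (e.g. as in \cite{Kirkpatrick2009}).

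**Dissipativity.** With the quadratic-form identities in hand, the dissipativity of $G_\alpha$ and $G$ is almost automatic. For any $f\in\mathbf{H}$ in the relevant domain,
\[
(f,G_\alpha f)_{\mathbf H}=(f,Q_B(f))_{\mathbf H}-(f,v\cdot\nabla_x f)_{\mathbf H}.
\]
The collision term is $\le 0$ by the computation above, now integrated additionally in $x\in\Gamma$ against $dx$. The transport term vanishes: $\intop_\Gamma\intop_{\mathbb{R}^3}f\,(v\cdot\nabla_x f)\,M_\beta(v)\,dv\,dx=\tfrac12\intop_{\mathbb{R}^3}M_\beta(v)\,v\cdot\left(\intop_\Gamma\nabla_x(f^2)\,dx\right)dv=0$ because $\Gamma$ is a torus (no boundary) and $M_\beta(v)$ does not depend on $x$. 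Hence $(f,G_\alpha f)_{\mathbf H}\le 0$, and the same argument with $Q_L$ in place of $Q_B$ gives $(g,Gg)_{\mathbf H}\le 0$ for $g\in D(G)$. The equalities $(G_\alpha^* f,f)=(f,G_\alpha f)$ and $(G^* g,g)=(g,Gg)$ hold because for real $f$ the quantity $(f,Af)$ for a densely defined operator $A$ equals $(A^*f,f)$ whenever $f\in D(A)\cap D(A^*)$; here $v\cdot\nabla_x$ is skew-adjoint on $\mathbf H$ (again by the torus integration by parts, with no boundary contribution) and $Q_B,Q_L$ are self-adjoint, so $G_\alpha$ and $G$ have adjoints $G_\alpha^*=Q_B+v\cdot\nabla_x$, $G^*=Q_L+v\cdot\nabla_x$ and the stated identities follow.

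**Closedness.** Finally, $G_\alpha$ is a bounded perturbation of the skew-adjoint (hence closed) operator $-v\cdot\nabla_x$ on its domain $D(-v\cdot\nabla_x)=H^1(\Gamma,dx)\times L^2(\mathbb{R}^3,d\mu)$: indeed $Q_B$ is bounded on $L^2(d\mu)$, so $G_\alpha=-v\cdot\nabla_x+Q_B$ with $D(G_\alpha)=D(v\cdot\nabla_x)$ is closed. For $G$ the situation is the standard one for the linearized Landau operator: $-v\cdot\nabla_x$ is closed and skew-adjoint, and $Q_L$ is self-adjoint (hence closed) on $H^2(\mathbb{R}^3,d\mu)$; one checks on the core $C_c^\infty(\Gamma\times\mathbb{R}^3)$ (dense in $D(G)$) that the graph norm of $G$ is equivalent to $\Vert f\Vert_{\mathbf H}+\Vert v\cdot\nabla_x f\Vert_{\mathbf H}+\Vert Q_L f\Vert_{\mathbf H}$ and that a Cauchy sequence in graph norm has limits of all three pieces, using the self-adjointness (closedness) of each summand and the fact that the transport and collision parts act in complementary variables.

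\textbf{Main obstacle.} The only genuinely delicate step is the self-adjointness (as opposed to mere symmetry and dissipativity) of $Q_L$ on $H^2(\mathbb{R}^3,d\mu)$ and the verification that $G$ is closed with the stated domain; the collision kernel $\frac{1}{\vert V\vert}$ is singular on the diagonal $v=v_1$ and degenerate at large velocities, so care is needed to identify the correct operator core and to control $Q_L$ relative to $-v\cdot\nabla_x$. This is handled exactly as in the references \cite{Kirkpatrick2009,Desvillettes2001}, to which one appeals for the functional-analytic construction of the Landau semigroup; all the other assertions reduce to the microreversibility of the collision and the integration-by-parts identities already spelled out in Section 3, plus the absence of boundary terms on the torus.
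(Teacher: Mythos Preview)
Your overall strategy matches the paper's: symmetry of $Q_B$ via the collision involution (Lemma~\ref{lem:PresLeb}), symmetry of $Q_L$ via integration by parts leading to the quadratic form $-\intop M_\beta(v)M_\beta(v_1)\vert V\vert^{-1}(P_V^\perp\nabla_v f,\nabla_v g)$, vanishing of the transport contribution on the torus, and hence dissipativity. The paper carries out exactly these computations in Appendix~I (arriving at the analogue of your symmetric form in~(\ref{eq:au5})), so on these points you are aligned with the source.

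There is, however, a concrete error that propagates through your argument. You assert that the collision frequency
\[
\nu(v)=\alpha\intop dv_1\,M_\beta(v_1)\intop_{\nu\cdot V>0}d\nu\,\vert\nu\cdot V\vert
\]
is bounded ``since $\Phi$ is short range''. This is false: the angular integral gives $c\vert V\vert$, and $\intop M_\beta(v_1)\vert v-v_1\vert\,dv_1\sim \vert v\vert$ as $\vert v\vert\to\infty$, so $\nu(v)$ grows linearly. Consequently $Q_B$ is \emph{not} a bounded operator on $L^2(d\mu)$, and your route to self-adjointness (``bounded and symmetric, hence self-adjoint'') fails. The same mistake invalidates your closedness argument for $G_\alpha$, which you present as a \emph{bounded} perturbation of the skew-adjoint transport operator. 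The paper sidesteps this by simply asserting $D(Q_B)=D(Q_B^*)$ and, for closedness, invokes the von~Neumann identity $G_\alpha^{**}=\overline{G_\alpha}$ together with $G_\alpha^{**}=G_\alpha$; your argument would need a genuine replacement here, for instance treating the gain part of $Q_B$ as relatively bounded with respect to the multiplication operator $f\mapsto\nu(v)f$ (Kato--Rellich), or working directly with the maximal domain $\{f\in L^2(d\mu):\nu f\in L^2(d\mu)\}$.
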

We gives the proof of this lemma in the appendix I. 

Thanks to these properties of the operators we can use the following
theorem.
\begin{thm}
(\cite{Engel}) Let $A$ be a linear operator densely defined on a
linear subspace $D(A)$ of the Hilbert space $\mathbf{H}$. If both
$A$ and $A^{*}$ are dissipative operators then $\overline{A}$ generate
a contraction semigroup on $\mathbf{H}$.
\end{thm}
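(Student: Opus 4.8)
The plan is to identify this statement with the Lumer--Phillips generation theorem and to supply the surjectivity of $I-\overline A$ through a duality argument using the dissipativity of $A^{*}$. All inner products and norms below are those of $\mathbf{H}$, and dissipativity of a densely defined operator $B$ is understood as $\mathrm{Re}(Bx,x)\le 0$ for every $x\in D(B)$.

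First I would recall that a densely defined dissipative operator is closable and that its closure is again dissipative. Indeed, if $x_{n}\to 0$ and $Ax_{n}\to y$, then for each $z\in D(A)$ and $t>0$ one applies the dissipativity of $A$ to $x_{n}+tz$, lets $n\to\infty$, divides by $t$, and sends $t\to 0^{+}$, obtaining $\mathrm{Re}(y,z)\le 0$; letting $z$ range over the dense set $D(A)$ and replacing $z$ by $-z$ and by $iz$ forces $y=0$, so $A$ is closable, and passing to limits shows $\overline A$ is dissipative. This step is routine.

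Next, from $\|(I-\overline A)x\|\,\|x\|\ge \mathrm{Re}\big((I-\overline A)x,x\big)=\|x\|^{2}-\mathrm{Re}(\overline A x,x)\ge\|x\|^{2}$ I get the a priori bound $\|(I-\overline A)x\|\ge\|x\|$ for all $x\in D(\overline A)$; together with the closedness of $\overline A$ this makes $I-\overline A$ injective with closed range. To see that the range is also dense, suppose it were not: then there is $y\ne 0$ with $(y,(I-\overline A)x)=0$ for all $x\in D(\overline A)$, in particular $(y,Ax)=(y,x)$ for all $x\in D(A)$. By the definition of the adjoint this says $y\in D(A^{*})$ and $A^{*}y=y$, whence $\mathrm{Re}(A^{*}y,y)=\|y\|^{2}>0$, contradicting the dissipativity of $A^{*}$. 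Hence $\mathrm{range}(I-\overline A)=\mathbf{H}$.

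Finally, $\overline A$ is densely defined, dissipative, and $I-\overline A$ is surjective, so the Lumer--Phillips theorem gives that $\overline A$ generates a contraction semigroup on $\mathbf{H}$. The only genuinely substantive point is the density of $\mathrm{range}(I-\overline A)$, that is, converting its failure into the eigenvalue relation $A^{*}y=y$ and contradicting dissipativity of $A^{*}$; the remaining ingredients are the standard Hilbert-space bookkeeping underlying Lumer--Phillips. When the theorem is applied to the operators $G_{\alpha}$ and $G$, its hypotheses — dissipativity of the operator, dissipativity of its adjoint, and closedness — are furnished verbatim by Lemma~\ref{lem:OperPorp}, so nothing further is required in that application.
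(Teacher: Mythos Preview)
Your proof is correct and is the standard route to this corollary of the Lumer--Phillips theorem: closability of densely defined dissipative operators, the a priori bound $\|(I-\overline A)x\|\ge\|x\|$, and the duality argument turning failure of range density into an eigenvector of $A^{*}$ at eigenvalue $1$, contradicting dissipativity of $A^{*}$. Each step is carried out cleanly.

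Note, however, that the paper does not prove this theorem at all: it is quoted as a black box from the reference \cite{Engel} (Engel--Nagel's monograph on operator semigroups), and is used only as an input to establish existence of the semigroups $T_{\alpha}(t)$ and $T(t)$. So there is no ``paper's own proof'' to compare against; you have supplied a self-contained argument where the paper simply cites a textbook result.
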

This theorem ensures the existence of $T_{\alpha}(t)$ and $T(t)$,
the semigroups with infinitesimal generator given by $G_{\alpha}$
and $G$ respectively. Indeed, from Lemma \ref{lem:OperPorp} we have
that $G_{\alpha}$ and $G$ are closed operators and that $G_{\alpha}^{*}$
and $G^{*}$ are dissipative operators, then we have the existence
of $T_{\alpha}(t)$ and $T(t)$. 

\subsection{Convergence of the semigroups}

The last step of our proof is to show that the semigroup generated
by $G_{\alpha}(f)$ strongly converges to the semigroup generated
by $G(f)$ in the limit $\alpha\rightarrow0$. We use the following
theorem, that gives necessary and sufficient conditions for the convergence.
\begin{thm}
\label{thm:(Trotter-Kato)}(Trotter-Kato). Let $A$ and $A_{n}$ be
the generators of the contraction semigroups $T(t)$ and $T_{n}(t)$
respectively. Let $D$ be a core for $A$. Suppose that $D\subseteq D(A_{n})\,\,\forall n$
and that $\forall f\in D$ $A_{n}f\rightarrow Af$. Then 
\begin{equation}
\Vert T_{n}f-Tf\Vert_{H}\rightarrow0\,\,\,\,as\,\,n\rightarrow+\infty
\end{equation}
$\forall f\in\mathbf{H}$ and uniformly for $t\in[0,T]$ for any $T>0$.
\end{thm}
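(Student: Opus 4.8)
The statement is the classical Trotter--Kato approximation theorem; following \cite{Engel} I would prove it in two stages. First I would upgrade the pointwise convergence $A_nf\to Af$ on the core $D$ to strong convergence of the resolvents $R(\lambda,A_n)\to R(\lambda,A)$; then I would pass from resolvent convergence to strong convergence of the semigroups, uniformly on compact time intervals. Throughout one uses the Hille--Yosida bounds for contraction generators, namely $\|R(\lambda,A)\|\le\lambda^{-1}$ and $\|R(\lambda,A_n)\|\le\lambda^{-1}$ for every $\lambda>0$ and every $n$.

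\emph{Step 1 (resolvent convergence).} Fix $\lambda>0$. Since $D$ is a core for $A$, the subspace $(\lambda-A)D$ is dense in $\mathbf H$. For $g=(\lambda-A)f$ with $f\in D$ one has $R(\lambda,A)g=f$, while decomposing $\lambda-A=(\lambda-A_n)+(A_n-A)$ gives
\[
R(\lambda,A_n)g=f+R(\lambda,A_n)(A_n-A)f ,
\]
so that $\|R(\lambda,A_n)g-R(\lambda,A)g\|\le\lambda^{-1}\|(A_n-A)f\|\to0$ by hypothesis. Since the operators $R(\lambda,A_n)$ are uniformly bounded by $\lambda^{-1}$ and $(\lambda-A)D$ is dense, an $\varepsilon/3$ argument extends this to $R(\lambda,A_n)g\to R(\lambda,A)g$ for every $g\in\mathbf H$.

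\emph{Step 2 (resolvents to semigroups).} I would introduce the Yosida regularisations $A^{(\mu)}=\mu^{2}R(\mu,A)-\mu I$ and $A_n^{(\mu)}=\mu^{2}R(\mu,A_n)-\mu I$. These are bounded operators generating uniformly continuous contraction semigroups, and for $f$ in $D(A)$, resp. $D(A_n)$, one has the elementary Duhamel bounds $\|T(t)f-e^{tA^{(\mu)}}f\|\le t\,\|(A-A^{(\mu)})f\|$ and $\|T_n(t)f-e^{tA_n^{(\mu)}}f\|\le t\,\|(A_n-A_n^{(\mu)})f\|$. For $\mu$ fixed, Step 1 gives $A_n^{(\mu)}\to A^{(\mu)}$ strongly, hence uniformly on the compact trajectory $\{e^{sA^{(\mu)}}g:s\in[0,T]\}$, so by Duhamel $\|e^{tA_n^{(\mu)}}g-e^{tA^{(\mu)}}g\|\to0$ uniformly on $[0,T]$ for each $g\in\mathbf H$. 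Adding the three estimates, for $f\in D$,
\[
\|T_n(t)f-T(t)f\|\le t\,\|(A_n-A_n^{(\mu)})f\|+\|e^{tA_n^{(\mu)}}f-e^{tA^{(\mu)}}f\|+t\,\|(A-A^{(\mu)})f\| ,
\]
and one lets first $n\to\infty$ with $\mu$ fixed, then $\mu\to\infty$. A final density plus uniform-boundedness argument then extends $T_n(t)f\to T(t)f$ from $f\in D$ to all of $\mathbf H$, uniformly on $[0,T]$.

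\emph{Main obstacle.} The delicate point is the first term: $\|(A_n-A_n^{(\mu)})f\|=\|(I-\mu R(\mu,A_n))A_nf\|$ must be controlled \emph{uniformly in $n$} before $\mu$ is sent to infinity, and this is exactly where one must exploit the common core $D$ rather than the varying domains $D(A_n)$. Splitting $A_nf=(A_n-A)f+Af$,
\[
\|(I-\mu R(\mu,A_n))A_nf\|\le 2\,\|(A_n-A)f\|+\|(I-\mu R(\mu,A_n))Af\| ,
\]
the first summand vanishes as $n\to\infty$ by hypothesis, and the second converges as $n\to\infty$, by Step 1, to $\|(I-\mu R(\mu,A))Af\|=\|(A-A^{(\mu)})f\|$, which is then made small by choosing $\mu$ large. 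Getting this chain of limits in the right order is the whole content of the theorem; everything else is routine bookkeeping with the contraction bounds. In the application to the linear Boltzmann and Landau equations one takes $A_n=G_\alpha$, $A=G$, $D=D(G)$: by Lemma \ref{lem:OperPorp} one has $D(G)\subseteq D(G_\alpha)$ and $D(G)$ is trivially a core for $G$, so that the only hypothesis left to check is $G_\alpha f\to Gf$ in $\mathbf H$ for $f\in D(G)$, which is precisely the grazing-limit asymptotics $Q_B\to Q_L$ established in (\ref{eq:AsymLinBolt}), read now as a limit in $L^2(d\mu)$.
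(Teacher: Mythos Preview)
The paper does not give its own proof of this statement: it simply quotes the Trotter--Kato theorem and refers to \cite{Engel} for a proof. Your sketch is a correct outline of the standard argument found there---resolvent convergence on $(\lambda-A)D$ extended by density and uniform boundedness, followed by the Yosida-approximation sandwich---so there is nothing to compare at the level of technique. Your handling of the ``main obstacle'' (controlling $\|(A_n-A_n^{(\mu)})f\|$ uniformly in $n$ via the split $A_nf=(A_n-A)f+Af$ and Step~1) is exactly the delicate point in the proof and is done correctly.

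One small remark on the final application paragraph: the paper does \emph{not} take the core to be $D(G)$, but rather $D=C_p^\infty(\Gamma)\times C_0^\infty(\mathbb{R}^3)$, and the rigorous convergence $G_\alpha f\to Gf$ is established only for $f$ in this smooth class (Theorem~\ref{thm:conv}); the formal asymptotics (\ref{eq:AsymLinBolt}) in Section~3 is heuristic and does not by itself give an $\mathbf H$-limit on all of $D(G)$. This does not affect the correctness of your proof of Trotter--Kato, but your description of how the theorem is applied here is slightly off.
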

A proof of this theorem can be found in \cite{Engel}. 

We want to apply this theorem to prove that $T_{\alpha}f\rightarrow Tf$.
We note that $D=C_{p}^{\infty}(\Gamma)\times C_{0}^{\infty}(\mathbb{R}^{3})$
is a core for $G_{\alpha}$ and $G$ as follows by a direct insepction.
Then we use the steps of section 3.2 to prove the strong convergence
of the operators on this set.
\begin{thm}
\label{thm:conv}Let $G_{\alpha}$ and $G$ be defined as in (\ref{eq:groupB})
and (\ref{eq:groupL}). Then $\forall f\in D$ it results that
\[
\Vert\left(G_{\alpha}-G\right)f\Vert_{\mathbf{H}}\underset{\alpha\rightarrow\infty}{\longrightarrow}0
\]
\end{thm}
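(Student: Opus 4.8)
The starting observation is that the free--transport terms in $G_\alpha$ and $G$ coincide, so that $(G_\alpha-G)f=Q_B(f)-Q_L(f)$, and that $Q_B,Q_L$ act only on the velocity variable, the spatial variable $x\in\Gamma$ being a mere parameter. Since $D=C_p^\infty(\Gamma)\times C_0^\infty(\mathbb{R}^3)$, it suffices to treat $f$ of the product form $a(x)h(v)$ and sum, so that the whole statement reduces to proving that for every $h\in C_0^\infty(\mathbb{R}^3)$ one has $\Vert Q_Bh-Q_Lh\Vert_{L^2(\mathbb{R}^3,M_\beta dv)}\to0$ as $\alpha\to\infty$, with a rate depending only on $\mathrm{supp}\,h$ and finitely many norms $\Vert h\Vert_{C^k}$; integrating the bounded $x$--variable over the compact torus $\Gamma$ then gives the claim. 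In fact I will produce a pointwise estimate $|Q_Bh(v)-Q_Lh(v)|\le C_h\,\omega(\alpha)\,W(v)$ with $\omega(\alpha)\to0$ and $W\in L^2(M_\beta dv)$ a fixed weight absorbing the Gaussian bookkeeping for large $|v|$, which is more than enough for the $\mathbf{H}$--norm convergence.

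The mechanism is the Taylor expansion of $h(v'(\theta))$ in the deflection angle $\theta$ already performed, formally, in Section~3.1; the new ingredient is to make the error \emph{uniform}, and for that I split the $v_1$--integration according to $|V|:=|v-v_1|>\eta$ or $|V|\le\eta$, with $\eta=\eta(\alpha)=\alpha^{-a}$, $a\in(0,1)$ to be fixed. On the near--diagonal set $\{|V|\le\eta\}$ I bound the two operators separately and crudely. The contribution of $Q_Lh$ there is $O_h(\eta)$, because the $|V|^{-3}$ prefactor of the Landau kernel multiplies $|V|^2\triangle h-(V,D^2h\,V)-4V\cdot\nabla h$, producing only the locally integrable singularities $|V|^{-1}$ and $|V|^{-2}$. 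For $Q_Bh$ I use $|h(v')-h(v)|\le\Vert\nabla h\Vert_\infty|v'-v|\le\Vert\nabla h\Vert_\infty|V|$ together with $\int_{\nu\cdot V>0}|\nu\cdot V|\,d\nu\le C|V|$, so that the prefactor $\alpha$ combines with $\int_{|V|\le\eta}M_\beta(v_1)|V|^2\,dv_1\le C\eta^5$ to give a contribution $O_h(\alpha\eta^5)$; this is small provided $a>1/5$.

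On the grazing set $\{|V|>\eta\}$ I expand $h(v'(\theta))-h(v'(0))$ to second order in $\theta$ with integral remainder $R_3$. The first--order term vanishes upon integration, by the parity of $\hat e(\psi)$ in $\psi$, exactly as in Section~3.1. For the second--order term I invoke Lemma~\ref{lem:theta} in the sharpened form $\theta(\rho,\alpha)^2=\frac{4}{|V|^4\alpha}\gamma(\rho)^2(1+o(1))$, uniformly for $|V|>\eta$ (this two--sided asymptotic is the quantitative content of the formal computation already used in Section~3.1); multiplying by the prefactor $\alpha$ and carrying out the $\nu$--integral exactly as in Section~3.1 reproduces $Q_Lh$ restricted to $\{|V|>\eta\}$, with the constant $A=B$, up to an error which is $o(1)$ uniformly because $\alpha\cdot o(\alpha^{-1})$ tested against the kernel integrates, thanks to the weight $M_\beta(v_1)$, to a bounded quantity. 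Together with the $\{|V|\le\eta\}$ estimate for $Q_L$, this term equals $Q_Lh+o(1)$. Finally, since $\bigl|\tfrac{d^k}{d\theta^k}v'(\theta)\bigr|\le|V|/2$ for every $k\ge1$, one has $|R_3|\le C_h\,|V|(1+|V|^2)\,\theta^3$, and on $\{|V|>\eta\}$ Lemma~\ref{lem:theta} gives $\theta\le C\bigl(\tfrac{\gamma(\rho)}{|V|^2\sqrt\alpha}+\tfrac{1}{|V|^4\alpha}\bigr)$; using $|\nu\cdot V|\,d\nu=|V|\,\rho\,d\rho\,d\psi$, the boundedness of $M(\rho,\alpha)$, and the finiteness of $\int_0^1\rho\,\gamma(\rho)^3\,d\rho$, the cubic contribution is bounded by a fixed negative power of $\eta$ over a strictly positive power of $\alpha$, hence $o(1)$ as long as $a$ is not too large (e.g.\ $a<2/7$). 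Choosing any admissible exponent, say $a=1/4$, all three pieces tend to zero and the theorem follows.

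The step I expect to be the real obstacle is the uniform control of the error on the grazing set: both the $|V|^{-3}$--type singularity of the limiting Landau kernel and the breakdown of the grazing approximation as $|V|\to0$ conspire to force the intermediate cutoff $\eta$, and one must verify that the constraint from the remainder $R_3$ (which wants $\eta$ not too small relative to $\alpha$) and the constraint from the near--diagonal region (which wants $\eta\ll\alpha^{-1/5}$) leave a non--empty window for $a$; one must also keep all bounds expressed through $\Vert h\Vert_{C^3}$ and a fixed $L^2(M_\beta dv)$--weight, uniformly in the base velocity $v$, so that the conclusion is genuine $\mathbf{H}$--norm convergence rather than merely pointwise convergence.
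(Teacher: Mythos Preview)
Your proposal is correct and follows essentially the same strategy as the paper: introduce an $\alpha$--dependent cutoff $\eta=\alpha^{-a}$ on the relative speed $|V|$, bound the contributions of $Q_B$ and $Q_L$ separately and crudely on $\{|V|\le\eta\}$, and on $\{|V|>\eta\}$ carry out the Taylor expansion of Section~3.1 with the third--order Lagrange remainder controlled via Lemma~\ref{lem:theta}. The paper takes $a=4/15$ and uses the cruder bound $|f(v')-f(v)|\le2\Vert f\Vert_\infty$ on the near--diagonal (rather than your Lipschitz estimate), obtaining slightly different exponents, but the architecture of the argument and the balancing of constraints on $a$ are the same.
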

\begin{proof}
First we define the following operator 
\begin{equation}
Q_{B}^{c}(f)=\alpha\intop dv_{1}M_{\beta}(v_{1})\intop_{\nu\cdot V>0}d\nu\vert\nu\cdot V\vert\chi\left\{ \vert V\vert\geq\alpha^{-\frac{4}{15}}\right\} \left[f(v^{'})-f(v)\right]
\end{equation}
This is the Linear Boltzmann operator with a $\alpha\text{-depending}$
cutoff on the small relative velocities. Observe that $Q_{B}^{c}$
and $Q_{B}$ are asymptotically equivalent as $\alpha\rightarrow\infty$.
Indeed, we have that $\forall f\in D_{0}$

\begin{alignat}{1}
\Vert\left(Q_{B}^{c}-Q_{B}\right)f\Vert_{\mathbf{H}}^{2}= & \intop dx\intop d\mu(v)\left|\alpha\intop d\mu(v_{1})\intop_{\nu\cdot V>0}\chi\left\{ \vert V\vert<\alpha^{-\frac{4}{15}}\right\} \vert\nu\cdot V\vert\left[f(x,v^{'})-f(x,v)\right]d\nu\right|^{2}\nonumber \\
\leq & C\Vert f\Vert_{\infty}^{2}\intop dxd\mu(v)\left|\intop dv_{1}\alpha^{\frac{11}{15}}M_{\beta}(v_{1})\chi\left\{ \vert V\vert<\alpha^{-\frac{4}{15}}\right\} \right|^{2}\nonumber \\
\leq & C\Vert f\Vert_{\infty}^{2}\intop dxd\mu(v)\left(\alpha^{\frac{11}{15}}\intop\chi\left\{ \vert V\vert<\alpha^{-\frac{4}{15}}\right\} dv_{1}\right)^{2}
\end{alignat}
Since 
\begin{equation}
\intop\chi\left\{ \vert V\vert<\alpha^{-\frac{4}{15}}\right\} dv_{1}\leq C\alpha^{-\frac{12}{15}},
\end{equation}
we arrive to
\begin{equation}
\Vert\left(Q_{B}^{c}-Q_{B}\right)f\Vert^{2}\leq C\alpha^{-\frac{1}{15}}.
\end{equation}
We put the same cutoff on the operator $Q_{L}$ and we define 
\begin{equation}
Q_{L}^{c}=A\intop dv_{1}M_{\beta}(v_{1})\frac{1}{\vert V\vert^{3}}\left[\vert V\vert^{2}\triangle f(v)-\left(V,D^{2}V\right)-4V\cdot\nabla_{v}f(v)\right]\left\{ \vert V\vert\geq\alpha^{-\frac{4}{15}}\right\} 
\end{equation}
Then $\forall f\in D$ we have
\begin{alignat}{1}
\Vert\left(Q_{L}^{c}-Q_{L}\right)f\Vert_{\mathbf{H}}^{2}= & \intop dxd\mu\left|\intop dv_{1}\frac{M_{\beta}(v_{1})A}{\vert V\vert^{3}}\left[\vert V\vert^{2}\triangle f(v)-\left(V,D^{2}V\right)-4V\cdot\nabla_{v}f(v)\right]\left\{ \vert V\vert\leq\alpha^{-\frac{4}{15}}\right\} \right|^{2}\nonumber \\
\leq & C(A,f)\alpha^{-\frac{8}{15}}
\end{alignat}
Now we want to prove that $Q_{B}^{c}$ converges strongly to $Q_{L}^{c}$
when $\alpha\rightarrow+\infty$. We have that for all $f\in D$ 
\[
\Vert\left(Q_{B}^{c}-Q_{L}^{C}\right)f\Vert_{\mathbf{H}}^{2}=\intop dx\intop d\mu(v)
\]
\[
\left|\intop d\mu(v_{1})\left\{ \intop_{\nu\cdot\mathrm{V}>0}\alpha\,\chi\left\{ \vert V\vert\geq\alpha^{-\frac{4}{15}}\right\} \vert v\cdot V\vert\left[f(x,v^{'})-f(x,v)\right]d\nu-\right.\right.
\]
\begin{equation}
\left.\left.\frac{A}{\vert V\vert^{3}}\left[\vert V\vert^{2}\triangle_{v}f(x,v)-\left(V,D_{v}^{2}(f(x,v))V\right)-4\mathrm{V}\cdot\nabla_{v}f(x,v)\right]\right\} \chi\left\{ \vert V\vert\geq\alpha^{-\frac{4}{15}}\right\} \right|^{2}
\end{equation}
We perform the same steps of section 3 to obtain: 
\[
\Vert\left(Q_{B}^{c}-Q_{L}^{c}\right)f\Vert_{\mathbf{H}}^{2}\leq
\]
\begin{equation}
C\intop dx\intop d\mu(v)\left|\intop d\mu(v_{1})\intop_{\nu\cdot V>0}\alpha\,\chi\left\{ \vert V\vert\geq\alpha^{-\frac{4}{15}}\right\} \vert\nu\cdot V\vert o(\alpha^{-1})\right|^{2}\label{eq:group1}
\end{equation}
For the second term we have to go further in the Taylor expansion
and use the Lagrange form for the remainder term. From Lemma (\ref{lem:theta})
it results that 
\begin{equation}
o(\alpha^{-1})=\frac{M^{2}(\rho,\alpha)}{\vert V\vert^{8}\alpha^{2}}+\frac{\theta^{3}}{3!}f^{'''}(\xi)
\end{equation}
 for a certain $\xi\in\left[0,\theta\right]$ . Therefore 
\[
\intop dx\intop d\mu(v)\left|\intop d\mu(v_{1})\intop_{\nu\cdot V>0}\alpha\,\chi\left\{ \vert V\vert\geq\alpha^{-\frac{1}{3}}\right\} \vert\nu\cdot V\vert o(\alpha^{-1})\right|^{2}\leq
\]
\begin{equation}
\intop dx\intop d\mu(v)\left|\intop d\mu(v_{1})\intop_{\nu\cdot\mathrm{V}>0}\alpha\,\chi\left\{ \vert V\vert\geq\alpha^{-\frac{4}{15}}\right\} \vert\nu\cdot V\vert\left[\frac{M^{2}(\rho,\alpha)}{\vert V\vert^{8}\alpha^{2}}+\frac{\theta^{3}}{3!}f^{'''}(\xi)\right]\right|^{2}
\end{equation}
Thanks to formula (\ref{eq:Thetaest}) we have that 
\begin{equation}
\vert\theta^{3}(\rho,\alpha)\vert\leq C\left(\alpha^{-\frac{3}{2}}\frac{\gamma^{3}(\rho)}{\vert V\vert^{6}}+\alpha^{-3}\frac{M^{3}(\rho,\alpha)}{\vert V\vert^{12}}\right)
\end{equation}
Furthermore from (\ref{eq:coord}) it follows that 
\begin{equation}
\vert f^{'''}(\xi)\vert\leq C(f)\vert V\vert
\end{equation}
and then we can write 
\begin{align}
\Vert\left(Q_{B}^{c}-Q_{L}\right)f\Vert_{\mathbf{H}}^{2}\leq & C_{1}(f,\gamma,M)\intop dx\intop d\mu(v)\nonumber \\
 & \left[\intop d\mu(v_{1})\left(\frac{\alpha^{-1}}{\vert V\vert^{6}}+\frac{\alpha^{-\frac{1}{2}}}{\vert V\vert^{4}}+\frac{\alpha^{-2}}{\vert V\vert^{10}}\right)\chi\left\{ \vert V\vert\geq\alpha^{-\frac{4}{15}}\right\} \right]^{2}\label{eq:group2}
\end{align}
A change of variables on the right hand side of (\ref{eq:group2})
gives us 
\begin{equation}
\intop d\mu(v_{1})\left(\frac{\alpha^{-1}}{\vert V\vert^{6}}+\frac{\alpha^{-\frac{1}{2}}}{\vert V\vert^{4}}+\frac{\alpha^{-2}}{\vert V\vert^{10}}\right)\chi\left\{ \vert V\vert\geq\alpha^{-\frac{1}{3}}\right\} \leq C\intop_{\alpha^{-\frac{4}{15}}}^{\infty}dr\left(\frac{\alpha^{-1}}{r^{4}}+\frac{\alpha^{-\frac{1}{2}}}{r^{2}}+\frac{\alpha^{-2}}{r^{8}}\right)\leq C\alpha^{\frac{1}{3}}\label{eq:group3}
\end{equation}
From formula (\ref{eq:group2}) and (\ref{eq:group3}) we have 
\begin{equation}
\Vert\left(Q_{B}^{c}-Q_{L}\right)f\Vert_{\mathbf{H}}^{2}\leq C_{2}(\gamma M)\alpha^{-\frac{2}{15}}
\end{equation}
Then we have 
\begin{equation}
\Vert\left(Q_{B}-Q_{L}\right)f\Vert_{\mathbf{H}}^{2}\leq C\alpha^{-\frac{1}{15}}
\end{equation}
and this proves our theorem.
\end{proof}
Finally we use Theorem \ref{thm:(Trotter-Kato)} and Theorem \ref{thm:conv}
to prove that the solution of the linear Boltzmann equation converge
to the solution of the linear Landau equation.
\begin{thm}
\label{thm:hconv}Let $g^{\alpha}(x,v,t)$ be the solution of the
linear Boltzmann equation and let $g(x,v,t)$ be the solution of the
linear Landau equation. Suppose that the initial datum of both equations
is given by $g_{0}(x,v)$. Then it results that 
\begin{equation}
\Vert g^{\alpha}(x,v)-g(x,v)\Vert_{\mathbf{H}}\rightarrow0
\end{equation}
when $\alpha\rightarrow0$.
\end{thm}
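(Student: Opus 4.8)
The plan is to deduce Theorem~\ref{thm:hconv} from the abstract Trotter--Kato theorem, Theorem~\ref{thm:(Trotter-Kato)}, applied to the generators $G$ and $G_{\alpha}$ defined in (\ref{eq:groupB}) and (\ref{eq:groupL}). By Lemma~\ref{lem:OperPorp} and the generation theorem of \cite{Engel} quoted before it, $G$ and $G_{\alpha}$ are closed, densely defined, with dissipative adjoints, hence generate the contraction semigroups $T(t)$ and $T_{\alpha}(t)$ on $\mathbf{H}$; by uniqueness of the evolution in the respective domains one has $g(\cdot,t)=T(t)g_{0}$ and $g^{\alpha}(\cdot,t)=T_{\alpha}(t)g_{0}$. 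Note that $g_{0}\in\mathbf{H}$: indeed $g_{0}\in L^{\infty}(\Gamma\times\mathbb{R}^{3})$ and, since $\int_{\Gamma\times\mathbb{R}^{3}}dx\,d\mu=1$ by (\ref{eq:cbet2}), one has $\|g_{0}\|_{\mathbf{H}}\le\|g_{0}\|_{\infty}$. Thus it suffices to show $\|T_{\alpha}(t)g_{0}-T(t)g_{0}\|_{\mathbf{H}}\to0$ in the grazing-collision limit (the relevant limit being $\alpha\to\infty$, consistently with Theorem~\ref{thm:conv}).

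To invoke Theorem~\ref{thm:(Trotter-Kato)} with $A=G$ and $A_{n}=G_{\alpha}$, three conditions must be checked on $D=C^{\infty}_{p}(\Gamma)\times C^{\infty}_{0}(\mathbb{R}^{3})$. First, $D$ is a core for $G$; this is already asserted in the text by direct inspection, the point being that elements of $D(G)=H^{1}(\Gamma,dx)\times H^{2}(\mathbb{R}^{3},d\mu)$ are approximated in graph norm by periodic mollifications truncated in $v$. Second, $D\subseteq D(G_{\alpha})=H^{1}(\Gamma,dx)\times L^{2}(\mathbb{R}^{3},d\mu)$ for every $\alpha$, which is immediate since $C^{\infty}_{0}(\mathbb{R}^{3})\subset L^{2}(\mathbb{R}^{3},d\mu)$. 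Third, and this is the only analytically substantive point, one needs $G_{\alpha}f\to Gf$ in $\mathbf{H}$ for each $f\in D$; but the transport terms in (\ref{eq:groupB}) and (\ref{eq:groupL}) cancel, so $G_{\alpha}f-Gf=Q_{B}(f)-Q_{L}(f)$, and this convergence is exactly the assertion of Theorem~\ref{thm:conv}, whose proof in fact furnishes the quantitative bound $\|(G_{\alpha}-G)f\|_{\mathbf{H}}\le C\alpha^{-1/30}$.

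With the three hypotheses in hand, Theorem~\ref{thm:(Trotter-Kato)} yields $\|T_{\alpha}(t)f-T(t)f\|_{\mathbf{H}}\to0$ for every $f\in\mathbf{H}$, uniformly for $t\in[0,T]$; specialising to $f=g_{0}$ gives the stated conclusion $\|g^{\alpha}(\cdot,t)-g(\cdot,t)\|_{\mathbf{H}}\to0$. The heart of the argument, the strong convergence $G_{\alpha}f\to Gf$ on the core $D$, has already been carried out as Theorem~\ref{thm:conv}, so I do not expect a genuine obstacle: the remaining work is functional-analytic bookkeeping, and the only points deserving care are (i) verifying explicitly that $D$ is a \emph{core} (and not merely dense) for $G$, which I would do by a mollification-plus-truncation argument controlling the $L^{2}$ norm of $Gf$ of the approximants, and (ii) the identification of the PDE solutions $g^{\alpha},g$ with the semigroup orbits $T_{\alpha}(t)g_{0},T(t)g_{0}$, which follows from the well-posedness already packaged in Lemma~\ref{lem:OperPorp} and the contraction-semigroup generation theorem.
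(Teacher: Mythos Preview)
Your proposal is correct and follows essentially the same approach as the paper: identify $g^{\alpha}(\cdot,t)=T_{\alpha}(t)g_{0}$ and $g(\cdot,t)=T(t)g_{0}$, then apply the Trotter--Kato theorem (Theorem~\ref{thm:(Trotter-Kato)}) using the strong convergence $G_{\alpha}f\to Gf$ on the core $D$ provided by Theorem~\ref{thm:conv}. You have in fact been more careful than the paper, spelling out the verification of the three Trotter--Kato hypotheses and correctly observing that the relevant limit is $\alpha\to\infty$ rather than the $\alpha\to0$ appearing in the statement.
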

\begin{proof}
Since $g^{\alpha}(t)=T_{\alpha}(t)g_{0}(x,v)$ and $g(t)=T(t)g_{0}(x,v)$
we have that 
\begin{equation}
\Vert g^{\alpha}(x,v)-g(x,v)\Vert_{\mathbf{H}}=\Vert T_{\alpha}(t)g_{0}(x,v)-T(t)g_{0}(x,v)\Vert_{\mathbf{H}}\label{eq:group5}
\end{equation}
From Theorem \ref{thm:(Trotter-Kato)} and Theorem \ref{thm:conv}
we have that the right hand side of (\ref{eq:group5}) goes to zero
when $\alpha\rightarrow0$ and the theorem is proved. 
\end{proof}
\pagebreak{}

\section{Proof of the main theorem}

In this section we summarize all the estimates obtained and we finally
give a proof that the solution of the first equation of the Grad hierarchy
converge to the solution of the linear Landau equation in the scaling
$N\epsilon^{2}\rightarrow\alpha$ with $\alpha\cong C\left(\log\log N\right)^{\frac{1}{2}}$.
\begin{thm}
Let $\overline{f_{1}^{N}}(t)$ be the first-particle marginal of the
solution of the Liouville equation with initial datum given by $W_{0,N}(\boldsymbol{z}_{N})=M_{N,\beta}(\boldsymbol{z}_{N})g_{0}(x_{1},v_{1})$,
and let $g(t)$ be the solution of the linear Landau equation with
initial datum given by $g(x,v,0)=g_{0}(x,v)$. Then $\forall t>0$
\[
\Vert\overline{f_{1}^{N}}(x,v,t)-M_{\beta}(v)g(x,v,t)\Vert_{\mathbf{H}}\rightarrow0
\]
when $N\rightarrow\infty$, with $N\epsilon^{2}\cong\left(\log\log N\right)^{\frac{1}{2}}$.
\end{thm}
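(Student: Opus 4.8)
The plan is to reduce the statement to the three ingredients already assembled, joined by a triangle inequality, after converting the $L^{\infty}$ controls into the Hilbert norm $\Vert\cdot\Vert_{\mathbf{H}}$. First observe that $dx\,d\mu$ is a probability measure on $\Gamma\times\mathbb{R}^{3}$: indeed $\int_{\Gamma}dx=1$ and, by (\ref{eq:cbet2}), $\int_{\mathbb{R}^{3}}M_{\beta}(v)\,dv=1$. Consequently $\Vert\phi\Vert_{\mathbf{H}}\le\Vert\phi\Vert_{\infty}$ for every bounded $\phi$, and $\Vert M_{\beta}\phi\Vert_{\mathbf{H}}\le\Vert M_{\beta}\Vert_{\infty}\Vert\phi\Vert_{\mathbf{H}}$. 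Recalling from (\ref{eq:BHsol}) that the first marginal of the Boltzmann hierarchy is $f_{1}^{\alpha}(x,v,t)=M_{\beta}(v)g^{\alpha}(x,v,t)$, with $g^{\alpha}$ the solution of the linear Boltzmann equation, I would start from
\begin{align}
\Vert\overline{f_{1}^{N}}(t)-M_{\beta}g(t)\Vert_{\mathbf{H}}\le{}&\Vert\overline{f_{1}^{N}}(t)-\widetilde{f_{1}^{N}}(t)\Vert_{\mathbf{H}}+\Vert\widetilde{f_{1}^{N}}(t)-M_{\beta}g^{\alpha}(t)\Vert_{\mathbf{H}}\nonumber\\
&+\Vert M_{\beta}\bigl(g^{\alpha}(t)-g(t)\bigr)\Vert_{\mathbf{H}}.
\end{align}

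For the first term, the reduced marginal $\widetilde{f_{1}^{N}}$ and the full marginal $\overline{f_{1}^{N}}$ differ only through the integration over configurations in which some particle lies within $\epsilon$ of particle $1$; using the a priori bound (\ref{eq:Apriori}), $W_{N}\le M_{\beta}^{\otimes N}\Vert g_{0}\Vert_{\infty}$, and integrating out $z_{2},\dots,z_{N}$, one obtains $\Vert\overline{f_{1}^{N}}(t)-\widetilde{f_{1}^{N}}(t)\Vert_{\infty}\le C\Vert g_{0}\Vert_{\infty}N\epsilon^{3}=C\Vert g_{0}\Vert_{\infty}\alpha\epsilon$, which vanishes in the scaling $N\epsilon^{2}\cong\alpha\cong(\log\log N)^{1/2}$ since then $\alpha\epsilon\cong C(\log\log N)^{3/4}N^{-1/2}\to0$. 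For the second term I would use $\Vert\cdot\Vert_{\mathbf{H}}\le\Vert\cdot\Vert_{\infty}$ and invoke Theorem \ref{thm:ricollision}, which is exactly the assertion $\Vert\widetilde{f_{1}^{N}}(t)-f_{1}^{\alpha}(t)\Vert_{\infty}\to0$ along $N\to\infty$, $\epsilon\to0$, $|N\epsilon^{2}-\alpha|\le\epsilon$. For the third term, $\Vert M_{\beta}(g^{\alpha}-g)\Vert_{\mathbf{H}}\le\Vert M_{\beta}\Vert_{\infty}\Vert g^{\alpha}(t)-g(t)\Vert_{\mathbf{H}}$, and Theorem \ref{thm:hconv} provides $\Vert g^{\alpha}(t)-g(t)\Vert_{\mathbf{H}}\to0$ as $\alpha\to\infty$, uniformly for $t$ in a bounded interval.

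It remains to run the two asymptotics simultaneously. Since the coupling is fixed in Section 5 as $\alpha=\alpha(N)\cong C(\log\log N)^{1/2}$, which diverges as $N\to\infty$, the hypothesis $\alpha\to\infty$ of Theorem \ref{thm:hconv} is met along the single limit $N\to\infty$ with $\epsilon$ slaved by $N\epsilon^{2}\cong\alpha$; the same limit kills the first and second terms. Collecting the three bounds yields $\Vert\overline{f_{1}^{N}}(x,v,t)-M_{\beta}(v)g(x,v,t)\Vert_{\mathbf{H}}\to0$ for every $t>0$. Since the essential analytic work is entirely contained in the already proven Theorems \ref{thm:ricollision} and \ref{thm:hconv}, I expect the only genuinely delicate points to be the short direct comparison of $\overline{f_{1}^{N}}$ with $\widetilde{f_{1}^{N}}$ (which is not isolated as an earlier lemma) and, above all, the synchronization of the two regimes: checking that the rate $\alpha\cong(\log\log N)^{1/2}$ is simultaneously slow enough for the recollision and truncation estimates underpinning Theorem \ref{thm:ricollision} and large enough (i.e. $\alpha\to\infty$) for the grazing-collision convergence behind Theorem \ref{thm:hconv} — which holds by construction of the scaling.
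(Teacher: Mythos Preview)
Your proposal is correct and follows essentially the same approach as the paper: a three-term triangle inequality separating the reduced-vs-standard marginal (controlled by $CN\epsilon^{3}$), the convergence of $\widetilde{f_{1}^{N}}$ to $f_{1}^{\alpha}=M_{\beta}g^{\alpha}$ in $L^{\infty}$ via Theorem~\ref{thm:ricollision}, and the grazing-collision limit $g^{\alpha}\to g$ in $\mathbf{H}$ via Theorem~\ref{thm:hconv}, with the passage from $\Vert\cdot\Vert_{\infty}$ to $\Vert\cdot\Vert_{\mathbf{H}}$ handled exactly as you do. The paper's proof is organized in the same way, only presenting the three pieces in a slightly different order.
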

\begin{proof}
First we want to estimate the following difference 
\begin{equation}
\Vert\widetilde{f_{1}^{N}}(t)-f_{1}^{\alpha}(t)\Vert_{\mathbf{H}}
\end{equation}
Since $\intop dx\intop d\mu(v)\vert\widetilde{f_{1}^{N}}(t)-f_{1}^{\alpha}(t)\vert^{2}\leq C\Vert\widetilde{f_{1}^{N}}(t)-f_{1}^{\alpha}(t)\Vert_{\infty}^{2}$
it results that 
\begin{equation}
\Vert\widetilde{f_{1}^{N}}(t)-f_{1}^{\alpha}(t)\Vert_{\mathbf{H}}\leq C\Vert\widetilde{f_{1}^{N}}(t)-f_{1}^{\alpha}(t)\Vert_{\infty}
\end{equation}
From Theorem \ref{thm:ricollision} we have that 
\begin{equation}
\Vert\widetilde{f_{1}^{N}}(t)-f_{1}^{\alpha}(t)\Vert_{\infty}\rightarrow0
\end{equation}
and then 
\begin{equation}
\Vert\widetilde{f_{1}^{N}}(t)-f_{1}^{\alpha}(t)\Vert_{\mathbf{H}}\rightarrow0\label{eq:au1}
\end{equation}
As we have seen the solution of the first equation of the BH with
initial data given by (\ref{eq:initialBH}) has the form
\begin{equation}
f_{1}^{\alpha}(x,v,t)=M_{\beta}(v)g^{\alpha}(x,v,t)
\end{equation}
 where $g^{\alpha}(x,v)$ is the solution of the Linear Boltzmann
equation. Then we have proved that 
\begin{equation}
\Vert\widetilde{f_{1}^{N}}(t)-M_{\beta}(v)g^{\alpha}(x,v,t)\Vert_{\mathbf{H}}\rightarrow0
\end{equation}
Since 
\begin{equation}
\Vert M_{\beta}(v)g^{\alpha}(x,v,t)-M_{\beta}(v)g(x,v,t)\Vert_{\mathbf{H}}\leq C\Vert g^{\alpha}(x,v)-g(x,v,t)\Vert_{\mathbf{H}}\label{eq:au2}
\end{equation}
Thanks to Theorem (\ref{thm:conv}) we have that 
\begin{equation}
\Vert g^{\alpha}(x,v,t)-g(x,v,t)\Vert_{\mathbf{H}}\rightarrow0
\end{equation}
From (\ref{eq:au2}) and (\ref{eq:au1}) we arrive to 
\begin{equation}
\Vert\widetilde{f_{1}^{N}}(x,v,t)-M_{\beta}(v)g(x,v,t)\Vert_{\mathbf{H}}\leq\Vert\widetilde{f_{1}^{N}}(t)-f_{1}^{\alpha}(t)\Vert_{\mathbf{H}}+C\Vert g^{\alpha}(x,v,t)-g(x,v,t)\Vert_{\mathbf{H}}
\end{equation}
Finally we estimate the difference between the reduced marginal and
the standard marginal. We have
\begin{alignat}{1}
\vert\widetilde{f_{1}^{N}}(x,v,t)-\overline{f_{1}^{N}}(x,v,t)\vert= & \vert\intop dz_{j+1}...dz_{N}\,W_{N}(\boldsymbol{z}_{N},t)\left(1-\chi\left\{ S(x_{1})^{N-1}\right\} \right)\vert\nonumber \\
\leq & N\vert\intop_{\vert x-x_{1}\vert\leq\epsilon}dx_{1}dv_{1}\overline{f_{2}^{N}}(x,v,x_{1},v_{1},t)\vert\nonumber \\
\leq & CN\epsilon^{3}
\end{alignat}
Then it results
\begin{equation}
\Vert\widetilde{f_{1}^{N}}(x,v,t)-\overline{f_{1}^{N}}(x,v,t)\Vert_{\infty}
\end{equation}

We send $\alpha\rightarrow\infty$, $N\rightarrow\infty$ with $N\epsilon^{2}\cong\alpha\cong\left(\log\log N\right)^{\frac{1}{2}}$
and we obtain the proof of the theorem.
\end{proof}
\medskip{}

\begin{center}
\textit{Acknowledgements. }
\par\end{center}

I thank M.Pulvirenti for having suggested the problem and for many
useful discussions.

\pagebreak{}

\section{Appendix I, Proof of Lemma \ref{lem:OperPorp}}

Here we gives a proof of the Lemma \ref{lem:OperPorp}.
\begin{proof}
First we want to prove that the operator $Q_{L}$ is self-adjoint.
It results that 
\begin{equation}
\left(f,Q_{L}(g)\right)_{L^{2}(d\mu)}=\intop dv\intop dwM_{\beta}(v)M_{\beta}(w)\frac{1}{\vert V\vert^{3}}\left[\vert V\vert^{2}\triangle g-\left(V,D_{v}^{2}(g)V\right)-4V\cdot\nabla_{v}g\right]f(v)
\end{equation}
We integrate by parts the first term. We have 
\[
\intop dv\intop dwM_{\beta}(v)M_{\beta}(w)\frac{1}{\vert V\vert^{3}}\vert V\vert^{2}f\triangle g=
\]
\begin{equation}
\intop dv\intop dwM_{\beta}(v)M_{\beta}(w)\left[-\frac{\nabla f\cdot\nabla g}{\vert V\vert}+2\beta\frac{v\cdot\nabla g}{\vert V\vert}f+\frac{V\cdot\nabla g}{\vert V\vert^{3}}f\right]
\end{equation}
For the second term it results that
\[
-\intop dv\intop dwM_{\beta}(v)M_{\beta}(w)\frac{1}{\vert V\vert^{3}}\left(V,H_{v}(g)V\right)=
\]
\begin{equation}
\intop dv\intop dwM_{\beta}(v)M_{\beta}(w)\left[\frac{V\cdot\nabla g}{\vert V\vert^{3}}f+\frac{\left(V\cdot\nabla g\right)\left(V\cdot\nabla f\right)}{\vert V\vert^{3}}-2\beta\frac{\left(v\cdot V\right)\left(V\cdot\nabla g\right)}{\vert V\vert^{3}}f\right]
\end{equation}
We put together these two terms with the last one, this gives us
\[
\left(f,Q_{L}(g)\right)_{L^{2}(d\mu)}=\intop dv\intop dwM_{\beta}(v)M_{\beta}(w)\left[-\frac{\nabla f\cdot\nabla g}{\vert V\vert}+\frac{\left(V\cdot\nabla g\right)\left(V\cdot\nabla f\right)}{\vert V\vert^{3}}-2\frac{V\cdot\nabla g}{\vert V\vert^{3}}f\right]+
\]
\begin{equation}
\intop dv\intop dwM_{\beta}(v)M_{\beta}(w)\left[2\beta\frac{v\cdot\nabla g}{\vert V\vert}f-2\beta\frac{\left(v\cdot V\right)\left(V\cdot\nabla g\right)}{\vert V\vert^{3}}f\right]
\end{equation}
Now we observe that $v=w+V$ and that $2\beta wM_{\beta}(w)=-\nabla_{w}M_{\beta}(w)$,
we also integrate by parts with respect to the variable $w$ in the
second terms of (\ref{eq:au1}) and we arrive to 
\[
\intop dv\intop dwM_{\beta}(v)M_{\beta}(w)\left[2\beta\frac{v\cdot\nabla g}{\vert V\vert}f-2\beta\frac{\left(v\cdot V\right)\left(V\cdot\nabla g\right)}{\vert V\vert^{3}}f\right]=
\]
\[
\intop dv\intop dwM_{\beta}(v)\nabla_{w}M_{\beta}(w)\cdot\left[-\frac{\nabla g}{\vert V\vert}f+\frac{V\left(V\cdot\nabla g\right)}{\vert V\vert^{3}}f\right]=
\]
\begin{equation}
\intop dv\intop dwM_{\beta}(v)M_{\beta}(w)2\frac{V\cdot\nabla g}{\vert V\vert^{3}}f
\end{equation}
This yields 
\begin{equation}
\left(f,Q_{L}(g)\right)_{L^{2}(d\mu)}=\intop dv\intop dwM_{\beta}(v)M_{\beta}(w)\left[-\frac{\nabla f\cdot\nabla g}{\vert V\vert}+\frac{\left(V\cdot\nabla g\right)\left(V\cdot\nabla f\right)}{\vert V\vert^{3}}\right]\label{eq:au5}
\end{equation}
Another integration by parts leads to

\[
\left(f,Q_{L}(g)\right)_{L^{2}(d\mu)}=\intop dv\intop dwM_{\beta}(v)M_{\beta}(w)
\]
\[
\left[\frac{g\triangle f}{\vert V\vert}-2\beta\frac{\left(v\cdot\nabla f\right)g}{\vert V\vert}-\frac{\left(V\cdot\nabla f\right)g}{\vert V\vert^{3}}-\frac{\left(V\cdot\nabla f\right)g}{\vert V\vert^{3}}-\frac{\left(V,D^{2}(f)V\right)g}{\vert V\vert^{3}}+2\beta\frac{\left(v\cdot V\right)\left(V\cdot\nabla f\right)g}{\vert V\vert^{3}}\right]=
\]
\[
\intop dv\intop dwM_{\beta}(v)M_{\beta}(w)\left[\frac{g\triangle f}{\vert V\vert}-\frac{\left(V,D^{2}(f)V\right)g}{\vert V\vert^{3}}-2\frac{\left(V\cdot\nabla f\right)g}{\vert V\vert^{3}}\right]+
\]
\begin{equation}
\intop dv\intop dwM_{\beta}(v)M_{\beta}(w)\left[-2\beta\frac{\left(v\cdot\nabla f\right)g}{\vert V\vert}+2\beta\frac{\left(v\cdot V\right)\left(V\cdot\nabla f\right)g}{\vert V\vert^{3}}\right]\label{eq:au3}
\end{equation}
We integrate by parts the last term with respect to $w$, it gives
us 
\[
\intop dv\intop dwM_{\beta}(v)M_{\beta}(w)\left[-2\beta\frac{\left(v\cdot\nabla f\right)g}{\vert V\vert}+2\beta\frac{\left(v\cdot V\right)\left(V\cdot\nabla f\right)g}{\vert V\vert^{3}}\right]=
\]
\begin{equation}
\intop dv\intop dwM_{\beta}(v)M_{\beta}(w)-2\frac{V\cdot\nabla f}{\vert V\vert^{3}}g\label{eq:au4}
\end{equation}
 We use toghether (\ref{eq:au4}) and (\ref{eq:au3}) and we finally
arrive to
\begin{alignat}{1}
\left(f,Q_{L}(g)\right)_{L^{2}(d\mu)}= & \intop dv\intop dwM_{\beta}(v)M_{\beta}(w)\left[\frac{g\triangle f}{\vert V\vert}-\frac{\left(V,D^{2}(f)V\right)g}{\vert V\vert^{3}}-4\frac{\left(V\cdot\nabla f\right)g}{\vert V\vert^{3}}\right]\nonumber \\
= & \left(Q_{L}(f),g\right)_{L^{2}(d\mu)}
\end{alignat}
Obviously $D(Q_{L})=D(Q_{L}^{*})$ and so $Q_{L}$ is self-adjoint. 

We now prove that the linear Boltzmann operator $Q_{B}$ is self-adjoint.
We have 
\[
\left(f,Q_{B}(g)\right)_{L^{2}(d\mu)}=\intop dv\intop dw\intop d\nu M_{\beta}(v)M_{\beta}(w)\vert\nu\cdot V\vert f(v)\left[g(v^{'})-g(v)\right]=
\]
\begin{equation}
\intop dv\intop dw\intop d\nu M_{\beta}(v)M_{\beta}(w)\vert\nu\cdot V\vert f(v)g(v^{'})-\intop dv\intop dw\intop d\nu M_{\beta}(v)M_{\beta}(w)\vert\nu\cdot V\vert f(v)g(v)\label{eq:au6}
\end{equation}
In the first term of the sum we change variables in the integration
by using the map defined in formula (\ref{eq:ChVar}). This gives
us 
\begin{equation}
\intop dv\intop dw\intop d\nu M_{\beta}(v)M_{\beta}(w)\vert\nu\cdot V\vert f(v)g(v^{'})=\intop dv^{'}\intop dw^{'}\intop d\nu^{'}M_{\beta}(v)M_{\beta}(w)\vert\nu\cdot V\vert f(v^{'})g(v)
\end{equation}
Now we use Lemma \ref{lem:PresLeb} that gives us that $dv^{'}dw^{'}d\nu^{'}=dvdwd\nu$,
this with (\ref{eq:au6}) leads to
\begin{equation}
\left(f,Q_{B}(g)\right)_{L^{2}(d\mu)}=\intop dv\intop dw\intop d\nu M_{\beta}(v)M_{\beta}(w)\vert\nu\cdot V\vert g(v)\left[f(v^{'})-f(v)\right]=\left(Q_{B}(f),g\right)_{L^{2}(d\mu)}
\end{equation}
Formula (\ref{eq:Aa1}) and (\ref{eq:Aa2}) can be proved simply with
some integration by parts in the definition of the operators $Q_{B}$
and $Q_{L}$. This leads to 
\begin{equation}
(f,Q_{B}f)\leq0
\end{equation}
\begin{equation}
(f,Q_{L}f)\leq0
\end{equation}
\begin{equation}
\left(f,Q_{B}(f)\right)_{L^{2}(d\mu)}=\intop dv\intop dw\intop d\nu M_{\beta}(v)M_{\beta}(w)\vert\nu\cdot V\vert\left[f(v)f(v^{'})-f^{2}(v)\right]
\end{equation}
Another change of variables in the integration gives us 
\begin{equation}
\left(f,Q_{B}(f)\right)_{L^{2}(d\mu)}=\intop dv\intop dw\intop d\nu M_{\beta}(v)M_{\beta}(w)\vert\nu\cdot V\vert\left[f(v)f(v^{'})-f^{2}(v^{'})\right]
\end{equation}
We sum together these two equality, this leads to

\begin{equation}
2\left(f,Q_{B}(f)\right)_{L^{2}(d\mu)}\leq-\intop dv\intop dw\intop d\nu M_{\beta}(v)M_{\beta}(w)\vert\nu\cdot V\vert\left[f(v^{'})-f(v)\right]^{2}\leq0
\end{equation}
From formula (\ref{eq:au5}) we have

\begin{equation}
\left(f,Q_{L}(f)\right)_{L^{2}(d\mu)}=\intop dv\intop dwM_{\beta}(v)M_{\beta}(w)\left[\frac{\left(\hat{V}\cdot\nabla f\right)^{2}-\vert\nabla f\vert^{2}}{\vert V\vert}\right]
\end{equation}
and, since $\left(\hat{V}\cdot\nabla f\right)^{2}-\vert\nabla f\vert^{2}\leq0$,
it results that
\begin{equation}
\left(f,Q_{L}(f)\right)_{L^{2}(d\mu)}\leq0
\end{equation}
Now we observe that 
\begin{equation}
(f,-v\cdot\nabla_{x}f)=(v\cdot\nabla_{x}f,f)=(f,v\cdot\nabla_{x}f)=0
\end{equation}
and we arrive to 
\begin{equation}
\left(f,G_{\alpha}f\right)=\left(G_{\alpha}^{*}f,f\right)=\left(Lf,f\right)+\left((v\cdot\nabla_{x}f,f)\right)=\left(Lf,f\right)\leq0
\end{equation}
With similar steps it is possible to prove the (\ref{eq:Aa1}).

Since $D\left(G\right)$ and $D(G_{\alpha})$ are dense in $\mathbf{H}$
by the Von Neumann Theorem we have that
\begin{equation}
G_{\alpha}^{**}=\overline{G_{\alpha}}
\end{equation}
but $G_{\alpha}^{**}=G_{\alpha}$ and so $G_{\alpha}$ is closed.
This can be proved in the same way for $G$.
\end{proof}
\pagebreak{}

\section{Appendix II, estimate of the recolission set}
\begin{lem}
Let $\chi_{\lambda,E,\delta,q}$ be defined as in (\ref{eq:Cutoff})
and let $\chi\left\{ N^{P_{s}}(\epsilon_{0})\right\} $ be the characteristic
function of the set (\ref{eq:bset}). Then it results that
\[
\sum_{j_{1}=0}^{1}...\sum_{j_{s}=0}^{2^{s}-1}\left(\alpha\right)^{P_{s}-1}\sum_{\Gamma(P_{s}-1)}\sum_{\sigma_{P_{s}-1}}\boldsymbol{\sigma}_{P_{s}-1}\intop d\Lambda_{P_{s-1}}\chi_{\lambda,E,\delta,q}\chi\left\{ N^{P_{s}}(\epsilon_{0})\right\} \prod_{k=1}^{P_{s}-1}B\,f_{0,P_{s}}^{N}(\zeta(0))\leq
\]
\begin{equation}
\Vert g_{0}\Vert_{\infty}\left(C\alpha t\right)^{2^{s+1}}E^{8}2^{(s+4)(s+1)}\left(\epsilon_{0}^{\frac{2}{5}-\lambda}+\frac{\epsilon_{0}^{\frac{4}{5}-\lambda}}{\delta^{2}}+\epsilon_{0}^{\frac{4}{5}-\lambda}\right)
\end{equation}
\end{lem}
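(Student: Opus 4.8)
The plan is to reduce the bound to a measure estimate for the set of integration parameters that produce a single recollision, following the $\mathbb{R}^{3}$ argument of \cite{Pulvirenti2014} and supplementing it with the torus geometry of \cite{Bodineau2015}. Membership in $N^{P_{s}}(\epsilon_{0})$ forces at least one ordered pair $i<k$ of particles of the Boltzmann backward flow to come within distance $\epsilon_{0}$; since there are at most $\binom{P_{s}}{2}\le 2^{2(s+1)}$ such pairs, it suffices to estimate the contribution of one fixed pair and multiply by this factor. For a fixed pair $(i,k)$ I would freeze all integration variables except those attached to the creation of the later particle $k$ (its creation time, its velocity $w_{1+(k-1)}$ and its impact parameter $\nu_{k-1}$) and bound the Lebesgue measure of the subset of these variables for which the BBF trajectory of $k$ actually meets the now-fixed trajectory of $i$ within $\epsilon_{0}$. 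On the complement the integrand is controlled by the a priori bound $|f_{0,P_{s}}(\zeta(0))|\le C\|g_{0}\|_{\infty}e^{-\frac{\beta}{2}|\xi(0)|^{2}}$ together with the cutoffs already built into $\chi_{\lambda,E,\delta,q}$, namely $\prod_{k}B\le\epsilon^{-\lambda}$, $|\xi(0)|\le 2E$, the time separation $t_{i}-t_{i-1}>\delta$ and the non-grazing, non-central conditions on $\omega_{k}\cdot(w_{1+k}-\xi_{i_{k}}(\tau_{k}))$ and on $\rho_{k}$; performing the remaining $P_{s}-1$ ordered time integrals exactly as in Lemma~\ref{lem:ConEst} yields $(C\alpha t)^{P_{s}-1}\le(C\alpha t)^{2^{s+1}}$, while the sum over $\boldsymbol{\sigma}_{P_{s}-1}$ costs only $2^{P_{s}-1}$ and is absorbed by enlarging the constant $C$.

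The core step is the one-pair recollision measure. As in \cite{Pulvirenti2014} I would treat first the ``interior'' configurations: particle $k$ is created at the position of its progenitor and then flies along a broken line, the kinks coming from the later creations in its own subtree, and requiring this broken line to enter the $\epsilon_{0}$-tube around the fixed trajectory of $i$ pins the relative velocity into a cone of solid angle $\lesssim(\epsilon_{0}/(|V|\,\ell))^{2}$, equivalently $w_{1+(k-1)}$ into a ball of radius $\lesssim\epsilon_{0}/\ell$, where the flight length $\ell$ is bounded below through the separation cutoff $\delta$ and $|V|$ is bounded below by $\epsilon^{q}$; this produces a clean $\epsilon_{0}^{2}$ up to negative powers of $\delta$ and $\epsilon^{q}$ and up to positive powers of $E$ coming from integrating the free velocities against the Gaussian weight under $|\xi(0)|\le 2E$. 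The torus adds the genuinely new mechanism of near-parallel trajectories that wrap around $\Gamma$ and reapproach, absent in $\mathbb{R}^{3}$; this is exactly where I would invoke the quantitative geometric estimates of \cite{Bodineau2015} for the returns of a straight line on $\Gamma$ to a prescribed $\epsilon_{0}$-neighbourhood, which generate the weight $\delta^{-2}$ and the $\epsilon_{0}^{4/5}$-type powers. Finally the clean exponent for the measure of the recollision set must be combined with the $\epsilon^{-\lambda}$ lost when bounding $\prod_{k}B$: interpolating the measure bound against the trivial $L^{1}$ bound on $\prod_{k}B$ degrades $\epsilon_{0}^{2}$ to $\epsilon_{0}^{2/5}$ (and $\epsilon_{0}$ to $\epsilon_{0}^{4/5}$ in the torus pieces) and leaves behind $\epsilon_{0}^{-\lambda}$, which is how the three summands $\epsilon_{0}^{2/5-\lambda}$, $\epsilon_{0}^{4/5-\lambda}/\delta^{2}$, $\epsilon_{0}^{4/5-\lambda}$ appear.

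Assembling, the combinatorial factors --- the $2^{s(s+1)}$ from the sums over $j_{1},\dots,j_{s}$, the $2^{2(s+1)}$ from the choice of recolliding pair, and a further $O(2^{2(s+1)})$ from the choice of which creation variable localizes the recollision and from the finitely many geometric sub-cases --- are all absorbed into $2^{(s+4)(s+1)}=2^{s(s+1)}2^{4(s+1)}$; the time integrals contribute $(C\alpha t)^{2^{s+1}}$, the energy-truncated velocity integrals contribute $E^{8}$, the a priori bound contributes $\|g_{0}\|_{\infty}$, and the recollision measure contributes the small factor $\epsilon_{0}^{2/5-\lambda}+\epsilon_{0}^{4/5-\lambda}/\delta^{2}+\epsilon_{0}^{4/5-\lambda}$, which is exactly the claimed estimate. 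The step I expect to be the main obstacle is the torus recollision geometry: in $\mathbb{R}^{3}$ the bad set for a fixed pair is essentially one $\epsilon_{0}$-tube and directly of size $O(\epsilon_{0}^{2})$, whereas on $\Gamma$ one must exclude, with constants uniform in the diverging number of particles $P_{s}$, the near-resonant wrapping configurations, and it is only the time-separation cutoff $\delta$ together with the geometric lemma of \cite{Bodineau2015} that makes this possible; a secondary but delicate bookkeeping point is tracking the interpolation against the $\epsilon^{-\lambda}$ bound on $\prod_{k}B$ so as to land precisely on the exponents $2/5$ and $4/5$.
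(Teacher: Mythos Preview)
Your outline captures the overall structure correctly (pair decomposition, combinatorial bookkeeping, use of the cutoffs and of the a priori bound), but the heart of the argument --- how the exponents $2/5$ and $4/5$ arise --- is misidentified, and the step you describe would not actually produce them.

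You claim that a clean $\epsilon_{0}^{2}$ measure bound is obtained for the bad set of one pair and is then ``degraded'' to $\epsilon_{0}^{2/5}$ (resp.\ $\epsilon_{0}^{4/5}$) by interpolating against the $\epsilon^{-\lambda}$ bound on $\prod_{k}B$. That is not what happens, and no natural interpolation parameter would land on $1/5$ and $2/5$. In the paper the factor $\epsilon^{-\lambda}$ is simply pulled out via $\chi_{\lambda}$, with no interpolation; the fractional powers of $\epsilon_{0}$ come instead from a case decomposition on the \emph{virtual trajectories} of the pair $(i,k)$. Writing $Y^{l}=\hat p-W^{0}t^{0}+\sum_{q=1}^{l}(W^{q-1}-W^{q})t^{q}$ for the relative displacement after $l$ kinks, one introduces two thresholds: (a) if $\sum_{q}|W^{q}-W^{q-1}|\le\epsilon_{0}^{2/5}$ the broken relative trajectory is within $t\,\epsilon_{0}^{2/5}$ of a straight line, so the $l=0$ torus estimate applies with tube radius $\sim\epsilon_{0}^{2/5}$, whence the $\epsilon_{0}^{4/5}$ and the $\delta^{-2}$; (b) otherwise some jump $U=W^{\bar q}-W^{\bar q-1}$ has $|U|>\epsilon_{0}^{2/5}/l$, and a second threshold $\sum_{q}|(W^{q}-W^{q-1})\wedge\hat W^{l}|\lessgtr\epsilon_{0}^{3/5}$ either forces $|\hat W^{0}\wedge\hat U|\lesssim(P_{s}-1)\epsilon_{0}^{1/5}$, which after the measure-preserving change $(\nu,w)\mapsto(\nu',V')$ of Lemma~\ref{lem:PresLeb} confines one velocity variable to a set of measure $\lesssim E^{5}\epsilon_{0}^{2/5}$, or forces one creation time $t^{q^{\star}}$ into an interval of length $\lesssim(P_{s}-1)(Et)^{3}\epsilon_{0}^{2/5}$. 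These three cases are precisely the three summands in the statement.

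Your ``freeze all variables except those attached to the creation of $k$'' scheme is also too coarse for the case $l\ge1$: once the relative trajectory has kinked, the recollision condition no longer pins the initial creation velocity of $k$ into a single cone of aperture $\epsilon_{0}/\ell$; it is a condition on a sum of segments whose directions depend on variables you have frozen. The paper resolves this exactly by the threshold dichotomy above, which lets one localise either a velocity direction or a creation time, not necessarily the creation velocity of $k$. Without this mechanism your argument does not close.
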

\begin{proof}
First we observe that 
\begin{equation}
N^{P_{s}}(\epsilon_{0})=\bigcup_{i=1}^{k}\bigcup_{k=2}^{P_{S}}N_{i,k}^{P_{s}}(\epsilon_{0})
\end{equation}
where 
\[
N_{i,k}^{P_{s}}(\epsilon_{0})=
\]
\begin{equation}
\left\{ (\boldsymbol{t}_{P_{s}-1},\boldsymbol{\nu}_{P_{s}-1},\boldsymbol{w}_{P_{s}-1})\in\mathbb{R}^{P_{s}-1}\times S^{2(P_{s}-1)}\times\mathbb{R}^{3(P_{s}-1)}\,|\,\min_{i<k}\min_{\tau\in[0,t_{i-1}]}\,d(r_{i}(\tau),r_{k}(\tau))<\epsilon_{0}\right\} 
\end{equation}
We also define a subsequence $t^{q}$ of the times $t_{1}...t_{n}$
associated to the virtual trajectory of particles $i$ and $k$. We
put $t^{0}$ as the time in which the two virtual trajectory merge,
then we consider the ordered union of the times of creations in the
virtual trajectory of particles $i$ and $k$ (Figure \ref{fig:times}).

For a point in $N_{i,k}^{P_{s}}(\epsilon_{0})$ there exist
\begin{equation}
\tau^{\star}=\max\left\{ \tau\in[0,t_{i-1}]\,|\,d(p^{i}(\tau),p^{k}(\tau))<\epsilon_{0}\right\} 
\end{equation}

It must be $\tau^{\star}\in[t^{l},t^{l+1})$ for some $l\geq0$. With
this definition $l$ represents the total number of creation after
the time $t^{0}$ in the virtual trajectory of the particles $i$
and $k$. For $q\in[0,l]$ we define

\begin{equation}
Y^{q}=r^{k}(t^{q})-r^{i}(t^{q})
\end{equation}
\begin{equation}
\xi_{i}^{q}=\xi_{i}(\tau)\,\,\,\xi_{k}^{q}=\xi_{k}(\tau)\,\,\,for\,\tau\in(t^{q+1},t^{q})
\end{equation}
\begin{equation}
W^{q}=\xi_{k}^{q}-\xi_{i}^{q}
\end{equation}
Observe that, since we are considering only one tree, it will be always
$Y^{0}=0$.

\begin{figure}[tb]
\begin{tikzpicture}[line cap=round,line join=round,>=triangle 45,x=1.0cm,y=1.0cm]
\clip(-3.62,-4.2) rectangle (18.8,7.12);
\draw [->] (0.,-3.7) -- (0.,6.36);
\draw (4.24,5.2)-- (7.26,-2.72);
\draw (4.736252025142386,3.898570847970961)-- (2.24,-2.72);
\draw (6.13907402974106,0.21964691538106074)-- (5.2,-2.72);
\draw (3.0999317052606106,-0.43997424204024904)-- (3.76,-2.72);
\draw [dash pattern=on 5pt off 5pt] (0.,3.92)-- (4.736252025142386,3.898570847970961);
\draw [dash pattern=on 5pt off 5pt] (0.,0.2)-- (6.13907402974106,0.21964691538106074);
\draw (5.42,-2.3) node[anchor=north west] {$k$};
\draw (-0.42,1.3) node[anchor=north west] {$t^1$};
\draw (-0.4,4.32) node[anchor=north west] {$t^0$};
\draw (-1.26,0.58) node[anchor=north west] {$t^l=t^2$};
\draw (3.586463928137734,0.850018897474325)-- (4.88,-2.72);
\draw(5.06,-2.72) circle (0.30463092423455623cm);
\draw [dash pattern=on 5pt off 5pt] (0.,0.86)-- (3.586463928137734,0.8500188974743249);
\draw (4.4,-2.) node[anchor=north west] {$i$};
\draw (3.586463928137734,0.8500188974743249)-- (5.895546523992718,0.8583018311184354);
\draw (4.82,4.34) node[anchor=north west] {$Y^0$};
\draw (4.74,1.58) node[anchor=north west] {$Y^1$};
\draw (4.74,0.88) node[anchor=north west] {$Y^2$};
\draw (6.13907402974106,0.21964691538106043)-- (3.82,0.22);
\end{tikzpicture}

\caption{\label{fig:times}The virtual trajectory of the praticles $i$ and
$k$ and their backward history}
\end{figure}
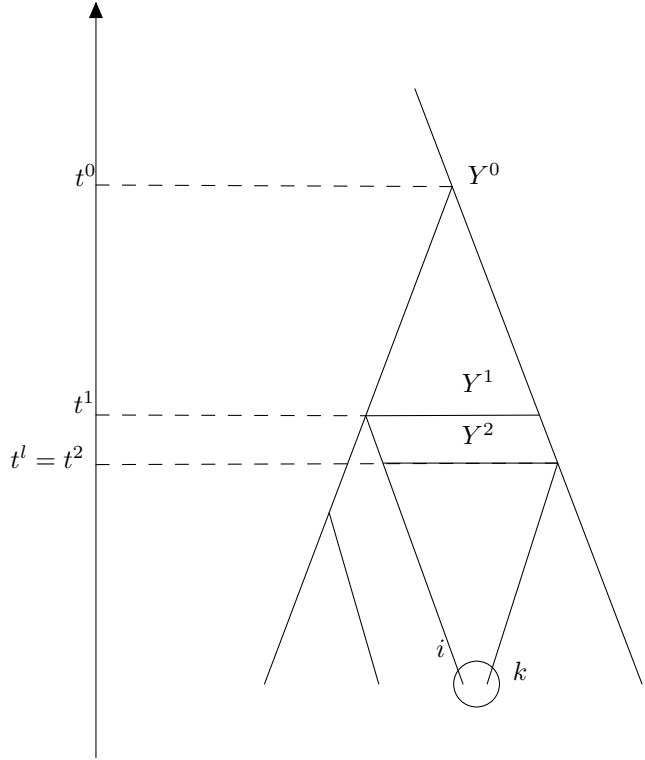

First suppose that $l=0$, this means that the particles $i$ and
$k$ have a recolission after the creation of the particle $k$. This
can happen in two cases. In the first case the particles $i$ and
$h$ do not separate enough after the creation. In the second case
the particles, after being separated enough, perform a recolission
since the trajectory on the torus have no dispersive properties.

In the first case it must be 
\begin{equation}
\vert W^{0}\vert(t^{1}-t^{0})\leq\epsilon_{0}\label{eq:recoll}
\end{equation}
We recall that the cutoff (\ref{eq:Cutoff}) implies that $(t^{1}-t^{0})\geq\delta$
and that $\vert W^{0}\vert\geq\epsilon^{q}$. Then the particles must
be separated at least by a distance of $\delta\epsilon^{q}$. We choose
the parameters in such a way that 
\begin{equation}
\epsilon_{0}\leq\delta\epsilon^{q}
\end{equation}
and this gives us that the (\ref{eq:recoll}) cannot happen.

In the second case we prove that $W^{0}$ must be in a set of small
measure. There exist a $\tau>\delta$ such that

\begin{equation}
d(r^{i}(t^{0}-\tau),r^{k}(t^{0}-\tau))\leq\epsilon_{0}
\end{equation}
We use the correspondence between the torus and the whole space with
periodic structure. We have that 
\begin{equation}
\left(r^{i}(t^{0})-\tau\xi^{i}(t^{0})\right)-\left(r^{k}(t^{0})-\tau\xi^{k}(t^{0})\right)\in\bigcup_{p\in\mathbb{Z}^{3}}B_{\epsilon_{0}}(p)
\end{equation}
Thanks to the energy cutoff we have that $\vert W^{0}\vert\leq4E$
and so
\begin{equation}
\tau W^{0}\in\left(\bigcup_{p\in\mathbb{Z}^{3}}B_{\epsilon_{0}}(r^{i}(t^{0})-r^{k}(t^{0})+p)\right)\bigcap B_{4Et}(0)
\end{equation}
Suppose that $\vert r^{i}(t^{0})-r^{k}(t^{0})+p\vert<\frac{1}{4}$.
This can happen only for a value of $p$ since the distance between
the centers of the spheres is $1$. Taking $\hat{v}$ a unit vector
normal to $r^{i}(t^{0})-r^{k}(t^{0})+p$ it results that 
\begin{equation}
\tau\vert W^{0}\cdot\hat{v}\vert\leq\epsilon_{0}
\end{equation}
and then
\begin{equation}
\vert W^{0}\cdot\hat{v}\vert\leq\frac{\epsilon_{0}}{\delta}
\end{equation}
This implies that $W^{0}$ is in the intersection of $B_{4E}(0)$
and a cylinder of radius $\frac{\epsilon_{0}}{\delta}$ and so in
a set of measure bounded by $CE\frac{\epsilon_{0}^{2}}{\delta^{2}}$.
Suppose now $\vert r^{i}(t^{0})-r^{k}(t^{0})+p\vert\geq\frac{1}{4}$
and that $\epsilon$ is small enough, then $W^{0}$ is in the intersection
of $B_{4E}(0)$ and some cone of vertex $0$ and solid angle $C\epsilon_{0}^{2}$
and these cones are at most $(8Et)^{3}$. Finally putting together
these two estimates gives us that $W^{0}$ must be in a suitable set
$B_{k,0}$ such that 
\begin{equation}
\vert B_{k,0}\vert\leq C\left(E\frac{\epsilon_{0}^{2}}{\delta^{2}}+(Et)^{3}\epsilon_{0}^{2}\right)\label{eq:rec3}
\end{equation}
We can now suppose that $l\geq1$. The $\epsilon_{0}$-overlap is
verified only if 
\begin{equation}
Y^{l}-\tau W^{l}\in\bigcup_{p\in\mathbb{Z}^{3}}B_{\epsilon_{0}}(p)\label{eq:rec1}
\end{equation}
for some $\tau\in[0,t^{l})$. Moreover it results that 

\begin{equation}
Y^{l}=\hat{p}-\sum_{q=0}^{l-1}W^{q}\left(t^{q}-t^{q+1}\right)=\hat{p}-W^{0}t^{0}+\sum_{q=1}^{l}\left(W^{q-1}-W^{q}\right)t^{q}\label{eq:rec2}
\end{equation}
where $\hat{p}\in\mathbb{Z}^{3}$ is chosen in such a way that the
right hand side of equation (\ref{eq:rec2}) is a point in the torus. 

Now we prove that it must be
\begin{equation}
\sum_{q=1}^{l}\vert W^{q}-W^{q-1}\vert>\epsilon_{0}^{\frac{2}{5}}\label{eq:rec4}
\end{equation}
Otherwise it would be $\vert W^{q}-W^{0}\vert\leq\epsilon_{0}^{\frac{2}{5}}$
for all $q$, then using (\ref{eq:rec2}) and (\ref{eq:rec1}) it
results that
\begin{equation}
W^{0}(\tau+t^{0}-t^{l})\in\bigcup_{p\in\mathbb{Z}^{3}}B_{(\epsilon_{0}+\epsilon_{0}^{\frac{2}{5}}t)}(p)
\end{equation}
Since $\tau+t^{0}-t^{l}\geq\delta$ we can perform the same steps
of estimate (\ref{eq:rec3}) to prove that in this case $W^{0}$must
be in a set $B_{k,1}$ of measure bounded by 
\begin{equation}
C\left(E\frac{\epsilon_{0}^{\frac{4}{5}}}{\delta^{2}}+(Et)^{3}\epsilon_{0}^{\frac{4}{5}}\right)
\end{equation}
Condition (\ref{eq:rec1}) implies that 
\begin{equation}
\vert\left(Y^{l}+\hat{p}\right)\wedge\hat{W^{l}}\vert\leq\epsilon_{0}\label{eq:rec5}
\end{equation}
with $\hat{W^{l}}=\frac{W^{l}}{\vert W^{l}\vert}$. Then from (\ref{eq:rec2})
we have that
\begin{equation}
\vert\left(\hat{p}-W^{0}t^{0}\right)\wedge\hat{W^{l}}-\sum_{q=1}^{l}\left[\left(W^{q}-W^{q-1}\right)\wedge\hat{W^{l}}\right]t^{q}\vert\leq\epsilon_{0}
\end{equation}
Now suppose that 
\begin{equation}
\sum_{q=1}^{l}\vert\left(W^{q}-W^{q-1}\right)\wedge\hat{W^{l}}\vert\leq\epsilon_{0}^{\frac{3}{5}}\label{eq:rec6}
\end{equation}
from (\ref{eq:rec4}) it must exist a $\bar{q}\in\left\{ 1,...,...,l\right\} $
such that 
\begin{equation}
U=U^{\bar{q}}=W^{\bar{q}}-W^{\bar{q}-1}
\end{equation}
has modulus 
\begin{equation}
\vert U\vert>\frac{\epsilon_{0}^{\frac{2}{5}}}{l}
\end{equation}
Moreover from (\ref{eq:rec5}) it results that 
\begin{equation}
\vert U\wedge\hat{W^{l}}\vert\leq\epsilon_{0}^{\frac{3}{5}}
\end{equation}
We set $\hat{U}=\frac{U}{\vert U\vert}$, this gives us 
\begin{equation}
\vert\hat{U}\wedge\hat{W^{l}}\vert\leq(P_{s}-1)\epsilon_{0}^{\frac{1}{5}}\label{eq:rec7}
\end{equation}
Thanks to cutoff (\ref{eq:Cutoff}) it results that $\vert W^{0}\vert>\epsilon^{q}$,
that with (\ref{eq:rec6}) gives us that 
\begin{equation}
\vert\hat{W^{0}}\wedge\hat{W^{l}}\vert\leq\epsilon_{0}^{\frac{3}{5}-q}
\end{equation}
This with (\ref{eq:rec7}), assuming $q=\frac{1}{8}$, finally gives
\begin{equation}
\vert\hat{W^{0}}\wedge\hat{U}\vert\leq C\epsilon_{0}^{\frac{1}{5}}(P_{s}-1)\label{eq:recest}
\end{equation}
We have two cases, if $\sum_{q=1}^{l}\vert\left(W^{q}-W^{q-1}\right)\wedge\hat{W^{l}}\vert\leq\epsilon_{0}^{\frac{3}{5}}$
then it results that $\vert\hat{W^{0}}\wedge\hat{U}\vert\leq C\epsilon_{0}^{\frac{1}{5}}(P_{s}-1)$.
Otherwise we have that $\sum_{q=1}^{l}\vert\left(W^{q}-W^{q-1}\right)\wedge\hat{W^{l}}\vert>\epsilon_{0}^{\frac{3}{5}}$.
This implies that for some $q^{\star}$ we have 
\begin{equation}
\vert\left(W^{q^{\star}}-W^{q^{\star}-1}\right)\wedge\hat{W^{l}}\vert>\frac{\epsilon_{0}^{\frac{3}{5}}}{l}
\end{equation}
From (\ref{eq:rec2}) it follows that
\begin{equation}
\vert\left(\hat{p}-W^{0}t^{0}\right)\wedge\hat{W^{l}}-\sum_{q=1}^{l}\left[\left(W^{q}-W^{q-1}\right)\wedge\hat{W^{l}}\right]t^{q}\vert\leq\epsilon_{0}
\end{equation}
and then
\begin{equation}
\vert\left(\hat{p}-W^{0}t^{0}\right)\wedge\hat{W^{l}}-\vert\left(W^{q^{\star}}-W^{q^{\star}-1}\right)\wedge\hat{W^{l}}\vert t^{q^{\star}}-\sum_{q=1,q\neq q^{\star}}^{l}\left[\left(W^{q}-W^{q-1}\right)\wedge\hat{W^{l}}\right]t^{q}\vert\leq\epsilon_{0}
\end{equation}
This last formula implies that, for a fixed $\hat{p}$, $t^{q^{\star}}$
must be in a interval of length smaller than 
\begin{equation}
\epsilon_{0}\vert\left(W^{q^{\star}}-W^{q^{\star}-1}\right)\wedge W^{l}\vert^{-1}
\end{equation}
that from (\ref{eq:recest}) is bounded by $\epsilon_{0}^{\frac{2}{5}}(P_{s}-1$).
Since the possible choices of $\hat{p}$ are at most $\left(CEt\right)^{3}$
it results that $t^{q^{\star}}$ is in a set of measure bounded by
\begin{equation}
\epsilon_{0}^{\frac{2}{5}}(P_{s}-1)\left(CEt\right)^{3}\label{eq:rec7-1}
\end{equation}
We summarize as follows. We denote with $V_{r_{1}}$ and $V_{r_{1}}^{'}$
respectively the outgoing and incoming relative velocities of the
collision at time $\tau_{r_{1}}$ in the BBF. Let $t^{\bar{q}}=t_{r_{2}}$
and $U^{\bar{q}}=U_{r_{2}}$that is a function of $V_{r_{2}},\nu_{r_{2}}$
only. We have that 
\[
\chi_{\lambda,E,\delta,q}\chi\left\{ N^{P_{s}}(\epsilon_{0})\right\} \leq\chi_{\lambda,E,\delta,q}\sum_{r=1}^{P_{s}-1}\chi\left\{ V_{r}\in B_{r,0}\bigcup B_{r,1}\right\} +
\]
\begin{equation}
\sum_{i,k}\sum_{l=1}^{n_{ik}}\sum_{q^{\star}=1}^{l}\chi_{\lambda,E,\delta,q}\chi\left\{ N_{ik}^{l,q^{\star}}(\epsilon_{0})\right\} +\sum_{r_{1}=1}^{P_{s}-1}\sum_{r_{2}=r_{1}+1}^{P_{s}-1}\chi_{\lambda,E,\delta,q}\chi\left\{ \vert\hat{V_{r_{1}}^{'}}\wedge\hat{U}_{r_{2}}\vert\leq C\epsilon_{0}^{\frac{1}{5}}(P_{s}-1)\right\} \label{eq:rec8}
\end{equation}
where $n_{ik}$ are the total number of creation in the virtual trajectory
of the particles $i$ and $k$ between the time $t^{0}$ and the time
$t$. 
\begin{equation}
N_{ik}^{l,q^{\star}}(\epsilon_{0})=\left\{ \boldsymbol{t}_{P_{s}-1},\boldsymbol{\nu}_{P_{s}-1},\boldsymbol{w}_{P_{s}-1}|\,\text{the virtual trajectories of i and k satisfies (9.16),}\right.
\end{equation}
\[
\left.\,\text{with}\,\vert W^{q^{\star}}-W^{q^{\star}-1}\wedge\hat{W^{l}}\vert\geq\frac{\epsilon_{0}^{\frac{3}{5}}}{l}\right\} 
\]
We now estimate the three terms in the right hand side of (\ref{eq:rec8}).
For the first term by a simple change of variables we have that
\[
\sum_{j_{1}=0}^{1}...\sum_{j_{s}=0}^{2^{s}-1}\left(\alpha\right)^{P_{s}-1}\sum_{\Gamma(P_{s}-1)}\sum_{\sigma_{P_{s}-1}}\boldsymbol{\sigma}_{P_{s}-1}
\]
\[
\intop d\Lambda_{P_{s-1}}\chi_{\lambda,E,\delta,q}\sum_{r=1}^{P_{s}-1}\chi\left\{ V_{r}\in B_{r,0}\bigcup B_{r,1}\right\} \prod_{k=1}^{P_{s}-1}B\,f_{0,P_{s}}^{N}(\zeta(0))\leq
\]
\begin{equation}
\epsilon^{-\lambda}\Vert g_{0}\Vert_{\infty}\left(C\alpha t\right)^{2^{s+1}}C\left(E\frac{\epsilon_{0}^{\frac{4}{5}}}{\delta^{2}}+(Et)^{3}\epsilon_{0}^{\frac{4}{5}}\right)
\end{equation}
For the second term from (\ref{eq:rec7-1}) it follows that 
\[
\sum_{j_{1}=0}^{1}...\sum_{j_{s}=0}^{2^{s}-1}\left(\alpha\right)^{P_{s}-1}\sum_{\Gamma(P_{s}-1)}\sum_{\sigma_{P_{s}-1}}
\]
\[
\boldsymbol{\sigma}_{P_{s}-1}\intop d\Lambda_{P_{s-1}}\chi_{\lambda,E,\delta,q}\sum_{i,k}\sum_{l=1}^{n_{ik}}\sum_{q^{\star}=1}^{l}\chi_{\lambda,E,\delta,q}\chi\left\{ N_{ik}^{l,q^{\star}}(\epsilon_{0})\right\} \prod_{k=1}^{P_{s}-1}B\,f_{0,P_{s}}^{N}(\zeta(0))\leq
\]
\begin{equation}
\epsilon^{-\lambda}\Vert g_{0}\Vert_{\infty}\left(C\alpha t\right)^{2^{s+1}}E^{3}2^{(s+4)(s+1)}\epsilon_{0}^{\frac{2}{5}}
\end{equation}
The last term to be estimated is 
\[
\sum_{j_{1}=0}^{1}...\sum_{j_{s}=0}^{2^{s}-1}\left(\alpha\right)^{P_{s}-1}\sum_{\Gamma(P_{s}-1)}\sum_{\sigma_{P_{s}-1}}\boldsymbol{\sigma}_{P_{s}-1}
\]
\begin{equation}
\intop d\Lambda_{P_{s-1}}\chi_{\lambda,E,\delta,q}\sum_{k_{1}=1}^{P_{s}-1}\sum_{k_{2}=k_{1}+1}^{P_{s}-1}\chi\left\{ \vert\hat{V_{k_{1}}^{'}}\wedge\hat{U}_{k_{2}}\vert\leq C\epsilon_{0}^{\frac{1}{5}}(P_{s}-1)\right\} \prod_{k=1}^{P_{s}-1}B\,f_{0,P_{s}}^{N}(\zeta(0))
\end{equation}
We first consider the set
\begin{equation}
\intop d\Lambda_{P_{s-1}}\chi_{\lambda,E,\delta,q}\chi\left\{ \vert\hat{V_{k_{1}}^{'}}\wedge\hat{U}_{k_{2}}\vert\leq C\epsilon_{0}^{\frac{1}{5}}(P_{s}-1)\right\} e^{-\frac{\beta}{2}\vert\xi^{\epsilon}(0)\vert^{2}}
\end{equation}
We change the integration variables in the following way 
\begin{equation}
\left(\nu_{k_{1}},w_{k_{1}},\nu_{k_{2}},w_{k_{2}}\right)\rightarrow\left(\nu_{k_{1}}^{'},V_{k_{1}}^{'},\nu_{k_{2}},V_{k_{2}}\right)\label{eq:recchg}
\end{equation}
where $V_{k_{1}}^{'}=w_{k_{1}}^{'}-\xi_{i_{k_{1}}}^{'}(\tau_{k_{1}})$
and $V_{k_{2}}=w_{k_{1}}-\xi_{i_{k_{1}}}(\tau_{k_{1}})$. From Lemma
\ref{lem:PresLeb} it follows that (\ref{eq:recchg}) is a change
of variables that preserve the measure. Thanks to this change of variables
a simple calculation leads to 
\begin{equation}
\intop d\Lambda_{P_{s-1}}\chi_{\lambda,E,\delta,q}\chi\left\{ \vert\hat{V_{k_{1}}^{'}}\wedge\hat{U}_{k_{2}}\vert\leq C\epsilon_{0}^{\frac{1}{5}}(P_{s}-1)\right\} e^{-\frac{\beta}{2}\vert\xi^{\epsilon}(0)\vert^{2}}\leq E^{5}\epsilon_{0}^{\frac{2}{5}}2^{s+1}
\end{equation}
This implies that 
\[
\sum_{j_{1}=0}^{1}...\sum_{j_{s}=0}^{2^{s}-1}\left(\alpha\right)^{P_{s}-1}\sum_{\Gamma(P_{s}-1)}\sum_{\sigma_{P_{s}-1}}\boldsymbol{\sigma}_{P_{s}-1}\intop d\Lambda_{P_{s-1}}\chi_{\lambda,E,\delta,q}\sum_{k_{1}=1}^{P_{s}-1}\sum_{r_{2}=k_{1}+1}^{P_{s}-1}\chi\left\{ \vert\hat{V_{k_{1}}^{'}}\wedge\hat{U}_{k_{2}}\vert\leq C\epsilon_{0}^{\frac{1}{5}}(P_{s}-1)\right\} 
\]
\begin{equation}
\prod_{k=1}^{P_{s}-1}B\,f_{0,P_{s}}^{N}(\zeta(0))\leq\epsilon^{-\lambda}\Vert g_{0}\Vert_{\infty}\left(C\alpha t\right)^{2^{s+1}}2^{(s+2)(s+1)}E^{5}\epsilon_{0}^{\frac{2}{5}}
\end{equation}
Finally from these estimates we arrive to
\[
\sum_{j_{1}=0}^{1}...\sum_{j_{s}=0}^{2^{s}-1}\left(\alpha\right)^{P_{s}-1}\sum_{\Gamma(P_{s}-1)}\sum_{\sigma_{P_{s}-1}}\boldsymbol{\sigma}_{P_{s}-1}\intop d\Lambda_{P_{s-1}}\chi_{\lambda,E,\delta,q}\chi\left\{ N^{P_{s}}(\epsilon_{0})\right\} \prod_{k=1}^{P_{s}-1}B\,f_{0,P_{s}}^{N}(\zeta(0))\leq
\]
\begin{equation}
\epsilon^{-\lambda}\Vert g_{0}\Vert_{\infty}\left(C\alpha t\right)^{2^{s+1}}E^{8}2^{(s+4)(s+1)}\left(\epsilon_{0}^{\frac{2}{5}}+\frac{\epsilon_{0}^{\frac{4}{5}}}{\delta^{2}}+\epsilon_{0}^{\frac{4}{5}}\right)
\end{equation}
\end{proof}
\newpage{}

\bibliographystyle{AIMS}
\bibliography{bib}

\end{document}